\newcommand{\overbar}[1]{\mkern 1.25mu\overline{\mkern-1.25mu#1\mkern-0.25mu}\mkern 0.25mu}
\newcommand{\mywidehat}[1]{\mkern 1.5mu\widehat{\mkern-1.5mu#1\mkern-0mu}\mkern 0mu}
\DeclareMathOperator*{\argmin}{arg\,min}
\newcommand{\openone}{\mathds{1}}
\newcommand{\blambda}{\boldsymbol{\lambda}}
\newcommand{\domin}{D_{0,\min}}
\newcommand{\doprod}{D_{0,{\rm prod}}}
\newcommand{\diprod}{D_{1,{\rm prod}}}
\newcommand{\Xtypinf}{\mathcal{X}^{\infty}_{\rm typ}}
\newcommand{\sign}{{\rm sign}}
\newcommand{\Rbar}{\overbar{R}}
\newcommand{\Dbar}{\overbar{D}}
\newcommand{\Xvhat}{\mywidehat{\boldsymbol{X}}}
\newcommand{\xvhat}{\mywidehat{\boldsymbol{x}}}
\newcommand{\Xhat}{\mywidehat{X}}
\newcommand{\xhat}{\mywidehat{x}}
\newcommand{\Xchat}{\mywidehat{\mathcal{X}}}
\newcommand{\Dibar}{\overbar{D}_1}
\newcommand{\Dobar}{\overbar{D}_0}
\newcommand{\Pctilde}{\widetilde{\mathcal{P}}}
\newcommand{\Phat}{\mywidehat{P}}
\newcommand{\Ptilde}{\widetilde{P}}
\newcommand{\uv}{\boldsymbol{u}}
\newcommand{\Uv}{\boldsymbol{U}}
\newcommand{\xv}{\boldsymbol{x}}
\newcommand{\Xv}{\boldsymbol{X}}
\newcommand{\Ac}{\mathcal{A}}
\newcommand{\Cc}{\mathcal{C}}
\newcommand{\Ec}{\mathcal{E}}
\newcommand{\Ic}{\mathcal{I}}
\newcommand{\Pc}{\mathcal{P}}
\newcommand{\Sc}{\mathcal{S}}
\newcommand{\Tc}{\mathcal{T}}
\newcommand{\Uc}{\mathcal{U}}
\newcommand{\Xc}{\mathcal{X}}
\newcommand{\EE}{\mathbb{E}}
\newcommand{\PP}{\mathbb{P}}
\newcommand{\RR}{\mathbb{R}}
\newcommand{\bone}{\boldsymbol{1}}
\theoremstyle{plain}
\newtheorem{thm}{\protect\theoremname}
\theoremstyle{plain}
\newtheorem{lem}{\protect\lemmaname}
\theoremstyle{plain}
\theoremstyle{plain}
\newtheorem{defn}{\protect\definitionname}
\theoremstyle{plain}
\theoremstyle{plain}
\newtheorem{cor}{\protect\corollaryname}
\providecommand{\lemmaname}{Lemma} 
\providecommand{\theoremname}{Theorem}
\providecommand{\propositionname}{Proposition}
\providecommand{\definitionname}{Definition}
\providecommand{\assumpname}{Assumption}
\providecommand{\corollaryname}{Corollary}
\begin{document}
\author{Millen Kanabar and Jonathan Scarlett \thanks{M.~Kanabar is with the Department of Electrical Engineering, IIT Bombay.  J.~Scarlett is with the  Department of Computer Science, Department of Mathematics, and Institute of Data Science, National University of Singapore (NUS).  e-mail: \url{millen@ee.iitb.ac.in}; \url{scarlett@comp.nus.edu.sg}}
\thanks{This work was presented in part at the IEEE International Symposium on Information Theory (ISIT), 2022.}}

\long\def\symbolfootnote[#1]#2{\begingroup\def\thefootnote{\fnsymbol{footnote}}\footnote[#1]{#2}\endgroup}

\title{Mismatched Rate-Distortion Theory: Ensembles, Bounds, and General Alphabets}
\maketitle

\begin{abstract}
    In this paper, we consider the mismatched rate-distortion problem, in which the encoding is done using a codebook, and the encoder chooses the minimum-distortion codeword according to a mismatched distortion function that differs from the true one.  For the case of discrete memoryless sources, we establish achievable rate-distortion bounds using multi-user coding techniques, namely, superposition coding and expurgated parallel coding.  We study examples where these attain the matched rate-distortion trade-off but a standard ensemble with independent codewords fails to do so.  On the other hand, in contrast with the channel coding counterpart, we show that there are cases where structured random codebooks can perform worse than their unstructured counterparts.  In addition, in view of the difficulties in adapting the existing and above-mentioned results to general alphabets, we consider a simpler i.i.d.~random coding ensemble, and establish its achievable rate-distortion bounds for general alphabets.
\end{abstract}

\vspace*{-1ex}
\section{Introduction} \label{sec:INTRO}
\vspace*{-0.5ex}

Rate-distortion theory is one of the most fundamental topics in information theory, and since its introduction \cite{Sha59a}, an extensive and diverse range of rate-distortion settings have been considered.  In this paper, we are interested in a setting studied by Lapidoth \cite{Lap97}, in which the encoding is {\em mismatched}, i.e., it is done with respect to an ``incorrect'' distortion measure $d_0$ that may be different from the true measure $d_1$.  
As discussed in \cite{Lap97,Sca20}, this problem can arise when the true distortion measure is not known when designing the encoder/decoder, or when the optimal encoder is ruled out due to implementation constraints (e.g., finite-precision arithmetic).  More broadly, the problem is fundamental in nature, and may provide useful tools for other problems involving compression and/or imperfect encoding rules, e.g. estimation from compressed data \cite{Zha88,Kip21}.

The mismatched rate-distortion problem serves as a natural counterpart to mismatched decoding in channel coding \cite{Sca20}.  However, different aspects turn out to be much easier in one problem than the other.  For instance, an important conjecture on the tightness of a multi-letter achievable rate \cite{Csi95,Som15b} remains open in channel coding, but its analog was resolved in the rate-distortion setting \cite{Lap97}.  On the other hand, general alphabets have been studied in detail in channel coding \cite{Gan00,Sca16a}, but appear to be more challenging in the rate-distortion problem.

In this paper, we study the following lesser-understood aspects of the mismatched rate-distortion problem: (i) multi-user coding techniques in the case of discrete memoryless sources, and (ii) regular random coding techniques in the case of general alphabets.  These aspects have been studied extensively in the mismatched channel coding problem \cite{Lap96,Sca16a,Som15}, and their counterparts in mismatched rate-distortion theory were included in a list of open problems in \cite[Ch.~10]{Sca20}.  Regarding multi-user coding techniques, a simple example consisting of parallel sources in \cite{Lap97} demonstrated that such techniques can be beneficial, but general results have remained unexplored.  Regarding general alphabets, we are not aware of any existing results for the setup we consider, but this direction has been explored in related rate-distortion settings with only a single distortion function (e.g., see \cite{Dem02,Kon06}).

Briefly, our main contributions are as follows:
\begin{itemize}
    \item For discrete memoryless sources, we derive two new achievability results on the mismatched distortion-rate function based on superposition coding and expurgated parallel coding.
    \item We demonstrate that, in contrast to the channel coding problem, introducing structure into the codebook (i.e., statistical dependencies between the random codewords) can be detrimental to the rate-distortion trade-off.
    \item On the positive side, we explore examples where multi-user techniques provide a strict improvement compared to the ensemble with independent codewords, one of which is conceptually very different from the known parallel source example.
    \item In view of difficulties in generalizing the existing and above-mentioned achievability results to general alphabets, we study the simpler i.i.d.~random coding ensemble, and establish an achievable rate that extends to general alphabets.
\end{itemize}

\subsection{Problem Setup} 

Consider a discrete memoryless source (DMS) where each symbol is drawn independently from a source distribution $ \Pi_X(x) $ over a finite alphabet $\Xc$. The distribution of the resulting $n$-letter sequence $\Xv$ is then given by $\Pi_X^n(\xv) = \prod_{i=1}^{n}\Pi_X(x_i)$. The encoder maps $\xv$ to an index $m \in \{1, 2, \dots, M\}$ for some integer $M$, and the rate of the encoder is given by $R = \frac{1}{n}\log M$.

The choice of $m$ is dictated by a codebook $ \mathcal{C} = \{\xvhat^{(1)}, \dots, \xvhat^{(M)}\} $, where each codeword lies in $\Xchat^n$ for some reconstruction alphabet $\Xchat$. We assume that the encoder chooses, for some non-negative encoding metric $d_0: \Xc\times \Xchat \rightarrow \RR_{\ge 0}$,
\begin{align}
    m = \argmin_{1 \leq j \leq M} d^n_0(\xv, \xvhat^{(j)}), \quad d^n_0(\xv, \xvhat) = \sum_{i=1}^{n}d_0(x_i, \xhat_i). \label{eq:enc_rule}
\end{align}
The distortion incurred is given by $ d_1^n(\xv, \xvhat) = \sum_{i=1}^n d_1(x_i, \xhat_i) $ for some true non-negative distortion measure $d_1 : \Xc\times \Xchat \rightarrow \RR_{\ge 0}$.  If $d_0 = d_1$ then this is a standard rate-distortion problem, but otherwise, the encoding is said to be {\em mismatched}.

The choice of tie-breaking strategy in \eqref{eq:enc_rule} can sometimes be significant (see Appendix \ref{app:example_tie} for an example), and we will consider the following possible choices:
\begin{itemize}
    \item Following \cite{Lap97}, the {\em pessimistic strategy} assumes that any ties in \eqref{eq:enc_rule} always lead to the worst-case value of $d_1^n$ with respect to those tied codewords.
    \item Alternatively, ties may be broken uniformly at random.
    \item For convenience, in one of our results, we will assume that ties are always broken by choosing the tied codeword with the lowest index in $\{1,\dotsc,M\}$. 
\end{itemize}
Except where stated otherwise, the achievability results that we state hold regardless of the tie-breaking strategy, i.e., even under pessimistic tie-breaking.  The focus on the pessimistic rule in \cite{Lap97} was primarily for the purpose of proving a multi-letter converse.

\begin{defn}[Mismatched distortion-rate and rate-distortion functions]
    A rate-distortion pair $ (R, D_1) $ is said to be achievable if, for any $\delta > 0$, there exists a sequence of codebooks indexed by n with $ M \leq e^{n(R+\delta)} $ codewords such that 
    \begin{align}
        \mathbb{P}\left[\frac{1}{n}d_1^n (\Xv, \Xvhat) \geq D_1 + \delta\right] \leq \delta
    \end{align}for sufficiently large $ n $, where $ \Xv \sim \Pi_X^n $ and $ \Xvhat $ is the resulting estimate via \eqref{eq:enc_rule}. The rate-distortion function $ R^*(D_1) $ is defined as the smallest rate such that $ (R, D_1) $ is achievable. The distortion-rate function $D_1^*(R)$ is defined as the smallest distortion $ D_1 $ such that $ (R, D_1) $ is achievable.
\end{defn}

\subsection{Related Work}

Our work is most closely related to that of Lapidoth \cite{Lap97}, whose main achievability result for constant-composition (or near-constant composition) coding is stated below in Lemma \ref{lem:existing}.\footnote{More precisely, our setup is a slightly simplified version of that in \cite{Lap97}; in our setup, the decoder must output the selected codeword directly, whereas in \cite{Lap97}, additional post-processing is allowed.  As noted in \cite{Sca20}, the analysis and results of \cite{Lap97} still apply with only minor changes.}  It was also shown in \cite{Lap97} that this result can be improved by applying it to the the product source ($\pi_X^2$, $\pi_X^3$, and so on) and that the resulting achievable distortion has a matching converse in the limit of an increasing product.  We will also build on two example sources studied in \cite{Lap97}.

To our knowledge, follow-up works of \cite{Lap97} in settings with multiple distortion measures have remained fairly limited, at least in aspects that are of direct relevance to our work.  Related problems were studied for joint source-channel coding and strategic communication in \cite{Aky21,Le21}.  Another interesting related setup was recently considered in \cite{Mah21}, in which the distortion measure is unknown when designing the codebook, but is revealed to the encoder at runtime.

Another prominent form of mismatch in the literature is that in which there is only a single distortion measure, but the codebook is designed for the wrong source or is required to work for multiple different sources  \cite{Sak70,Sak69,Lap97,Dem02,Kon06,Yan98,Yan99,Gra75,Gra03}.  We will make use of several useful auxiliary results from \cite{Dem02,Kon06} that fall in this category, even though we consider a different setup with two distortion measures.

When it comes to continuous sources, perhaps the most well-known result concerns the performance of Gaussian codes applied to non-Gaussian sources with the standard squared-distance distortion measure \cite{Sak69,Sak70,Lap97}.  To our knowledge, the setup we consider has not been considered previously with continuous (or more general) alphabets.

Another line of works has studied universal codes for rate-distortion theory \cite{Dem02,Kon06,Ste93,Zha96,Sak69,Sak70}, with the goal of finding codes that work well simultaneously for a large class of sources.  In this setting, the encoder and decoder may be designed specifically to achieve this goal, in contrast with our study of mismatched encoding.   

Finally, our problem setup serves as a natural counterpart to mismatched decoding in channel coding, for which multi-user ensembles were studied in \cite{Lap96,Sca16a,Som15}, and general alphabets were studied in \cite{Kap93,Gan00,Sca16a,ScarlettThesis}, among others.  See \cite{Sca20} for a survey of this topic.

\subsection{Notation}

Our analysis will be based on the method of types \cite{Csi98}.  For a given type $P_U$ (i.e., a possible empirical distribution with block length $n$), we let $\Tc^n(P_U)$ denote the set of sequences $\uv$ whose empirical distribution is $P_U$, and similarly for joint types.  For a fixed sequence $\uv$ whose empirical distribution matches the marginal of some joint type $P_{UX}$, we let $\Tc_{\uv}^n(P_{UX})$ denote the set of all $\xv$ sequences such that $(\uv,\xv)$ has joint empirical distribution $P_{UX}$.  The marginals of a joint distribution $P_{UX}$ are written as $P_U$ and $P_X$.

The set of distributions on a given alphabet is denoted by $\Pc(\cdot)$, and the set of types is denoted by $\Pc_n(\cdot)$.  Logarithms and information quantities have base $e$ except where stated otherwise.  We make use of the binary entropy function $H_2(a)$ (i.e., the entropy with probabilities $(a,1-a)$) and the ternary entropy function $H_3(a,b)$ (i.e., the entropy with probabilities $(a,b,1-a-b)$).

\vspace*{-1ex}
\section{Random Coding with Independent Codewords}
\vspace*{-0.5ex}

As a precursor to our main results, we overview the achievable distortion-rate function derived in \cite{Lap97} based on constant-composition random coding, and present a closely-related result for i.i.d.~random coding.

\subsection{Constant-Composition Ensemble} \label{sec:cc}

Consider an auxiliary distribution $Q_{\Xhat} \in \Pc(\hat{\Xc})$, and let $Q_{\Xhat, n} \in \Pc_n(\hat{\Xc})$ be a type with the same support as $Q_{\Xhat}$ such that $ \|Q_{\Xhat, n} - Q_{\Xhat}\|_\infty \leq \frac{1}{n}$.  In the constant-composition ensemble, we draw each codeword independently from
\begin{align} \label{ccCodingDist}
    P_{\Xvhat}(\xvhat) &= \frac{1}{|\Tc^n(Q_{\Xhat, n})|}\openone\{\xvhat \in\Tc^n(Q_{\Xhat, n}) \},
\end{align}where $ \Tc^n(Q_{\Xhat, n}) $ is the type class corresponding to $Q_{\Xhat, n}$.
The following lemma is stated in \cite[Lemma 4.3]{Sca20} based on the analysis of \cite{Lap97}.

\begin{lem}[Occurrences of joint types \cite{Lap97,Sca20}] \label{lem:occurence_of_types}
    Fix $\Pi_X$, $Q_{\Xhat}$, and $R$, and consider a random codebook $\Cc_n = \{\Xv^{(1)}, \dots, \Xv^{(M)}\}$ with $M = \lfloor e^{nR} \rfloor$ codewords of length $n$ drawn independently from $ P_{\Xvhat} $ in \eqref{ccCodingDist}. For any $ \delta>0 $, conditioned on any $ \Xv = \xv $ with $\xv \in \Tc(P_{X,n})$ for some type $P_{X,n}$, the following statements hold with probability approaching one as $n\rightarrow\infty$:
    \begin{enumerate}
        \item For all $\xvhat \in \Cc_n $, if $ (\xv, \xvhat) \in \Tc^n (\Ptilde_{X\Xhat}) $ for some $ \Ptilde_{X\Xhat} \in \Pc_n(\Xc \times \hat{\Xc}) $, then it must hold that $ \Ptilde_X = P_{X,n} $ and $ \Ptilde_{\Xhat} = Q_{\Xhat, n} $.
        \item Defining the set
        \begin{align}
            \Sc_{n,\delta}^{\supseteq} = \left\{\Ptilde_{X\Xhat} \in \Pc_n(\Xc \times \hat{\Xc}) \,:\, \Ptilde_X = P_{X,n}, \Ptilde_{\Xhat} = Q_{\Xhat, n}, I_{\Ptilde}(X, \Xhat) \leq R + \delta \right\}, \label{eq:S_super}
        \end{align}
        all joint types $\Ptilde_{X\Xhat}$ induced by the codewords (i.e., $(\xv, \xvhat) \in \Tc^n( \Ptilde_{X\Xhat})$ for some codeword $\xvhat$) must satisfy $ P_{X\Xhat}\in \Sc_n^{\supseteq}(P_{X,n}) $.
        \item For any joint type $\Ptilde_{X\Xhat}$ in the set
        \begin{align}
            \Sc_{n,\delta}^{\subseteq} = \left\{\Ptilde_{X\Xhat} \in \Pc_n(\Xc \times \hat{\Xc}) \,:\, \Ptilde_X = P_{X,n}, \Ptilde_{\Xhat} = Q_{\Xhat, n}, I_{\Ptilde}(X, \Xhat) \leq R - \delta  \right\},  \label{eq:S_sub}
        \end{align}
        there exists at least one codeword $ \xvhat \in \Cc_n $ for which $ (\xv, \xvhat) \in \Tc^n(\Ptilde_{X\Xhat})$.
    \end{enumerate}
\end{lem}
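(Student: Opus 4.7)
\medskip

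\noindent\textbf{Proof proposal.}  The plan is to view the three claims as, respectively, a deterministic fact about marginals and a pair of one-sided concentration statements on the number of codewords that populate each $V$-shell around $\xv$.  For all three, the only probabilistic tool needed is the method-of-types estimate
\begin{equation}
\PP\bigl[(\xv,\Xv^{(j)})\in\Tc^n(\Ptilde_{X\Xhat})\bigr] \;=\; \frac{|\Tc_{\xv}^n(\Ptilde_{X\Xhat})|}{|\Tc^n(Q_{\Xhat,n})|} \;\dot{=}\; e^{-nI_{\Ptilde}(X;\Xhat)},
\end{equation}
valid for any $\Ptilde_{X\Xhat}$ with $\Ptilde_X=P_{X,n}$ and $\Ptilde_{\Xhat}=Q_{\Xhat,n}$, and zero otherwise.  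The $\dot{=}$ hides only factors polynomial in $n$, which the subsequent union bounds will easily absorb because the number of joint types on $\Xc\times\Xchat$ is itself only polynomial.

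For the first claim there is nothing random to analyse: every codeword drawn from $P_{\Xvhat}$ in \eqref{ccCodingDist} lies in $\Tc^n(Q_{\Xhat,n})$ with probability one, and $\xv$ has type $P_{X,n}$ by hypothesis, so any joint type realised by $(\xv,\xvhat^{(j)})$ is forced to have these marginals.  This immediately restricts attention to joint types in the slab $\{\Ptilde_X=P_{X,n},\,\Ptilde_{\Xhat}=Q_{\Xhat,n}\}$ throughout the remainder of the argument.

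For the second claim I would apply a first-moment / union bound on the \emph{bad} event that some joint type with $I_{\Ptilde}(X;\Xhat)>R+\delta$ is realised.  The expected number of codewords falling in a given such $V$-shell is at most $M\cdot e^{-nI_{\Ptilde}(X;\Xhat)}\cdot\mathrm{poly}(n)\le \mathrm{poly}(n)\,e^{-n\delta}$, which vanishes; Markov's inequality and a further union bound over the $\mathrm{poly}(n)$ joint types in the slab then give the desired vanishing probability.

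For the third claim I would run the complementary lower-tail argument.  Fix $\Ptilde_{X\Xhat}\in\Sc_{n,\delta}^{\subseteq}$, and let $N_{\Ptilde}=\sum_{j=1}^M\openone\{(\xv,\Xv^{(j)})\in\Tc^n(\Ptilde_{X\Xhat})\}$, which is a sum of i.i.d.\ Bernoulli variables with $p\dot{=}e^{-nI_{\Ptilde}(X;\Xhat)}$, so $Mp\ge e^{n\delta/2}$ for large $n$.  The absence of codewords in this shell has probability at most $(1-p)^M\le e^{-Mp}\le \exp(-e^{n\delta/2})$, which is doubly exponentially small, and a final union bound over the polynomially many joint types in $\Sc_{n,\delta}^{\subseteq}$ gives the claim.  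The main (minor) obstacle is bookkeeping: one must verify that the $\mathrm{poly}(n)$ prefactors from type counting, the $\tfrac{1}{n}$-slack in $Q_{\Xhat,n}\approx Q_{\Xhat}$, and the approximation $M=\lfloor e^{nR}\rfloor\sim e^{nR}$ can all be absorbed into the $\delta$-cushions in \eqref{eq:S_super}--\eqref{eq:S_sub}; this is routine provided $n$ is chosen large enough relative to a fixed $\delta$.
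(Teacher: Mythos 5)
Your proposal is correct and follows essentially the same route as the paper and its cited sources: the paper quotes this lemma from \cite{Lap97,Sca20} without reproving it, but its own analogous arguments for the multi-user ensembles (Appendix \ref{app:ach_proofs}) use exactly your structure --- the deterministic marginal constraint, a first-moment/union bound over the polynomially many joint types for non-existence, and the $(1-p)^M \le e^{-Mp}$ lower-tail bound (with $Mp$ exponentially large thanks to the $\delta$-cushion) for existence. No gaps; the bookkeeping you flag (polynomial prefactors, the $1/n$ slack in $Q_{\Xhat,n}$, and the floor in $M$) is indeed routine and absorbed by the $\delta$ margins.
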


Once this lemma is established, the intuition is that the encoder chooses some codeword inducing a joint type $\Ptilde_{X\Xhat}$ that minimizes the $d_0$-distortion $\EE_{\Ptilde}[ d_0(X,\Xhat) ]$ subject to the above constraints, and these constraints asymptotically simplify to $\Ptilde_X = \Pi_X$,  $\Ptilde_{\Xhat} = Q_{\Xhat}$, and $I_{\Ptilde}(X;\Xhat) \le R$ as $n \to \infty$ and $\delta \to 0$.  Then, the $d_1$-distortion is upper bounded by the maximum $d_1$-distortion among all such minimizers, leading to the following.

\begin{lem} {\em (Achievability via the constant-composition ensemble \cite{Lap97})} \label{lem:existing}
    Under the mismatched rate-distortion setup with a discrete memoryless source $\Pi_X$ and distortion measures $(d_0,d_1)$, the following distortion is achievable at rate $R$ via constant-composition random coding with an auxiliary distribution $Q_{\Xhat} \in \Pc(\Xchat)$:
    \begin{gather}
        \Dibar(Q_{\Xhat},R) = \max_{\Ptilde_{X\Xhat} \in \Pctilde} \EE_{\Ptilde}[ d_1(X,\Xhat) ], \label{eq:D1bar}
    \end{gather}
    where
    \begin{align}\label{eq:setF}
        \Pctilde = \Bigg\{ \Ptilde_{X\Xhat} \,:\, \Ptilde_{X\Xhat} \in \argmin_{ \substack{ \Ptilde_{X\Xhat} \,:\, \Ptilde_X = \Pi_X, \Ptilde_{\Xhat} = Q_{\Xhat}, \\ I_{\Ptilde}(X;\Xhat) \le R } } \EE_{\Ptilde}[d_0(X,\Xhat)]\Bigg\}. 
    \end{align}
    Consequently, we have $D_1^*(R) \le \min_{Q_{\Xhat}} \Dibar(Q_{\Xhat},R)$.
\end{lem}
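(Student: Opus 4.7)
The plan is a direct application of Lemma \ref{lem:occurence_of_types} combined with a continuity argument on a constrained joint-type optimization. First I would exploit the method-of-types concentration of the source: since $\Xv \sim \Pi_X^n$, for any $\delta>0$ the empirical type $P_{X,n}$ of $\Xv$ satisfies $\|P_{X,n}-\Pi_X\|_\infty \le \delta$ with probability approaching one. Conditioning on this event and on a fixed realization $\Xv=\xv$ in the associated type class, I would invoke Lemma \ref{lem:occurence_of_types} to obtain, with probability tending to one over the random codebook, its three structural conclusions: (a) every codeword's joint empirical distribution with $\xv$ has marginals exactly $(P_{X,n}, Q_{\Xhat,n})$; (b) each such induced joint type lies in $\Sc_{n,\delta}^{\supseteq}$, and hence satisfies $I_{\Ptilde}(X;\Xhat) \le R+\delta$; and (c) every joint type in $\Sc_{n,\delta}^{\subseteq}$ is realized by at least one codeword.

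Next I would translate these into a bound on the incurred $d_1$-distortion. Because $d_0^n(\xv, \xvhat^{(j)}) = n\,\EE_{\Ptilde^{(j)}}[d_0(X,\Xhat)]$ depends on $\xvhat^{(j)}$ only through its joint type $\Ptilde^{(j)}$ with $\xv$, fact (c) forces the encoder's minimum $d_0$-value to be at most $D_0^\star(R-\delta,n) \defeq \min_{\Ptilde \in \Sc_{n,\delta}^{\subseteq}} \EE_{\Ptilde}[d_0(X,\Xhat)]$. Combined with (a)--(b), every codeword attaining this minimum --- including all tied ones, which covers the pessimistic tie-breaking rule --- induces a joint type $\Ptilde^*$ lying in the near-optimal set
\begin{align}
\mathcal{F}_{n,\delta} = \Big\{ \Ptilde_{X\Xhat} \in \Pc_n(\Xc\times\Xchat) \,:\, \Ptilde_X = P_{X,n},\ \Ptilde_{\Xhat} = Q_{\Xhat,n},\ I_{\Ptilde}(X;\Xhat) \le R+\delta,\ \EE_{\Ptilde}[d_0(X,\Xhat)] \le D_0^\star(R-\delta, n) \Big\}.
\end{align}
Hence, on this high-probability event, $\tfrac{1}{n} d_1^n(\Xv,\Xvhat) \le \max_{\Ptilde \in \mathcal{F}_{n,\delta}} \EE_{\Ptilde}[d_1(X,\Xhat)]$, which is the uniform bound that drives the achievability statement via the definition of $(R,D_1)$-achievability.

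Finally I would let $n\to\infty$ and then $\delta\to 0$ and argue that this upper bound converges to $\Dibar(Q_{\Xhat}, R)$ defined in \eqref{eq:D1bar}. This requires a continuity step: $(P_{X,n}, Q_{\Xhat,n}) \to (\Pi_X, Q_{\Xhat})$, the mutual-information slack $\delta$ vanishes, and $D_0^\star(R-\delta,n)$ converges to the limiting value $D_0^\star(R) \defeq \min\{\EE_{\Ptilde}[d_0(X,\Xhat)] : \Ptilde_X = \Pi_X,\ \Ptilde_{\Xhat} = Q_{\Xhat},\ I_{\Ptilde}(X;\Xhat) \le R\}$. Since $\Ptilde \mapsto \EE_{\Ptilde}[d_1(X,\Xhat)]$ is linear and continuous on the compact simplex $\Pc(\Xc\times\Xchat)$, the maximum over $\mathcal{F}_{n,\delta}$ converges to the maximum over $\Pctilde$, yielding the claim. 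The main obstacle, in my view, is precisely this continuity step --- in particular verifying that $R \mapsto D_0^\star(R)$ is continuous, so that the $\delta$-relaxation of the mutual-information constraint does not leave a persistent slack in the $d_0$-threshold. This follows from convexity of the underlying minimization (both the mutual-information sublevel set and the linear $d_0$-functional are convex in $\Ptilde_{X\Xhat}$), but must be checked carefully to rule out boundary pathologies and to confirm that the near-minimizer maximum of $\EE_{\Ptilde}[d_1]$ exhibits no jump relative to the exact-minimizer maximum defining $\Dibar(Q_{\Xhat}, R)$.
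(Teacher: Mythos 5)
Your proposal is correct and follows essentially the same route as the paper's Appendix~\ref{app:technical}: invoke Lemma~\ref{lem:occurence_of_types}, bound the encoder's $d_0$-distortion by the minimum over the inner set $\Sc_{n,\delta}^{\subseteq}$, confine the selected codeword's joint type (under any tie-breaking) to a near-optimal outer set, and pass to the limit $n\to\infty$, $\delta\to 0$. The only place you are lighter than the paper is the final step, where convergence of $\max_{\Ptilde \in \mathcal{F}_{n,\delta}} \EE_{\Ptilde}[d_1(X,\Xhat)]$ to $\max_{\Ptilde \in \Pctilde} \EE_{\Ptilde}[d_1(X,\Xhat)]$ needs not just linearity of the objective but a set-convergence argument (the paper uses $\cap_{\delta>0}\Pctilde^{\supseteq}_{\delta}=\Pctilde$ plus compactness and convergent subsequences of near-minimizers, together with the footnoted non-emptiness caveat for $R>0$), which you correctly flag as the delicate point.
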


Formalizing the above intuition requires a rather technical continuity argument to transfer from joint types to general joint distributions and replace $\delta$ by zero.  Since these details were omitted in \cite{Sca20} and are slightly different from \cite{Lap97} due to the different ensemble used (constant-composition vs.~nearly-constant composition), we provide an overview in Appendix \ref{app:technical}, along with a discussion of how to adapt it to the other ensembles we consider.

\subsection{i.i.d.~Ensemble} \label{sec:iid}

While Lapidoth's results are based on the constant-composition ensemble (or very closely related ensembles), an even more common approach in general information theory problems is the {\em i.i.d.~random coding ensemble}, in which each codeword is independently drawn from
\begin{equation}
    P_{\Xvhat}(\xvhat) = \prod_{i=1}^n Q_{\Xhat}(\xhat_i)
\end{equation}
for some auxiliary distribution $Q_{\Xhat}$.  Note that unlike constant-composition coding, this ensemble naturally applies to continuous (or more general) alphabets.  Accordingly, it will be our main focus when considering general alphabets in Section \ref{sec:dual}.

For discrete memoryless sources, the following result follows via an almost identical argument to that of Lemma \ref{lem:existing} (see \cite{Lap97} and \cite[Ch.~4]{Sca20}), so the details are omitted. 

\begin{lem} {\em (Achievability via the i.i.d.~ensemble)} \label{lem:iid}
    Under the mismatched rate-distortion setup with a discrete memoryless source $\Pi_X$ and distortion measures $(d_0,d_1)$, the following distortion is achievable at rate $R$ via i.i.d.~random coding with an auxiliary distribution $Q_{\Xhat} \in \Pc(\Xchat)$:
    \begin{gather}
        \Dibar^{\rm iid}(Q_{\Xhat},R) = \max_{\Ptilde_{X\Xhat} \in \Pctilde^{\rm iid}} \EE_{\Ptilde}[ d_1(X,\Xhat) ], \label{eq:D1bar_IID}
    \end{gather}
    where
    \begin{align}\label{eq:setF_IID}
        \Pctilde^{\rm iid} = \Bigg\{ \Ptilde_{X\Xhat} \,:\, \Ptilde_{X\Xhat} \in \argmin_{ \substack{ \Ptilde_{X\Xhat} \,:\, \Ptilde_X = \Pi_X, \\ D(\Ptilde_{X\Xhat} \| \Pi_X \times Q_{\Xhat}) \le R } } \EE_{\Ptilde}[d_0(X,\Xhat)]\Bigg\}.
    \end{align}
    Consequently, we have $D_1^*(R) \le \min_{Q_{\Xhat}} \Dibar^{\rm iid}(Q_{\Xhat},R)$.
\end{lem}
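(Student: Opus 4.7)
The plan is to mirror the derivation of Lemma \ref{lem:existing}, replacing only those ingredients that depend on the constant-composition structure. The key step is an i.i.d.~analog of Lemma \ref{lem:occurence_of_types}. Fix a source sequence $\xv$ with empirical type $P_{X,n}$ close to $\Pi_X$, and draw $M = \lfloor e^{nR} \rfloor$ codewords $\Xvhat^{(1)},\dots,\Xvhat^{(M)}$ independently from $P_{\Xvhat}(\xvhat) = \prod_{i=1}^{n} Q_{\Xhat}(\xhat_i)$. For any joint type $\Ptilde_{X\Xhat}$ with $\Ptilde_X = P_{X,n}$, a standard method-of-types computation gives
\begin{align}
\PP\bigl[(\xv,\Xvhat^{(j)}) \in \Tc_{\xv}^n(\Ptilde_{X\Xhat})\bigr] \;\doteq\; \exp\!\bigl(-n D(\Ptilde_{X\Xhat} \,\|\, P_{X,n}\times Q_{\Xhat})\bigr),
\end{align}
so that the divergence $D(\Ptilde_{X\Xhat} \,\|\, P_{X,n}\times Q_{\Xhat})$ plays the role that the mutual information $I_{\Ptilde}(X;\Xhat)$ played in the constant-composition argument. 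Crucially, the constraint $\Ptilde_{\Xhat} = Q_{\Xhat,n}$ is dropped here, since the codeword composition is no longer fixed.

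From this exponent, the usual first- and second-moment arguments over the polynomially-many joint types yield the i.i.d.~counterpart of the three conclusions in Lemma \ref{lem:occurence_of_types}: with probability tending to one as $n \to \infty$, every joint type $\Ptilde_{X\Xhat}$ induced by some codeword satisfies $D(\Ptilde_{X\Xhat} \,\|\, P_{X,n}\times Q_{\Xhat}) \le R+\delta$, while every joint type with $D(\Ptilde_{X\Xhat} \,\|\, P_{X,n}\times Q_{\Xhat}) \le R-\delta$ is induced by at least one codeword. Since the encoder in \eqref{eq:enc_rule} picks a codeword minimizing $\EE_{\Ptilde}[d_0(X,\Xhat)]$ over the set of induced joint types, the selected codeword's joint type asymptotically minimizes $\EE_{\Ptilde}[d_0(X,\Xhat)]$ over $\{\Ptilde_{X\Xhat} : \Ptilde_X = \Pi_X,\ D(\Ptilde_{X\Xhat} \,\|\, \Pi_X\times Q_{\Xhat}) \le R\}$, and the incurred $d_1$-distortion is upper bounded by $\EE_{\Ptilde}[d_1(X,\Xhat)]$ for the worst-case such minimizer (accounting for pessimistic tie-breaking), giving precisely $\Dibar^{\rm iid}(Q_{\Xhat},R)$.

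The main obstacle is the same continuity issue flagged in Appendix \ref{app:technical}: transferring the bound from block-length-$n$ joint types to general joint distributions, and replacing the strict inequalities $R \pm \delta$ by $R$. The set $\Pctilde^{\rm iid}$ is defined as an $\argmin$, which can be discontinuous in its parameters; one handles this by working with the slightly enlarged feasible region using $R+\delta$, taking $n \to \infty$ followed by $\delta \to 0$, and invoking joint lower semicontinuity of $D(\cdot\,\|\,\cdot)$ together with compactness of $\Pc(\Xc\times\Xchat)$ to pass to the limit. The argument is essentially that of Appendix \ref{app:technical} with $I_{\Ptilde}(X;\Xhat)$ replaced by $D(\Ptilde_{X\Xhat} \,\|\, \Pi_X\times Q_{\Xhat})$ throughout and the marginal constraint $\Ptilde_{\Xhat} = Q_{\Xhat}$ deleted, so no genuinely new technique is needed.
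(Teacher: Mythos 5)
Your proposal is correct and matches the route the paper itself intends: the paper omits the details precisely because Lemma \ref{lem:iid} follows from the same argument as Lemma \ref{lem:existing}, with the occurrence-of-types step redone for i.i.d.\ codewords so that $D(\Ptilde_{X\Xhat}\|P_{X,n}\times Q_{\Xhat})$ replaces $I_{\Ptilde}(X;\Xhat)$ and the $\Ptilde_{\Xhat}=Q_{\Xhat}$ constraint is dropped, followed by the continuity argument of Appendix \ref{app:technical}. Your exponent computation and the handling of tie-breaking and the $\delta\to 0$ limit are exactly the modifications the paper describes, so no gap remains.
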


Intuitively, the constraint $\Ptilde_{\Xhat} = Q_{\Xhat}$ is absent in \eqref{eq:setF_IID} because the codewords no longer necessarily have an empirical distribution equal to (or close to) $Q_{\Xhat}$.  In addition, the mutual information $I_{\Ptilde}(X;\Xhat)$ is replaced by $D(\Ptilde_{X\Xhat} \| \Pi_X \times Q_{\Xhat})$, since the two are no longer equivalent when $\Ptilde_{\Xhat}$ may differ from $Q_{\Xhat}$.


As we will see via a more general discussion in Section \ref{sec:structure}, neither the i.i.d.~nor constant-composition ensembles are guaranteed to outperform one another in general.

\vspace*{-1ex}
\section{Multi-User Random Coding Techniques} \label{sec:multi_user}
\vspace*{-0.5ex}

In this section, we present two achievability results based on multi-user coding techniques that were previously used for mismatched channel coding \cite{Lap96,Sca16a,Som15}, namely, superposition coding and parallel coding.  We also briefly mention that connections and differences between these ensembles (and other closely related ensembles) have been explored in detail in matched multi-user problems, e.g., see \cite{Cho08,Nai09,Wan13} and the references therein.

\subsection{Superposition Coding} \label{sec:SupCoding}
 
\subsubsection{Codebook construction}
The ensemble is defined in terms of an auxiliary alphabet $ \Uc $, an auxiliary codeword distribution $ P_{\Uv} \in \Pc(\Uc^n) $ and a conditional codeword distribution $ \Pc_{\Xvhat|\Uv} \in \Pc(\hat{\Xc}^n | \Uc^n) $. We fix $ (R_0, R_1) $ and generate the codebook in the following manner:
\begin{itemize}
    \item First, $ M_0 = \lfloor e^{nR_0} \rfloor $ codewords are drawn uniformly and independently from the $n$-letter distribution $ P_{\Uv} $, generating $ \{\Uv^{(i)}\}_{1\leq i \leq M_0} $.
    \item For each $ \uv^{(i)} \in \{\Uv^{(i)}\}_{1\leq i \leq M_0} $, $ M_1 = \lfloor e^{nR_1}\rfloor $ codewords are drawn uniformly and conditionally independently from the $n$-letter conditional distribution $P_{\Xvhat|\Uv}(\cdot | \uv^{(i)})$, generating $\{\Xvhat^{(i, j)}\}_{\begin{subarray}{1}
        1\leq i \leq M_0\\
        1\leq j \leq M_1
        \end{subarray}}$.
\end{itemize} 
For a given (joint) input distribution $Q_{U\Xhat} \in \Pc(\Uc \times \Xchat)$, let $ Q_{U\Xhat, n} \in \Pc_n(\Uc \times \hat{\Xc}) $ be a joint type such that $\|Q_{U\Xhat, n} - Q_{U\Xhat}\|_{\infty} \leq \frac{1}{n}$. We then consider the following (constant-composition) choices of $ P_{\Uv} $ and $ P_{\Xvhat|\Uv} $:
\begin{align}
    P_{\Uv}(\uv) &= \frac{1}{|\Tc^n(Q_{U, n})|} \openone\{\uv\in \Tc^n(Q_{U, n})\} \label{eq:supCodingU}\\
    P_{\Xvhat|\Uv}(\xvhat|\uv) &= \frac{1}{|\Tc_{\uv}^n(Q_{U\Xhat, n})|}\openone\{(\uv, \xvhat)\in \Tc^n(Q_{U\Xhat, n})\}. \label{eq:supcodingXhat}
\end{align}

The encoder maps a sequence $ \xv \in \Xc^n $ to $ (m_0, m_1) $ such that $ (m_0, m_1) = \argmin_{(i, j)} d_0^n(\xv,, \Xvhat^{(i, j)}) $, and the corresponding reconstruction sequence $ \xvhat $ is given by $ \xvhat = \Xvhat^{(m_0, m_1)} $. The rate is given by 
\begin{align}
    R = \frac{1}{n}\log(M_0 M_1) = R_0 + R_1.
\end{align}

Under this ensemble, we have the following achievability result, which we compare to Lemma \ref{lem:existing} in Sections \ref{sec:structure} and \ref{sec:examples}.

\begin{thm} \label{thm:supcoding_achievability}
{\em (Achievability for Superposition Coding)}
Under the superposition coding ensemble described above with input distribution $Q_{U\Xhat} \in \mathcal{P}(U \times \Xchat)$ and rates $R_0$ and $R_1$, the following distortion is achievable at rate $R = R_0 + R_1$:
\begin{align}
    \Dbar_1(Q_{U\Xhat}, R_0, R_1) &= \max_{\Ptilde_{XU\Xhat} \in \Pctilde} \mathbb{E}_{\Ptilde}[d_1(X, \Xhat)], \label{eq:d1_mm}
\end{align}
where
\begin{align} 
    \Pctilde &= \left\{\Ptilde_{XU\Xhat} : \Ptilde_{XU\Xhat} \in \argmin_{\substack{\Ptilde_{XU\Xhat} \,:\, \Ptilde_X = \Pi_X,\\
            \Ptilde_{U\Xhat} = Q_{U\Xhat},\\
            I_{\Ptilde}(X;U) \leq R_0, \\
            I_{\Ptilde}(X; U, \Xhat) \leq R_0 + R_1}} \mathbb{E}_{\Ptilde}[d_0(X, \Xhat)]\right\}. \label{eq:Ptilde_SC}
\end{align}
Consequently, we have $D_1^*(R) \leq \min_{\substack{(Q_{U\Xhat},R_1, R_2)\,:\, \\ R_1 + R_2 = R}} \Dbar_1(Q_{U\Xhat}, R_0, R_1) $.
\end{thm}

We provide a short proof outline here, and defer the full details to Appendix \ref{app:ach_proofs}.  
The analysis is based on the method of types.  The marginal constraints in \eqref{eq:Ptilde_SC} essentially follow immediately by construction.  The main effort is in showing that joint types with $I_{\Ptilde}(X;U) \leq R_0 - \delta$ and $I_{\Ptilde}(X; U, \Xhat) \leq R_0 + R_1 - \delta$ occur, but those with $I_{\Ptilde}(X;U) \ge R_0 + \delta$ or $I_{\Ptilde}(X; U, \Xhat) \ge R_0 + R_1 + \delta$ do not, in analogy with Lemma \ref{lem:occurence_of_types}.
    For both the existence and non-existence claims, we analyze the probability $P_{\rm existence} = \PP\big[ \bigcup_{i,j} \big\{ (\xv,\Uv^{(i)},\Xvhat^{(i,j)}) \in \Tc^n(\Ptilde_{XU\Xhat}) \big\} \big]$ for fixed $\Ptilde_{XU\Xhat}$, seeking to show that the probability rapidly approaches zero (non-existence) or one (existence).  We can upper bound $P_{\rm existence}$ using the truncated union bound $\PP\big[\bigcup_{i} A_i \big] \le \min\{1,\sum_{i} \PP[A_i]\}$ separately for the sums over $i$ and $j$, and we can get an essentially matching lower bound using the independence properties in the codebook construction.  Once that is done, the desired result is attained by first applying standard exponentially tight bounds on the relevant type class probabilities, and then using a continuity argument similar to that of Lapidoth \cite{Lap97}.

\subsection{Expurgated Parallel Coding}

We consider a codebook generated from two auxiliary codebooks of size $M_1$ and $M_2$ over auxiliary alphabets $ \hat{\Xc}_1 $ and $ \hat{\Xc}_2 $, along with a function $ \psi: \Xc_1\times\Xc_2 \rightarrow \hat{\Xc} $ mapping to the reconstruction alphabet.  Specifically, following analogous ideas used in mismatched channel coding \cite{Lap96}, the codebook is generated as follows:
\begin{itemize}
    \item For $ \nu = 1, 2 $, auxiliary constant-composition codebooks $\{ \Xvhat^{(i)}_{\nu} \}_{i=1}^{M_{\nu}}$ with input distributions $ Q_{X_{\nu}} $ and with $ M_\nu$ codewords each are generated, each independently drawn uniformly over the type class $\Tc^n(Q_{X_{\nu}, n})$ with $ \|Q_{\nu, n} - Q_\nu\|_\infty \leq \frac{1}{n} $.
    \item An initial codebook of size $M_1M_2$ is constructed as $ \Xvhat^{(i, j)} = \psi^n(\Xvhat^{(i)}_1, \Xvhat^{(j)}_2)$ for $1 \leq i \leq M_1, 1\leq j \leq M_2 $, where $\psi^n(\cdot,\cdot)$ applies the function $\psi$ entry-by-entry.
    \item The codebook is then expurgated, keeping only the codewords $\Xvhat^{(i, j)}$ whose corresponding pairs $ (\Xvhat^{(i)}_1, \Xvhat^{(j)}_2) $ have empirical distributions within $\delta > 0$ of $Q_{\Xhat_{1}}\times Q_{\Xhat_{2}}$ in $\ell_{\infty}$-norm.  Let $\Ic$ denote the set of $(i,j)$ pairs that are kept.
\end{itemize}

An input sequence $ \xv $ is mapped to $ (m_1, m_2) $ such that 
\begin{align}
    (m_1, m_2) &= \argmin_{(i,j) \in \Ic} d_0^n\big(\xv, \psi^n(\Xvhat^{(i)}_1, \Xvhat^{(j)}_2)\big).
\end{align}
The reconstruction sequence is given as $ \Xvhat = \psi^n(\Xvhat_1^{(m_1)}, \Xvhat_2^{(m_2)}) $, and the rate is given by $R = \frac{1}{n} \log |\Ic|$.  While this technically means that the rate is random, it will turn out that $R$ approaches $R_1 + R_2$ with probability approaching one, where $R_{\nu} = \frac{1}{n} \log M_{\nu}$.

Under this ensemble, we have the following achievability result, which we compare to Lemma \ref{lem:existing} in Sections \ref{sec:structure} and \ref{sec:examples}.

\begin{thm} \label{thm:parcoding_achievability}
    {\em (Achievability for Expurgated Parallel Coding)}
    For given auxiliary distributions $ Q_{X_{1}}, Q_{X_{2}} $, function $ \psi(\cdotp) $ and rates $ R_1, R_2 $, using expurgated parallel coding, the following distortion is achievable at rate $R = R_1 + R_2$:
    \begin{align}
    \Dbar_1(Q_{X_1}, Q_{X_2}, \psi, R_1, R_2) = \max_{\Ptilde_{X\Xhat_1, \Xhat_2} \in \Pctilde} \mathbb{E}_{\Ptilde}[d_1(X, \psi(\Xhat_1, \Xhat_2))],
    \end{align}
    where
    \begin{align} 
    \Pctilde &= \left\{\Ptilde_{X\Xhat_1\Xhat_2} : \Ptilde_{X\Xhat_1\Xhat_2} \in \argmin_{\substack{\Ptilde_{X\Xhat_1\Xhat_2}: \Ptilde_X = \Pi_X,\\
            \Ptilde_{\Xhat_1\Xhat_2} = Q_{\Xhat_1}\times Q_{\Xhat_2}, \\
            I_{\Ptilde}(X;\Xhat_1) \leq R_1, \\
            I_{\Ptilde}(X;\Xhat_2) \leq R_2, \\
            I_{\Ptilde}(X; \Xhat_1, \Xhat_2) \leq R_1 + R_2}} \mathbb{E}_{\Ptilde}[d_0(X, \psi(\Xhat_1, \Xhat_2))]\right\}. \label{eq:Ptilde_EX}
    \end{align}
    Consequently, we have $D_1^*(R) \leq \min_{\substack{(Q_{\Xhat_1}, Q_{\Xhat_2}, \psi, R_1, R_2)\,:\, \\ R_1 + R_2 = R}} \Dbar_1(Q_{\Xhat_1}, Q_{\Xhat_2}, \psi, R_1, R_2)$.
\end{thm}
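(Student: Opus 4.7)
The plan is to adapt the proof strategy behind Lemma \ref{lem:existing} to the expurgated parallel coding ensemble, replacing Lemma \ref{lem:occurence_of_types} by an analog that accounts for the product structure, the expurgation step, and the three rate parameters. Once such an analog is in hand, the distortion bound follows by exactly the ``minimum $d_0$-distortion, maximum $d_1$-distortion over near-minimizers'' reasoning used in \cite{Lap97}, together with the continuity argument sketched in Appendix \ref{app:technical}.

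First, I would verify that the expurgation has negligible effect on the effective rate. Since $\Xvhat_1^{(i)}$ and $\Xvhat_2^{(j)}$ are drawn independently and uniformly over their respective type classes, standard type-class size estimates show that a single pair has joint empirical distribution within $\delta$ of $Q_{\Xhat_1,n} \times Q_{\Xhat_2,n}$ with probability $1 - e^{-\Omega(n)}$. Hence $\EE[|\Ic|] = M_1 M_2 (1 - o(1))$, and a Chebyshev-type concentration argument (handling the mild correlation between pairs sharing an index) yields $\frac{1}{n}\log |\Ic| \to R_1 + R_2$ in probability.

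Next I would prove the parallel-coding analog of Lemma \ref{lem:occurence_of_types}: conditional on $\xv \in \Tc^n(P_{X,n})$, with probability tending to one as $n \to \infty$, \emph{(i)} every surviving pair $(i,j) \in \Ic$ induces a joint type $\Ptilde_{X\Xhat_1\Xhat_2}$ with $\Ptilde_X = P_{X,n}$ and $\Ptilde_{\Xhat_1\Xhat_2}$ within $\delta$ of $Q_{\Xhat_1,n} \times Q_{\Xhat_2,n}$, automatically from the expurgation rule; \emph{(ii)} every such induced type also satisfies $I_{\Ptilde}(X;\Xhat_1) \le R_1 + \delta$, $I_{\Ptilde}(X;\Xhat_2) \le R_2 + \delta$, and $I_{\Ptilde}(X;\Xhat_1,\Xhat_2) \le R_1 + R_2 + \delta$, each by a first-moment argument on the expected count of $i$'s, $j$'s, and pairs respectively, where the joint count exploits $I_{\Ptilde}(\Xhat_1;\Xhat_2) \approx 0$ (from expurgation) to give mean $\approx e^{n(R_1 + R_2 - I_{\Ptilde}(X;\Xhat_1,\Xhat_2))}$; and \emph{(iii)} every joint type $\Ptilde$ with $\Ptilde_X = P_{X,n}$, $\Ptilde_{\Xhat_1\Xhat_2} = Q_{\Xhat_1,n} \times Q_{\Xhat_2,n}$, and all three mutual-information inequalities strict with slack $\delta$ is realized by at least one surviving pair.

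The main obstacle will be the existence claim \emph{(iii)}, because pairs $(i,j)$ and $(i',j')$ are correlated when they share an index, so the first-moment computation alone does not certify existence with high probability. I would handle this with a second-moment argument, bounding $\mathrm{Var}(N_{\Ptilde}) / (\EE[N_{\Ptilde}])^2 \to 0$ by decomposing the covariance according to whether the two pairs share both indices (trivial), exactly one index (giving a term whose suppression needs $R_1 > I_{\Ptilde}(X;\Xhat_1)$ or $R_2 > I_{\Ptilde}(X;\Xhat_2)$), or no index (the independent case, handled by the joint mutual-information constraint). The three rate constraints in the theorem are precisely what is needed to render each covariance contribution subdominant, mirroring the pattern familiar from the channel-coding counterpart \cite{Lap96,Sca16a}. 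Once (ii) and (iii) are established, the distortion analysis transfers verbatim from Lemma \ref{lem:existing}: the encoder's normalized $d_0$-distortion converges to $\min_{\Ptilde} \EE_{\Ptilde}[d_0(X,\psi(\Xhat_1,\Xhat_2))]$ over the admissible type set, the resulting normalized $d_1$-distortion is at most the maximum of $\EE_{\Ptilde}[d_1(X,\psi(\Xhat_1,\Xhat_2))]$ over near-minimizers, and sending $\delta \to 0$ via the continuity arguments in Appendix \ref{app:technical} delivers $\Dbar_1(Q_{X_1}, Q_{X_2}, \psi, R_1, R_2)$.
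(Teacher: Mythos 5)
Your proposal is correct and mirrors the paper's proof: the same type-occurrence analog of Lemma \ref{lem:occurence_of_types} for the parallel ensemble, the same first-moment/union-bound argument for non-existence, the same handling of expurgation (it only imposes the $\ell_\infty$ closeness of $\Ptilde_{\Xhat_1\Xhat_2}$ to $Q_{\Xhat_1}\times Q_{\Xhat_2}$ and costs asymptotically nothing in rate), and the same concluding continuity step from Appendix \ref{app:technical}. The only difference is the tool used for the existence claim: the paper lower-bounds the union probability via de Caen's inequality with exactly your shared-index decomposition (the quantities $\Psi_{00},\Psi_{01},\Psi_{10},\Psi_{11}$), whereas you use the second-moment (Chebyshev) method; the two are essentially interchangeable here, yield the same three rate conditions with slack $\delta$, and both decay exponentially, so the union bound over the polynomially many joint types goes through.
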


    Once again, we only provide a short proof outline here, and defer the full details to Appendix \ref{app:ach_proofs}. 
    The main steps are analogous to those of Theorem \ref{thm:supcoding_achievability}, but in this case, the independence structure is different, since the codewords $\Xvhat^{(i, j)}$ and $\Xvhat^{(i', j')}$ are dependent whenever $i = i'$ or $j = j'$.  We can still use the truncated union bound to upper bound the relevant existence probabilities, whereas the lower bound instead uses de Caen's inequality \cite{Dec97}.  This is done in the same way as \cite{Sca16a}, and while the precise bound used in \cite{Sca16a} is slightly too loose due to a factor of $\frac{1}{4}$, we can easily sharpen their analysis by avoiding a step of the form $a+b+c+d \le 4 \max\{a,b,c,d\}$.  By doing so and again using standard properties of types followed by a continuity argument, we obtain the desired result.

\subsection{Structured vs.~Unstructured Random Codebooks} \label{sec:structure}

In the channel coding problem with mismatched decoding, it is known that (analogs of) the above ensembles never provide an achievable rate worse than that of independent constant-composition codewords with the same marginal codeword distribution.  It is therefore natural to ask whether the same holds for mismatched rate-distortion theory.

In fact, it is straightforward to see that for fixed choices of parameters $(Q_{U\Xhat},R_0,R_1)$ or $(Q_{\Xhat_1},Q_{\Xhat_2},R_1,R_2,\psi)$, the answer is negative.  To see this, fix the encoding function $d_0$, and consider any setup in which \eqref{eq:setF} has a unique minimizer $\Ptilde_{X\Xhat}^*$, and \eqref{eq:Ptilde_SC} or \eqref{eq:Ptilde_EX} has a unique minimizer with a joint marginal $\Ptilde_{X\Xhat}^{**}$ that differs from $\Ptilde_{X\Xhat}^*$.  The existence of examples satisfying these requirements can be inferred from \cite{Lap97} or from Section \ref{sec:examples}  to follow.

Then, consider the following possibilities for the true distortion function:
\begin{align}
    d_1^*(x,\xhat) = \log \Ptilde_{X\Xhat}^*(x,\xhat), \quad d_1^{**}(x,\xhat) = \log\Ptilde_{X\Xhat}^{**}(x,\xhat).
\end{align}
The non-negativity of KL divergence implies that any maximization problem of the form $\max_{P} \EE_{P}\big[\log Q(X)]$ is uniquely maximized by $P=Q$, and hence, we conclude that $\Ptilde_{X\Xhat}^*$ yields strictly higher distortion under $d_1^*$, whereas $\Ptilde_{X\Xhat}^{**}$ yields strictly higher distortion under $d_1^{**}$.  Hence, there is no general inequality between the achievable mismatched distortion functions.

On the other hand, when the parameters $(Q_{U\Xhat},R_0,R_1)$ or $(Q_{\Xhat_1},Q_{\Xhat_2},R_1,R_2,\psi)$ are {\em optimized}, we can easily guarantee being at least as good as the existing bound in Lemma \ref{lem:existing}.  This is because the superposition coding ensemble becomes equivalent to having independent codewords when $R_0 = 0$ (or $R_1 = 0$), and similarly for expurgated parallel coding with $R_2 = 0$ and $\psi(x_1,x_2) = x_1$ (or $R_1 = 0$ and $\psi(x_1,x_2) = x_2$).

More importantly, the benefit of multi-user coding techniques is seen by constructing examples where they provably meet the matched rate distortion function but the existing bound fails to do so.  In the next subsection, we revisit one such example from \cite{Lap97} for expurgated parallel coding, and provide a new example of this kind for superposition coding.

We note that the preceding discussion not only applies to independent codewords vs.~multi-user coding techniques, but also to i.i.d.~vs.~constant-composition random coding.  For fixed $Q_{\Xhat}$ and general choices of $d_0$ and $d_1$, it was already noted in \cite{Sca20} that neither ensemble is guaranteed to outperform the other, and the above arguments provide another way of seeing this.  On the other hand, when $d_0 = d_1$, the i.i.d.~ensemble is {\em always} as good or better, because the minimum in \eqref{eq:setF_IID} is less constrained compared to \eqref{eq:setF}, meaning there are more joint distributions to choose from for making $\EE_{\Ptilde}[d_0(X,\Xhat)]$ small.  Intuitively, the added diversity in the i.i.d.~random codebook leads to better (or equal) performance when the correct distortion measure is used for encoding, whereas when $d_0 \ne d_1$, the situation is more subtle depending on how the encoder may be led astray by the mismatch.  In Appendix \ref{app:example_tie}, we give an example with $d_0 \ne d_1$ where the constant-composition ensemble is strictly better.

\subsection{Examples} \label{sec:examples}

In this subsection, we present examples that illustrate improvements in the distortion-rate function when using the codebook generation techniques described in the previous section.

\subsubsection{Parallel Binary Source}

Consider a source with $ \Xc = \{0, 1\}^2 $ as both the source and reconstruction alphabet, with $ \Pi_X(x) = \frac{1}{4} $ for each $ x $. As in \cite{Lap97}, we consider the encoding metric 
\begin{align}
    d_0(x, \xhat)=\lambda\openone\{x_1 \neq \xhat_1\} + (1-\lambda)\openone\{x_2 \neq \xhat_2\},
\end{align}
and a distortion metric of 
\begin{align} 
    d_1(x, \xhat) = \frac{1}{2}(\openone\{x_1 \neq \xhat_1\} + \openone\{x_2 \neq \xhat_2\}).
\end{align}
For convenience we switch to working in bits (rather than nats) in this example.  It was shown in \cite{Lap97} that the achievable distortion with independent codewords (Lemma \ref{lem:existing}) simplifies to
\begin{gather}
    \Dbar_1(Q_{\Xhat}, R) = \frac{1}{2}(\tilde{\delta}_1^*+\tilde{\delta}_2^*) \label{eq:parallal_known1}\\
    (\tilde{\delta}_1^*, \tilde{\delta}_2^*) = \argmin_{\substack{(\tilde{\delta}_1, \tilde{\delta}_2) \,:\,  (1-H_2(\tilde{\delta}_1)) +(1-H_2(\tilde{\delta}_2)) \leq R}} \lambda \tilde{\delta}_1 + (1-\lambda)\tilde{\delta}_2, \label{eq:parallal_known2}
\end{gather}
whereas the matched distortion-rate function is given by $\Dbar_1^*$ such that 
\begin{equation}
    2(1-H_2(D_1^*)) = R. \label{eq:parallel_matched}
\end{equation}
The fact that parallel coding outperforms this result and attains the matched performance was already noted in \cite{Lap97} using direct arguments, namely, the fact that the two encoding metrics are equivalent when the codebook has a product structure.  Here we outline how the same (with expurgation) can be established via the analytical expression in Theorem \ref{thm:parcoding_achievability}, thus serving as a sanity check for this general result.

Consider taking $\psi(x_1,x_2) = (x_1,x_2)$ and  $Q_{\Xhat_\nu} = \big(\frac{1}{2}, \frac{1}{2}\big)$ ($\nu = 1, 2$) in Theorem \ref{thm:parcoding_achievability}. 
%
As was noted in \cite{Lap97}, the independence of $X_1$ and $X_2$ is sufficient to lower bound $I_{\Ptilde}(X; \Xhat_1, \Xhat_2)$ by $I_{\Ptilde}(X_1; \Xhat_1) + I_{\Ptilde}(X_2; \Xhat_2)$.  In addition, we trivially have $I_{\Ptilde}(X; \Xhat_{\nu}) \ge I_{\Ptilde}(X_{\nu}; \Xhat_{\nu})$ for $\nu = 1,2$.  Since both $X$ and $\Xhat$ follow the uniform distribution on $\{0,1\}^2$, the relevant marginals $\Ptilde_{\Xhat_{\nu}|X_{\nu}}$ must follow a binary symmetric channel law.  Denoting the associated crossover probabilities by $\tilde{\delta}_1$ and $\tilde{\delta}_2$, it follows that
\begin{gather}
    I_{\Ptilde}(X; \Xhat_1, \Xhat_2) \geq (1-H_2(\tilde{\delta}_1)) + (1-H_2(\tilde{\delta}_2)), \label{eq:mi_bound1} \\
    I_{\Ptilde}(X; \Xhat_{\nu}) \ge 1-H_2(\tilde{\delta}_{\nu}), \quad \nu = 1, 2. \label{eq:mi_bound2}
\end{gather}
Again following \cite{Lap97}, we observe that these bounds hold with equality when the underlying BSCs are independent, and that such independence must hold under the optimal $\Ptilde_{X\Xhat_1\Xhat_2}$.  This is because if any dependence were present, moving to the independent version would allow us to decrease $R_1$ and/or $R_2$ without increasing the distortion.

In view of \eqref{eq:mi_bound1}--\eqref{eq:mi_bound2} holding with equality, we are now left with the following analog of \eqref{eq:parallal_known1}--\eqref{eq:parallal_known2}:
\begin{gather}
    \Dbar_1(Q_{X_1}, Q_{X_2}, \psi, R_1, R_2) = \frac{1}{2}(\tilde{\delta}_1^*+\tilde{\delta}_2^*) \label{eq:parallel1}\\
    (\tilde{\delta}_1^*, \tilde{\delta}_2^*) = \argmin_{\substack{(\tilde{\delta}_1, \tilde{\delta}_2) \,:\, 1-H_2(\tilde{\delta}_1) \le R_1, 1-H_2(\tilde{\delta}_2) \le R_2}} \lambda \tilde{\delta}_1 + (1-\lambda)\tilde{\delta}_2, \label{eq:parallel2}
\end{gather}
where the constraint $(1-H_2(\tilde{\delta}_1)) +(1-H_2(\tilde{\delta}_2)) \leq R_1 + R_2$ is omitted because it is automatically guaranteed by the other two.  Finally, we set $R_1 = R_2 = \frac{R}{2}$, and observe that the minimum in \eqref{eq:parallel2} amounts to separate minimizations over the two parameters, each giving an optimal value of $\tilde{\delta}_{\nu}$ that satisfies $1 - H_2(\tilde{\delta}_{\nu}) = R/2$.  In view of \eqref{eq:parallel_matched}, the resulting $d_1$-distortion coincides with that attained in the matched case, as desired.

\subsubsection{Uniform Ternary Source} \label{sec:ex_ternary}

Consider the uniform ternary source with $\Pi_X(x) = \frac{1}{3}$ for $x \in \{0, 1, 2\} $. We consider an encoding metric $ d_0(x, \xhat) = \openone\{x \neq \xhat\}$ and a true distortion measure $ d_1(x, \xhat) = \openone\{x \neq \xhat, x \neq 2\}$. Moreover, we consider the case that both $\Pi_X$ and $Q_{\Xhat}$ are uniform:
\begin{align}
    \Pi_X = Q_X &= \begin{bmatrix}
        \frac{1}{3} & \frac{1}{3} & \frac{1}{3}
    \end{bmatrix}.
\end{align}
We first outline the results for the matched setting, and for unstructured random coding with mismatch (Lemma \ref{lem:existing}), and then turn to our own achievability result (Theorem \ref{thm:supcoding_achievability}).

{\bf Matched distortion-rate function.}
In the matched setting with distortion $d_1$ only, the achievable distortion-rate function with a fixed codeword distribution $Q_{\Xhat}$ is given as follows (e.g., see \cite{Dem02}, or Lemma \ref{lem:existing} with $d_0=d_1$): 
\begin{align}
    \Dbar_1^*(Q_{\Xhat},R) &= \min_{\substack{\Ptilde_{X\Xhat} : I_{\Ptilde}(X; \Xhat) \leq R\\\Ptilde_{X} = \Pi_X, \Ptilde_{\Xhat} = Q_{\Xhat}}} \EE_{\Ptilde}[d_1(X, \Xhat)]. \label{matchedMin}
\end{align}

{\bf Standard constant-composition coding.}
The achievable mismatched distortion-rate function achieved using standard constant-composition coding (i.e., independent codewords) was characterized by Lapidoth \cite{Lap97}, who showed that that there is a unique $\Ptilde_{X\Xhat} \in \Pctilde$ minimizing $\EE[d_0(X, \Xhat)]$ in \eqref{eq:D1bar}, with $\Ptilde_{\Xhat|X}$ being a ternary symmetric channel with some cross-symbol transition probability $\delta^*$ such that $I_{\Ptilde}(X;\Xhat) = R$.  The resulting distortion-rate function is given by $\Dbar_1^*(R) = \frac{2\delta^*}{3}$, where $\delta^*$ implicitly depends on $R$.  It is shown in \cite[Fig.~4.2]{Sca20} that the resulting rate-distortion trade-off is strictly worse than the matched case (for uniform ternary $Q_{\Xhat}$).


{\bf Superposition coding.}
We take the auxiliary distribution $ Q_{U\Xhat} $ as
\begin{align} \label{supdist}
    Q_{U\Xhat} = \begin{bmatrix}
    \frac{1}{3} & \frac{1}{3} & 0 \\
    0 & 0 & \frac{1}{3}
\end{bmatrix},
\end{align}
with the rows indexed by $U$ and columns by $\Xhat$. We then have the following result for superposition coding.

\begin{lem} \label{lem:ternarySource}
    In the ternary source example described above, for any rate $R$, there exists $0 \leq R_0 \leq R$ such that the distortion-rate function for mismatched encoding using superposition coding, with $Q_{U\Xhat}$ as described in \eqref{supdist} and $R_1 = R - R_0$, equals the distortion-rate function of the matched case (where $d_1$ is used for encoding) with $Q_{\Xhat}$ being the $\Xhat$-marginal of \eqref{supdist}.
\end{lem}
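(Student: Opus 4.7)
The plan is to choose $R_0 = I_{\Ptilde^*}(X; U)$ at the matched optimum $\Ptilde^*$, where $U := \openone\{\Xhat = 2\}$ is the auxiliary variable induced by $Q_{U\Xhat}$, and then show that every $d_0$-minimizer in the superposition optimization attains exactly the matched distortion $\Dbar_1^*(Q_{\Xhat}, R)$.

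First I would exploit the structure of $Q_{U\Xhat}$ in \eqref{supdist}: because $U = \openone\{\Xhat = 2\}$ is a deterministic function of $\Xhat$, the marginal constraint $\Ptilde_{U\Xhat} = Q_{U\Xhat}$ reduces to $\Ptilde_{\Xhat} = Q_{\Xhat}$ uniform, and $I_{\Ptilde}(X; U, \Xhat) = I_{\Ptilde}(X; \Xhat)$. Since $d_0$ and $d_1$ differ only when $X = 2$,
\begin{align}
\EE_{\Ptilde}[d_0] - \EE_{\Ptilde}[d_1] \;=\; \Pr_{\Ptilde}(X = 2, \Xhat \neq 2) \;=\; \Pr_{\Ptilde}(X = 2, U = 0).
\end{align}
Writing $p_x := \Ptilde(U = 1 \mid X = x)$, the marginal constraint $\Ptilde_U(1) = 1/3$ enforces $p_0 + p_1 + p_2 = 1$, so the right-hand side equals $(p_0 + p_1)/3$.

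Next, let $\Ptilde^*$ be a matched optimum, i.e., a minimizer of $\EE[d_1]$ over $\{\Ptilde : \Ptilde_X = \Pi_X,\ \Ptilde_{\Xhat} = Q_{\Xhat},\ I(X;\Xhat) \le R\}$, and set $R_0 := I_{\Ptilde^*}(X; U)$, $R_1 := R - R_0$. Data processing through $\Xhat \to U$ gives $R_0 \le I_{\Ptilde^*}(X;\Xhat) \le R$, so $R_0 \in [0,R]$. Using convexity of the MI constraints in $\Ptilde_{\Xhat|X}$ (with $\Pi_X$ fixed) and the $0 \leftrightarrow 1$ symmetry of the problem, I may restrict attention to symmetric $\Ptilde$ with $p_0 = p_1 =: p$ and $p_2 = 1 - 2p$. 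Under this symmetry $H(U|X) = g(p) := (2/3) H_2(p) + (1/3) H_2(2p)$, which is strictly increasing on $[0, 1/3]$ and unimodal on $[0, 1/2]$ with peak $g(1/3) = H_2(1/3)$. A parametric analysis of the matched problem (of Blahut--Arimoto form) yields $p^* \in (0, 1/3)$ for every $R \in (0, \log 3)$, so the superposition constraint $I(X; U) \le R_0$, i.e.\ $g(p) \ge g(p^*)$, forces $p \ge p^*$ (any $p > 1/3$ trivially exceeding $p^*$).

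Combining these ingredients, every superposition-feasible $\Ptilde$ satisfies
\begin{align}
\EE_{\Ptilde}[d_0] \;=\; \EE_{\Ptilde}[d_1] + \frac{2p}{3} \;\ge\; \Dbar_1^*(Q_{\Xhat}, R) + \frac{2p^*}{3},
\end{align}
via the matched lower bound $\EE_{\Ptilde}[d_1] \ge \Dbar_1^*(Q_{\Xhat}, R)$ together with $p \ge p^*$. Equality is attained by $\Ptilde^*$ itself, which is superposition-feasible by construction. Hence every $\Ptilde \in \Pctilde$ must simultaneously satisfy $\EE_{\Ptilde}[d_1] = \Dbar_1^*(Q_{\Xhat}, R)$ and $p = p^*$, so $\Dbar_1(Q_{U\Xhat}, R_0, R_1) = \max_{\Ptilde \in \Pctilde} \EE_{\Ptilde}[d_1] = \Dbar_1^*(Q_{\Xhat}, R)$, as claimed. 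The main technical work lies in the symmetry reduction, the monotonicity analysis of $g$ on its full domain, and the parametric certification of $p^* \in (0, 1/3)$; once these are in place, the two inequalities ``$\EE[d_1] \ge \Dbar_1^*$'' and ``$p \ge p^*$'' together pin the superposition $d_0$-minimizers to precisely the matched $d_1$-distortion.
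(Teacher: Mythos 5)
Your overall route is the same as the paper's: you pick $R_0 = I_{\Ptilde^*}(X;U)$ at a matched optimizer, exploit that $U=\openone\{\Xhat=2\}$ is deterministic given $\Xhat$ under \eqref{supdist} (so $I_{\Ptilde}(X;U,\Xhat)=I_{\Ptilde}(X;\Xhat)$ and the feasible set is parametrized by $\Ptilde_{X\Xhat}$ with uniform marginals), and use the fact that $I_{\Ptilde}(X;U)$ is decreasing in the mass sent from $x\in\{0,1\}$ to $\xhat=2$ below the zero-rate point, so that $I_{\Ptilde}(X;U)\le R_0$ forces this mass to be at least its matched-optimal value; this is exactly the mechanism in the paper's proof (cf.\ \eqref{uxInf} and the argument around \eqref{eq:ineq2}). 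Your final accounting is in fact slightly cleaner: writing $\EE_{\Ptilde}[d_0]-\EE_{\Ptilde}[d_1]=\Ptilde(X=2,\Xhat\ne 2)=\frac{p_0+p_1}{3}$ and combining with the trivial bound $\EE_{\Ptilde}[d_1]\ge \Dbar_1^*(Q_{\Xhat},R)$ pins every $d_0$-minimizer to the matched $d_1$-distortion through an equality-case argument, without needing uniqueness of the matched minimizer, which the paper establishes separately and uses for its strict inequality \eqref{eq:ineq1}.

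There is, however, one genuine gap: the claim $p^*\in(0,1/3)$ (equivalently $p_{02,m}^*\le\frac{1}{9}$) is asserted via ``a parametric analysis of Blahut--Arimoto form'' but never proved, and the whole argument leans on it. If $p^*$ could exceed $1/3$, your constraint $g(p)\ge g(p^*)$, with $g(p)=\frac23 H_2(p)+\frac13 H_2(2p)$ peaking at $p=1/3$, would no longer force $p\ge p^*$ (values $p<p^*$ on the increasing branch could still satisfy it), and the chain $\EE_{\Ptilde}[d_0]\ge\Dbar_1^*+\frac{2p^*}{3}$ collapses. This is precisely the content of the paper's Lemma \ref{p02IneqLemma}, whose proof requires real work: ruling out the boundary constraints \eqref{supconstr1}--\eqref{supconstr2}, deriving the KKT relation \eqref{rel_p01_p02}, showing the resulting distortion is increasing in $p_{02}^*$, and comparing against the zero-rate point $p_{01}=p_{02}=\frac19$. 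A second, minor looseness: your step ``restrict to symmetric $\Ptilde$ with $p_0=p_1$'' must be justified in a way that also covers asymmetric $d_0$-minimizers, since the achievable distortion is a maximum over \emph{all} minimizers. This is easily patched---either observe that the $0\leftrightarrow 1$ relabeling of both alphabets preserves feasibility and both $\EE[d_0]$ and $\EE[d_1]$, so every minimizer has the same $d_1$-value as its symmetrization, or avoid symmetry entirely by noting $H_2(p_0)+H_2(p_1)\le 2H_2\big(\tfrac{p_0+p_1}{2}\big)$ implies $I_{\Ptilde}(X;U)\le R_0$ already forces $p_0+p_1\ge 2p^*$ for arbitrary feasible $\Ptilde$---but it does need saying; the paper does the corresponding work carefully in Lemmas \ref{symmLemma} and \ref{symmLemma2}.
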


The proof is given in Appendix \ref{app:ternary}, and we provide only a brief outline here. We first characterize the mutual information terms associated with the joint distribution optimizing the matched distortion-rate function (see \eqref{matchedMin}).  This joint distribution has an associated value of $I(X;U)$, and we take $R_0$ to be this value.  We then study analogous mutual information terms associated with the mismatched distortion-rate function \eqref{eq:d1_mm}, and combine them to deduce that the matched and mismatched solutions are in fact identical (particularly with the help of the constraint $I_{\Ptilde}(X;U) \leq R_0$ in \eqref{eq:Ptilde_SC}).

By Lemma \ref{lem:ternarySource}, at least for this fixed auxiliary distribution $Q_{\Xhat}$, we have established an example where standard constant-composition coding falls short of the matched performance, but superposition coding does not.  This example is significantly different from the parallel source example, in that the result does not appear to follow from any direct arguments, and $R_0$ needs to be carefully chosen to obtain the desired result.

\vspace*{-1ex}
\section{General Alphabets} \label{sec:dual}
\vspace*{-0.5ex}

In this section, we turn to general alphabets (in particular, possibly countably infinite or continuous), which have been extensively studied in the mismatched channel coding problem \cite{Kap93,Mer95,Gan00,Sca16a,ScarlettThesis}, but to our knowledge, not for the mismatched rate-distortion problem that we consider.  On the other hand, this direction has been considered for a distinct notion of mismatched rate-distortion in \cite{Dem02,Kon06}; specifically, while these works consider $d_0 = d_1$, they use ``mismatched codebooks'' in the sense of having a suboptimal choice of $Q_{\Xhat}$.  Our analysis will directly build on the results from these works.

For the purpose of capturing the examples of practical interest, it is useful to think of $\Pi_X$ and $Q_{\Xhat}$ as being mass functions for finite or countably infinite alphabets, and density functions for continuous alphabets.  Formally, however, since we build on the tools in \cite{Kon06}, the alphabets may be as general as those therein, where the only restriction is that the source and reconstruction alphabets are Polish spaces such that all singletons are measurable.  

\subsection{Alternative Expression for the i.i.d.~Ensemble}

Extending the constant-composition result from Lemma \ref{lem:existing} to general alphabets appears to be difficult (and Theorems \ref{thm:supcoding_achievability} and \ref{thm:parcoding_achievability} even more so).  One might consider using cost-constrained random coding \cite{Gan00,Sca16a}, but the difficulty is that compared to channel coding, the techniques needed for the achievability result and ensemble tightness are reversed.  The ensemble tightness of cost-constrained random coding for channel coding currently remains open, and accordingly, establishing an achievability result in the setting that we consider is similarly challenging.

In view of this difficulty, in the remainder of this section, we consider the simpler {\em i.i.d.~random coding ensemble} (see Section \ref{sec:iid}), for which we present a useful reformulation of Lemma \ref{lem:iid}.  Here we make use of the matched $d_0$-distortion function with fixed $Q_{\Xhat}$, defined as
\begin{align}
    \Dobar^{\rm iid}(Q_{\Xhat},R) &= \min_{\substack{\Ptilde_{X\Xhat} \,:\, \Ptilde_{X} = \Pi_X \\ D(\Ptilde_{X\Xhat} \| \Pi_X \times Q_{\Xhat}) \leq R}} \EE_{\Ptilde}[d_0(X, \Xhat)]. \label{matchedMin0}
\end{align}
Moreover, as in \cite{Dem02,Kon06}, we restrict our attention to rates (and their associated distortion levels) in a restricted range.  Specifically, we define the following extreme values of the $d_0$-distortion:\footnote{In the finite-alphabet case, we can replace ${\rm ess inf}_{\Xhat \sim Q}$ by a minimum over the support of $Q_{\Xhat}$.}
\begin{gather}
    \domin = \EE_{\Pi}[ {\rm ess inf}_{\Xhat \sim Q} d_0(X,\Xhat) ], \quad \doprod = \EE_{\Pi \times Q}[d_0(X,\Xhat)]. \label{eq:D0_limits}
\end{gather}
Intuitively, $\domin$ is the average $d_0$-distortion that would be attained by an infinite-length codebook, and $\doprod$ is the average distortion that would be attained by a codebook with just a single random codeword.  Accordingly, we consider rates in the interval $R \in (0,R_{\max})$, where
\begin{equation}
    R_{\max} = \lim_{D_0 \to \domin} \Rbar^{\rm iid}(Q_{\Xhat}, D_0) \label{eq:R_max}
\end{equation}
with the limit taken from above, and where $\Rbar^{\rm iid}$ is the matched rate-distortion function defined as \cite{Dem02}
\begin{align}
    \Rbar^{\rm iid}(Q_{\Xhat}, D_0) 
    &= \min_{\substack{\Ptilde_{X\Xhat} \,:\, \Ptilde_X = \Pi_X, \\ \EE_{\Ptilde}[d_0(X,\Xhat)] \le D_0}} D(\Ptilde_{X\Xhat} \| \Pi_X \times Q_{\Xhat}). \label{eq:R_primal} 
\end{align}
Working in the range $R \in (0,R_{\max})$ (corresponding to $d_0$-distortion in $(\domin,\doprod)$) turns out to give a unique minimizer in \eqref{eq:setF_IID}, whereas for higher rates this is not necessarily the case (see Appendix \ref{app:example_tie} for an example).

Although Lemma \ref{lem:iid} is stated for discrete memoryless sources, the expressions in \eqref{eq:D1bar_IID}--\eqref{eq:setF_IID} can still be taken as valid definitions for general memoryless sources.  We proceed by stating an equivalent form for $\Dibar^{\rm iid}$ \eqref{eq:D1bar_IID}, and then turn to proving a counterpart to Lemma \ref{lem:iid} for general alphabets.

\begin{lem} \label{lem:dual_IID}
    Consider the mismatched rate-distortion problem with a given source $\Pi_X$ and auxiliary distribution $Q_{\Xhat}$, and suppose that $\doprod < \infty$ and $R \in (0,R_{\max})$ with $R_{\max}$ given in \eqref{eq:R_max}.  Then, the quantity $\Dibar^{\rm iid}(Q_{\Xhat},R)$ in \eqref{eq:D1bar_IID} can equivalently be expressed as 
    \begin{align}
        \Dibar^{\rm iid}(Q_{\Xhat},R) = \EE_{\Ptilde^*}[d_1(X, \Xhat)], \label{eq:D_iid_dual}
    \end{align}
    where $\Ptilde^*_{X\Xhat}$ is defined according to the Radon-Nikodym derivative\footnote{For mass functions or density functions, we can simplify this expression to $\Ptilde^*_{X\Xhat}(x,\xhat) =  \frac{ \Pi_X(x) Q_{\Xhat}(\xhat) e^{\lambda^* d_0(x,\xhat)} }{ \EE_{Q}[e^{\lambda^* d_0(x,\Xhat)}] }$.}
    \begin{equation}
        \frac{{\rm d}\Ptilde^*_{X\Xhat}}{{\rm d} (\Pi_X \times Q_{\Xhat})}(x,\xhat) =  \frac{ e^{\lambda^* d_0(x,\xhat)} }{ \EE_{Q}[e^{\lambda^* d_0(x,\Xhat)}] }, \label{eq:Pstar_IID}
    \end{equation}
    and where $\lambda^* \le 0$ is the unique value such that the function $\Lambda(\lambda) = \EE_{\Pi}\big[ \log \EE_{Q}[e^{\lambda d_0(X,\Xhat)}] \big]$ satisfies $\Lambda'(\lambda) = D_0$, with $D_0 = \Dobar^{\rm iid}(Q_{\Xhat},R)$.
\end{lem}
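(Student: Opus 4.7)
The strategy is to show that in the regime $R \in (0, R_{\max})$, the set $\Pctilde^{\rm iid}$ in \eqref{eq:setF_IID} consists of the single distribution $\Ptilde^*_{X\Xhat}$ defined in \eqref{eq:Pstar_IID}, so that the outer maximization in \eqref{eq:D1bar_IID} reduces to evaluating $\EE_{\Ptilde^*}[d_1(X,\Xhat)]$. Accordingly, the bulk of the proof consists of analyzing the inner minimization $\Dobar^{\rm iid}(Q_{\Xhat}, R)$ in \eqref{matchedMin0} via Lagrangian duality, identifying its minimizer, and proving uniqueness.

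Introducing a multiplier $\mu \ge 0$ for the KL constraint and using the marginal constraint $\Ptilde_X = \Pi_X$, the Lagrangian decomposes into per-$x$ subproblems
\begin{equation*}
    \min_{P_{\Xhat|X=x}} \Big\{ \EE_{P}[d_0(x,\Xhat)] + \mu D(P \| Q_{\Xhat}) \Big\},
\end{equation*}
which is the classical Gibbs variational problem whose unique minimizer is the exponentially tilted measure with Radon-Nikodym derivative $\frac{{\rm d} P^*}{{\rm d}Q_{\Xhat}}(\xhat) = \frac{e^{-d_0(x,\xhat)/\mu}}{\EE_Q[e^{-d_0(x,\Xhat)/\mu}]}$. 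Setting $\lambda^* = -1/\mu$ recovers the form in \eqref{eq:Pstar_IID}, and a direct calculation gives $\EE_{\Ptilde^*}[d_0] = \Lambda'(\lambda^*)$ as well as $D(\Ptilde^* \| \Pi_X \times Q_{\Xhat}) = \lambda^* \Lambda'(\lambda^*) - \Lambda(\lambda^*)$.

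Next I would verify existence and uniqueness of $\lambda^*$. Convexity of $\Lambda$ makes $\Lambda'$ nondecreasing, with boundary values $\Lambda'(0) = \doprod$ and $\lim_{\lambda \to -\infty} \Lambda'(\lambda) = \domin$, the latter following by dominated convergence since the Gibbs tilt concentrates on the essential infimum of $d_0(x,\cdot)$ as $\lambda \to -\infty$. For $R \in (0, R_{\max})$ we have $D_0 := \Dobar^{\rm iid}(Q_{\Xhat}, R) \in (\domin, \doprod)$, so there is a unique $\lambda^* \in (-\infty, 0)$ with $\Lambda'(\lambda^*) = D_0$, and the associated $\Ptilde^*$ satisfies $D(\Ptilde^* \| \Pi_X \times Q_{\Xhat}) = R$ by the convex-conjugate identity connecting $\Rbar^{\rm iid}(Q_{\Xhat}, D_0)$ in \eqref{eq:R_primal} to $\Lambda$. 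Feasibility is thus verified, and the Lagrangian inequality $\EE_{\Ptilde}[d_0] \ge \EE_{\Ptilde^*}[d_0] + \mu\big(D(\Ptilde^* \| \Pi_X \times Q_{\Xhat}) - D(\Ptilde \| \Pi_X \times Q_{\Xhat})\big)$ confirms $\Ptilde^*$ as a minimizer over all $\Ptilde$ satisfying the KL constraint, with uniqueness following from strict convexity of $D(\cdot \| \Pi_X \times Q_{\Xhat})$.

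The main obstacle is handling the general-alphabet technicalities: Expressing the Gibbs measure via a Radon-Nikodym derivative rather than a mass or density function, verifying that $\EE_{Q}[e^{\lambda^* d_0(x,\Xhat)}]$ is positive and finite on a set of $\Pi_X$-full measure (which follows since $\lambda^* \le 0$ and $d_0 \ge 0$ place the integrand in $(0,1]$), and justifying the interchange of differentiation and expectation when computing $\Lambda'(\lambda^*)$. These measure-theoretic details closely follow the Polish-space framework of \cite{Dem02,Kon06}, so no essentially new machinery is required beyond what is already developed there; the argument is then completed by observing that $\Pctilde^{\rm iid} = \{\Ptilde^*\}$ immediately yields the claim via \eqref{eq:D1bar_IID}.
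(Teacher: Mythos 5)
Your proposal is correct in substance, but it takes a genuinely different route from the paper. The paper does no fresh variational analysis: it imports from \cite{Dem02,Kon06} that the supremum in \eqref{eq:R_dual} is attained at $\lambda^*$ and that $\Ptilde^*_{X\Xhat}$ is the \emph{unique} minimizer of the fixed-distortion problem \eqref{eq:R_primal} with optimal value $R$ (the constraint being active for $R\in(0,R_{\max})$), and then transfers this to the fixed-rate minimization in \eqref{eq:setF_IID} by a short convexity argument: a hypothetical feasible $\Ptilde'$ with $d_0$-distortion below $D_0$, mixed slightly with $\Pi_X\times Q_{\Xhat}$, would beat the optimal value $R$ of \eqref{eq:R_primal} (convexity of KL), and any alternative minimizer must have KL exactly $R$ and distortion exactly $D_0$, hence be optimal in \eqref{eq:R_primal} and coincide with $\Ptilde^*$. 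You instead attack the fixed-rate problem directly via Lagrangian duality and the per-symbol Gibbs variational principle, which is more self-contained but re-derives material available in \cite{Dem02,Kon06} and re-incurs the general-alphabet technicalities the paper avoids by citation. Two steps you should make explicit: (i) concluding $D(\Ptilde^*\|\Pi_X\times Q_{\Xhat})=R$ needs, beyond Lemma \ref{lem:primal_dual}, the inverse relation $\Rbar^{\rm iid}(Q_{\Xhat},\Dobar^{\rm iid}(Q_{\Xhat},R))=R$, i.e.\ activeness of the constraints on $(0,R_{\max})$; and (ii) strict convexity of the KL divergence alone does not give uniqueness for a linear objective over a convex set --- you must first use your Lagrangian inequality with $\mu>0$ to force every minimizer to meet the KL constraint with equality, hence to minimize the strictly convex Lagrangian, and only then conclude it equals $\Ptilde^*$. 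With these spelled out, your argument is complete.
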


The proof is given in Appendix \ref{app:continuous}, and makes use of various findings from \cite{Dem02}, in which the matched setting is considered (i.e., $d_0 = d_1$, but we will use $d_0$ when discussing this setting), and the joint distribution $\Ptilde^*_{X\Xhat}$ is utilized.  The authors of \cite{Dem02} show that the rate-distortion function is given in \eqref{eq:R_primal}, and provide the following equivalent ``dual'' expression:
\begin{equation}
     \Rbar^{\rm iid}(Q_{\Xhat}, D_0) = \sup_{\lambda \le 0} \{ \lambda D_0 - \Lambda(\lambda) \}. \label{eq:R_dual}
\end{equation}
Thus, in our setting, the definition of $\lambda^*$ in Lemma \ref{lem:dual_IID} follows naturally from solving the maximum \eqref{eq:R_dual}; the existence and uniqueness of $\lambda^*$ (as per its definition) when $D_0 \in (\domin,\doprod)$ is established in \cite{Dem02}.  Once $\lambda^*$ is specified, the joint distribution in \eqref{eq:Pstar_IID} is also specified.  With the preceding findings and definitions in place, the proof of Lemma \ref{lem:dual_IID} amounts to showing that $\Ptilde^*_{X\Xhat}$ is the unique minimizer in \eqref{eq:setF_IID}.


\subsection{Achievability Result for General Alphabets}

Analogous to $\doprod$ in \eqref{eq:D0_limits}, we define
\begin{gather}
    \diprod = \EE_{\Pi \times Q}[d_1(X,\Xhat)], \label{eq:D1_limits}
\end{gather}
and we impose the assumption that $\diprod < \infty$, along with $\doprod < \infty$ (as hinted above, this is also assumed for the single distortion measure considered in \cite{Dem02,Kon06}).

%
%

Our main result for general alphabets is stated as follows.

\begin{thm} \label{thm:continuous}
    {\em (Achievable Distortion with General Alphabets)}
    Consider the general-alphabet mismatched rate-distortion setup with fixed $\Pi_X$ and $Q_{\Xhat}$ satisfying $\doprod < \infty$ and $\diprod < \infty$, let $R \in (0,R_{\max})$ with $R_{\max}$ defined in \eqref{eq:R_max}, and suppose that ties are broken in \eqref{eq:enc_rule} by choosing the tied codeword with the smallest index.  Then, the distortion $\Dibar^{\rm iid}(Q_{\Xhat},R)$ is achievable under i.i.d.~random coding, where $\Dibar^{\rm iid}$ can be equivalently be expressed in the form \eqref{eq:D1bar_IID}--\eqref{eq:setF_IID} or \eqref{eq:D_iid_dual}--\eqref{eq:Pstar_IID}.
\end{thm}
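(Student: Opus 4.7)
The plan is to reduce Theorem \ref{thm:continuous} to a pair of large-deviations estimates for the codeword distribution $Q_{\Xhat}^n$ conditioned on $\Xv$, paralleling the approach used in \cite{Dem02,Kon06} for the matched rate-distortion function, augmented by a tilted-change-of-measure argument that handles the second distortion $d_1$. The equivalence of the two representations of $\Dibar^{\rm iid}$ is already Lemma \ref{lem:dual_IID}, so it is enough to show that $\tfrac{1}{n}d_1^n(\Xv,\Xvhat)$ concentrates around $\EE_{\Ptilde^*}[d_1(X,\Xhat)]$ for the selected codeword $\Xvhat$. Write $D_0 = \Dobar^{\rm iid}(Q_{\Xhat},R)$. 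The assumption $R \in (0,R_{\max})$, together with the continuity and strict monotonicity of $\Rbar^{\rm iid}(Q_{\Xhat},\cdot)$ on $(\domin,\doprod)$ established in \cite{Dem02}, places $D_0 \in (\domin,\doprod)$, gives $\Rbar^{\rm iid}(Q_{\Xhat},D_0)=R$, and makes $\Ptilde^*_{X\Xhat}$ in \eqref{eq:Pstar_IID} the unique minimizer of $D(\cdot\,\|\,\Pi_X\times Q_{\Xhat})$ over distributions with marginal $\Pi_X$ and $\EE[d_0]\le D_0$.

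First, I would show that the minimum $d_0$-distortion among the codewords is close to $D_0$. For any $\eta>0$, strict monotonicity gives $\Rbar^{\rm iid}(Q_{\Xhat},D_0+\eta) < R$, so the direct part of the $d_0$-rate-distortion theorem for i.i.d.\ random coding on general alphabets \cite[Sec.~III]{Dem02} yields, with probability approaching one, the existence of a codeword in the level set
\[
\Ac_\eta(\Xv) = \bigl\{\xvhat : \tfrac{1}{n}d_0^n(\Xv,\xvhat)\le D_0+\eta\bigr\}.
\]
In particular, the selected codeword lies in $\Ac_\eta(\Xv)$. Next, fix $\epsilon>0$ and define the bad set
\[
\Bc_{\epsilon,\eta}(\xv) = \bigl\{\xvhat\in \Ac_\eta(\xv) : \bigl|\tfrac{1}{n}d_1^n(\xv,\xvhat) - \EE_{\Ptilde^*}[d_1(X,\Xhat)]\bigr|\ge \epsilon\bigr\}.
\]
Using the joint Cram\'er transform with tilting parameters $(\lambda_0,\lambda_1)$ applied to $(d_0,d_1)$ and the associated cumulant $\Lambda(\lambda_0,\lambda_1) = \EE_\Pi\bigl[\log \EE_Q [e^{\lambda_0 d_0(X,\Xhat)+\lambda_1 d_1(X,\Xhat)}]\bigr]$, for typical $\Xv$ under $\Pi_X^n$ the exponent of $Q_{\Xhat}^n(\Bc_{\epsilon,\eta}(\Xv))$ converges to
\[
\Rbar(\eta,\epsilon) = \inf\Bigl\{ D(\Ptilde \,\|\, \Pi_X \times Q_{\Xhat}) : \Ptilde_X = \Pi_X,\ \EE_{\Ptilde}[d_0(X,\Xhat)]\le D_0+\eta,\ \bigl|\EE_{\Ptilde}[d_1(X,\Xhat)] - \EE_{\Ptilde^*}[d_1(X,\Xhat)]\bigr|\ge \epsilon\Bigr\}.
\]
Because $\Ptilde^*_{X\Xhat}$ is the unique minimizer of the unconstrained ($d_1$-free) version of this problem at $\eta=0$, the extra $\epsilon$-separation constraint strictly inflates the infimum at $\eta=0$, and lower-semicontinuity of relative entropy together with the compactness/regularity machinery of \cite{Dem02,Kon06} gives $\Rbar(\eta,\epsilon) \to \Rbar(0,\epsilon) > R$ as $\eta\to 0$. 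Hence for $\epsilon>0$ and all sufficiently small $\eta$, $\Rbar(\eta,\epsilon) = R + \gamma$ for some $\gamma>0$, and a union bound over the $M = \lfloor e^{nR}\rfloor$ i.i.d.\ codewords gives
\[
\PP\bigl[\exists\,j : \Xvhat^{(j)} \in \Bc_{\epsilon,\eta}(\Xv)\bigr] \;\le\; M\cdot e^{-n(R+\gamma)} \to 0.
\]
Combining with the first step, the selected codeword lies in $\Ac_\eta(\Xv)\setminus \Bc_{\epsilon,\eta}(\Xv)$ with probability approaching one, yielding $\tfrac{1}{n}d_1^n(\Xv,\Xvhat) \le \EE_{\Ptilde^*}[d_1(X,\Xhat)] + \epsilon$. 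Since $\epsilon$ was arbitrary, this proves achievability of $\Dibar^{\rm iid}(Q_{\Xhat},R) = \EE_{\Ptilde^*}[d_1(X,\Xhat)]$.

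The main obstacle is the strict gap estimate $\Rbar(0,\epsilon) > R$ and its stability under small relaxations $\eta>0$ on general alphabets. On finite alphabets the method of types plus strict convexity of relative entropy make this immediate from uniqueness of $\Ptilde^*_{X\Xhat}$; on general alphabets one must instead invoke the dual formula \eqref{eq:R_dual}, the explicit Gibbs form \eqref{eq:Pstar_IID} of the tilted measure, and the lower-semicontinuity/compactness results of \cite{Dem02,Kon06}. A secondary technicality is the passage from the conditional statement for typical $\xv$ to the unconditional statement for $\Xv\sim\Pi_X^n$, handled by standard truncation in the cumulant-generating function as in \cite{Dem02}. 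The lowest-index tie-breaking rule plays only a measurability role: once no codeword in $\Ac_\eta(\Xv)$ falls in $\Bc_{\epsilon,\eta}(\Xv)$, every tied minimizer has $|d_1^n/n - \EE_{\Ptilde^*}[d_1(X,\Xhat)]| < \epsilon$, so the choice among them is immaterial.
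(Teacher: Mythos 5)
Your route is genuinely different from the paper's. The paper does not argue via a union bound over the $e^{nR}$ codewords; instead it extends Kontoyiannis' conditional-limit result to the pair $(d_0,d_1)$ (its Lemma \ref{lem:empirical2}: the first codeword in an infinite codebook with $d_0^n \le nD_0$ has $\frac{1}{n}d_1^n \to \EE_{\Ptilde^*}[d_1(X,\Xhat)]$), and then bridges the fixed-distortion and fixed-rate settings by taking a countable dense set of rational thresholds $D_0$ (Corollary \ref{cor:unif_d0}), identifying the selected codeword as the \emph{first} codeword below a rational threshold just above the realized minimum $d_0$-distortion --- this is exactly where the lowest-index tie-breaking hypothesis is used --- and finally invoking continuity of $\lambda^*$ and of the tilted measure in $D_0$ (via $\Lambda''>0$). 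Your scheme (existence of a codeword with $d_0$-distortion $\le D_0+\eta$, plus a per-codeword bound showing no codeword is simultaneously $d_0$-good and $d_1$-deviant, then a union bound) bypasses all of that machinery, and would in fact make the tie-breaking assumption unnecessary, since \emph{every} near-minimizer would have $d_1$-distortion within $\epsilon$ of the target. That is an attractive structural simplification.

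The gap is in the step you yourself flag as the ``main obstacle,'' and it is more than a technicality on general alphabets. You assert that $Q_{\Xhat}^n(\Bc_{\epsilon,\eta}(\xv))$ has exponent $\Rbar(\eta,\epsilon)$ given in Sanov (relative-entropy) form, that $\Rbar(0,\epsilon)>R$, and that this persists for small $\eta$, all by appeal to \cite{Dem02,Kon06}; but those works only treat the single functional $d_0$, and none of these three claims is citable from them. Concretely: (i) the part of $\Bc_{\epsilon,\eta}$ you actually need for achievability is the \emph{upper} deviation $\frac{1}{n}d_1^n \ge \EE_{\Ptilde^*}[d_1]+\epsilon$, and a Chernoff/G\"artner--Ellis upper bound for it requires a tilting parameter $\lambda_1>0$, whose moment generating function $\EE_Q[e^{\lambda_1 d_1(x,\Xhat)}]$ is not guaranteed finite under the stated assumption $\diprod<\infty$ (a first moment only); the paper's own two-dimensional cumulant argument deliberately works with $\blambda\in(-\infty,0]^2$ inside a \emph{conditional} limit theorem rather than an unconditional Chernoff bound, precisely to live with first-moment assumptions. (ii) Identifying the Legendre-transform exponent with your entropy-form $\Rbar(\eta,\epsilon)$, and proving the strict gap $\Rbar(0,\epsilon)>R$ together with $\lim_{\eta\to0}\Rbar(\eta,\epsilon)>R$, needs weak compactness of divergence sublevel sets combined with closedness of the constraint sets; the set $\{\EE_{\Ptilde}[d_1]\ge \EE_{\Ptilde^*}[d_1]+\epsilon\}$ is not weakly closed for a general nonnegative measurable $d_1$, so the infimum need not be attained and the strict inequality does not follow from uniqueness of $\Ptilde^*_{X\Xhat}$ alone. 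In short, your architecture is sound and arguably cleaner, but the central estimate is exactly the two-dimensional large-deviations content that the paper isolates into Lemma \ref{lem:empirical2} and proves by adapting \cite{Kon06}; as written, your proposal replaces that lemma with an unproved (and, in the upper-tail direction, potentially false without extra moment conditions) Chernoff-type bound.
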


The proof is given in Appendix \ref{app:continuous}, and utilizes an almost-sure convergence result from \cite{Kon06} regarding the empirical distribution of $(\Xv,\Xvhat)$ when $\Xvhat$ is the first codeword in an infinite-length codebook to attain a pre-specified distortion level $D_0$.  As noted in \cite{Kon06}, their result generalizes and strengthens an earlier ``favorite type'' theorem for the finite-alphabet setting \cite{Zam01}.  

In our analysis, we slightly adapt the result of \cite{Kon06} to characterize the empirical $d_1$-distortion induced by $(\Xv,\Xvhat)$ (whereas \cite{Kon06} focuses on probabilities of events).  However, even after doing so, a notable difficulty is that in our setting $D_0$ is not pre-specified, as we consider the minimum-distortion encoder with a fixed rate $R$.  To address this, we consider an intersection of events (each corresponding the adapted result of \cite{Kon06} with a different $D_0$ value) over a rational subset of $D_0$ values, noting that a countable intersection of almost-sure events also holds almost surely.  From there, the proof consists of arguing that (i) with a fixed rate $R$, the $d_0$-distortion of the selected codeword approaches $\Dobar^{\rm iid}(Q_{\Xhat},R)$ in \eqref{matchedMin0}, and (ii) via the preceding almost-sure event and a somewhat technical continuity argument, the corresponding $d_1$-distortion converges to $\Dibar^{\rm iid}(Q_{\Xhat},R)$.


\subsection{Example: Gaussian Source with a Mismatched Distortion Function} \label{sec:cont_example}

As an example of Theorem \ref{thm:continuous}, we consider the case that $X \sim N(0,\sigma^2)$, $\Xhat \sim N(0,\tau^2)$, and $d_0(x,\xhat) = (x-\xhat)^2$, for some $\sigma^2,\tau^2 > 0$.  We will consider both the matched case (corresponding to $d_1 = d_0$), and the mismatched case with  $d_1(x,\xhat) = \openone\{ \sign(x) \ne \sign(\xhat) \}$ (i.e., the true distortion only seeks that the signs be recovered correctly).  This can be viewed a toy example of a scenario in which the compression is performed assuming the goal of accurate estimation, but it is then used only for the purpose of binary classification.  

In the matched case where $d_1$ is also used for encoding (i.e., $d_0$ plays no role), the problem becomes equivalent to compressing an equiprobable binary source, and the rate-distortion function is $R^*(D_1) = \log 2 - H_2(D_1)$ for $D_1 \in \big[0,\frac{1}{2}\big]$ \cite[Sec.~10.3.1]{Cov06}. 

Moreover, when $d_0$ is used for both encoding and for measuring the final distortion (i.e., $d_1$ plays no role), the rate-distortion trade-off for i.i.d.~random coding was characterized in \cite[Ex.~1]{Dem02}: We have $\domin = 0$, $\doprod = \sigma^2 + \tau^2$, and the following for $D_0 \in (\domin,\doprod)$ when $\Rbar^{\rm iid}$ is measured in nats:
\begin{gather}
    \Rbar^{\rm iid}(Q_{\Xhat}, D_0) =  \frac{1}{2}\log\frac{v}{D_0} - \frac{(v-D_0)(v-\sigma^2)}{2v\tau^2}, \label{eq:R_Gaussian}\\
    v = \frac{1}{2}\Big( \tau^2 + \sqrt{\tau^4 + 4D_0\sigma^2} \Big).
\end{gather}
Using this finding as a starting point, in the mismatched setting with the above choices of $d_0$ and $d_1$, we can evaluate the rate-distortion curve as follows:
\begin{itemize}
    \item Sweep over a range of values $\lambda \le 0$ and consider the corresponding joint distributions in $\Ptilde^*_{X\Xhat}$ in \eqref{eq:Pstar_IID} (implicitly depending on $\lambda$).
    \item Observe that due to the assumptions $X \sim N(0,\sigma^2)$, $\Xhat \sim N(0,\tau^2)$, and $d_0(x,\xhat) = (x-\xhat)^2$, all of the terms in \eqref{eq:Pstar_IID} give an expression whose exponent is quadratic with respect to $x$ and $\xhat$.  Thus, $\Ptilde^*_{X\Xhat}$ is a bivariate Gaussian distribution whose parameters $(\mu_X^*,\mu_{\Xhat}^*,\sigma_X^*.\sigma_{\Xhat}^*,\rho^*)$ (with $\rho^* \in [-1,1]$ being the correlation coefficient) can be computed as a function of $(\sigma,\tau,\lambda^*)$.  Omitting the tedious calculations, we state the resulting parameters as follows.  The coefficients to $-x^2$, $-\xhat^2$, and $-x\xhat$ in the exponent in \eqref{eq:Pstar_IID} are easily computed to be
    \begin{equation}
        a = \frac{1}{2\sigma^2} - \lambda, ~b = \frac{1}{2\tau^2} - \lambda, ~c = 2\lambda
    \end{equation}
    (with the remaining coefficients to $x$ and $\xhat$ being zero), and from these values, the desired parameters can be shown to  be as follows:
    \begin{gather}
        \mu_X^* = 0, ~\mu_{\Xhat}^* = 0, ~(\sigma_X^*)^2 = \sqrt{\frac{2b}{4ab-c^2}}, ~(\sigma_{\Xhat}^*)^2 = \sqrt{\frac{2a}{4ab-c^2}}, ~\rho^* = \frac{-c}{2\sqrt{ab}}. \label{eq:params}
    \end{gather}
    \item Given the parameters in \eqref{eq:params}, we can readily compute the $d_0$-distortion as $D_0 = (\sigma_X^*)^2 + (\sigma_{\Xhat}^*)^2 - 2\rho^* \sigma_X^* \sigma_{\Xhat}^*$ (by expanding the square in $\EE[(X-\Xhat)^2]$), and the $d_1$-distortion as the probability of the bivariate Gaussian lying in the second or forth quadrant of $\RR^2$ (e.g., using standard libraries for computing the multivariate CDF).
    \item With $D_0$ now known, the corresponding rate can be obtained by numerically inverting \eqref{eq:R_Gaussian}, e.g., using a binary search procedure.
\end{itemize}
The resulting rate-distortion curves are illustrated in Figure \ref{fig:continuous} (after converting from nats to bits) under the choices $\sigma^2 = \tau^2 = 1$.  Although nearest-neighbor encoding (i.e., using $d_0$) may sound reasonable for the purpose of preserving signs, it is seen to be highly suboptimal here.  This is particularly the case at low distortion levels, where the rate becomes unbounded in the mismatched case, despite a rate of one being trivial.  Intuitively, this is because typical $\Xv$ sequences contain many low-valued entries, and using the nearest-neighbor encoder is likely to flip their signs unless there are a huge number of codewords to choose from (i.e., a high rate).

\begin{figure}[!t]
    \centering
    \includegraphics[width=0.5\textwidth]{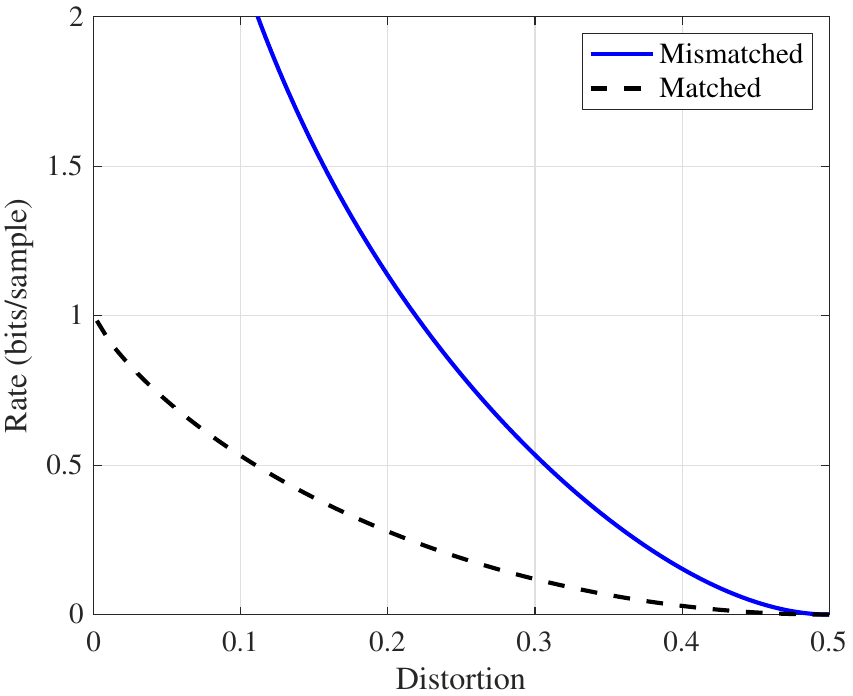}
    \par
    \caption{Rate-distortion curves for the Gaussian source with $\sigma^2 = \tau^2 = 1$; in the mismatched case, the distortion is 0-1 valued based on the signs. \label{fig:continuous}}
    \vspace*{-3ex}
\end{figure}

\vspace*{-1ex}
\section{Conclusion}
\vspace*{-0.5ex}

We have studied multi-user coding techniques and general alphabets for the mismatched rate-distortion problem, and established several new results that serve as natural counterparts to those that were previously known for mismatched channel coding.  Several possible directions remain for future work, including (i) studying the refined superposition coding ensemble considered for channel coding in \cite{Sca16a}; (ii) establishing conditions under which the various ensembles provably outperform one another; and (iii) attaining general-alphabet counterparts of the rate-distortion curves achieved by the constant constant-composition and/or multi-user ensembles.

\vspace*{-1ex}
\appendices

\vspace*{-1ex}
\section{Analysis of the Uniform Ternary Source Example} \label{app:ternary}
\vspace*{-0.5ex}

In this section, we study the ternary source example of Example \ref{sec:ex_ternary}, and prove Lemma \ref{lem:ternarySource} therein.  Throughout this appendix, all information measures are in units of nats.

\subsection{Alternative Expressions}

Since both the $X$-marginal and $\Xhat$-marginal of $\Ptilde_{X\Xhat}$ are uniform, we can express $\Ptilde_{X\Xhat}$ in the following form:
\begin{align}
    \Ptilde_{X\Xhat} &= \begin{bmatrix}
        \frac{1}{3} - p_{01} - p_{02} & p_{01} & p_{02}\\
        p_{10} & \frac{1}{3} - p_{10} - p_{12} & p_{12}\\
        p_{01} + p_{02} -p_{10} & p_{10} + p_{12} - p_{01} & \frac{1}{3} - p_{02} - p_{12}
    \end{bmatrix}, \label{genDist}
\end{align}
where we write the joint distribution in matrix form with rows indexed by $x$ and columns indexed by $\xhat$, and $(p_{01}, p_{10}, p_{02}, p_{12})$ are constrained to take values such that all elements of the matrix are non-negative.  In this notation, the matched distortion in \eqref{matchedMin} can be rewritten as
\begin{align}
    \Dbar_1^*(Q_{\Xhat},R) =  \min_{\Ptilde_{X\Xhat} : I_{\Ptilde}(X; \Xhat) \leq R} p_{01} + p_{10} + p_{02}  + p_{12}. \label{matchedMin2}
\end{align}
To make this more explicit, we write $I_{\Ptilde}(X;\Xhat)$ in terms of the ternary entropy function $H_3(a,b)$ (i.e., the entropy with probabilities $(a,b,1-a-b)$) as follows:
\begin{align}
    I_{\Ptilde}(X;\Xhat) &= H_{\Ptilde}(\Xhat) - H_{\Ptilde}(\Xhat|X) \label{initJensen}\\
    &= \log 3 - H_{\Ptilde}(\Xhat|X) \\
    &= \log 3 - \sum_x \Pi_X(x)H_{\Ptilde}(\Xhat | X = x)\\
    &= \log 3 - \frac{1}{3}\Big(H_3(3p_{01}, 3p_{02}) + H_{3}(3p_{10}, 3p_{12})  \notag\\
    & \qquad + H_3(3p_{02} + (3p_{01} - 3p_{10}), 3p_{12} - (3p_{01} - 3p_{10}))\Big)\label{genMutInf}\\
    &\geq \log 3 - \frac{1}{3}\left(2H_3\left(3\frac{p_{01} + p_{10}}{2}, 3\frac{p_{02} + p_{12}}{2}\right) + H_3\left(3\frac{p_{02} + p_{12}}{2}, 3\frac{p_{02} + p_{12}}{2}\right) \right), \label{mutInfJensen}
\end{align}
where \eqref{genMutInf} follows since each $H(\Xhat|X=x)$ is the ternary entropy associated with multiplying the relevant row of \eqref{genDist} by 3 (i.e., dividing by $\Pi_X(x) = \frac{1}{3}$), and \eqref{mutInfJensen} follows by applying Jensen's inequality to the first two terms, and using $H_3(a,b) \le H_3\big(\frac{a+b}{2},\frac{a+b}{2}\big)$ (again exploiting the concavity of entropy) in the third term. Moreover, since the ternary entropy function is \emph{strictly} concave, equality holds if and only if $ p_{01} = p_{10} $ and $ p_{02} = p_{12} $.

\subsection{Analysis of the Matched Performance}

As a step towards proving Lemma \ref{lem:ternarySource}, which states that the matched performance is attainable even in the presence of mismatch, it is useful to first study the minimization problem \eqref{matchedMin} associated with the matched setting.  We present several useful lemmas regarding this.

\begin{lem} \label{symmLemma}
    In the present ternary source example, for $R \in [0,\log 3]$, any distribution achieving the minimum in \eqref{matchedMin} satisfies both $ p_{01} = p_{10} $ and $ p_{02} = p_{12} $ when expressed as given in \eqref{genDist}. 
\end{lem}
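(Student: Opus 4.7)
The plan is a symmetrization-plus-perturbation argument built on the strict concavity of the ternary entropy function $H_3$. Suppose for contradiction that an optimal $\Ptilde^*_{X\Xhat}$ with parameters $(p_{01}^*, p_{10}^*, p_{02}^*, p_{12}^*)$ in \eqref{genDist} violates one of the two equalities. Define $\overline{p} = \tfrac{1}{2}(p_{01}^* + p_{10}^*)$ and $\overline{q} = \tfrac{1}{2}(p_{02}^* + p_{12}^*)$, and let $\Ptilde^{\mathrm{sym}}_{X\Xhat}$ be the distribution produced by \eqref{genDist} with parameters $(\overline{p}, \overline{p}, \overline{q}, \overline{q})$. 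A direct check shows that $\Ptilde^{\mathrm{sym}}$ is the convex combination $\tfrac{1}{2}(\Ptilde^* + \sigma \Ptilde^*)$, where $\sigma$ relabels the two symbols $0$ and $1$ in both $X$ and $\Xhat$; in particular, each entry of $\Ptilde^{\mathrm{sym}}$ is the average of two non-negative entries of $\Ptilde^*$ (for instance, the $(3,1)$ entry $\overline{q}$ is the average of $p_{01}^* + p_{02}^* - p_{10}^*$ and $p_{10}^* + p_{12}^* - p_{01}^*$), so $\Ptilde^{\mathrm{sym}}$ is a valid distribution with uniform marginals.

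The first key step is to show that symmetrization leaves the objective unchanged but strictly decreases the mutual information. The objective $p_{01} + p_{10} + p_{02} + p_{12}$ depends only on the sums $p_{01} + p_{10}$ and $p_{02} + p_{12}$, both of which are preserved by construction. For the mutual information, I invoke the inequality chain \eqref{initJensen}--\eqref{mutInfJensen}: the Jensen step was applied to the strictly concave function $H_3$, and attains equality if and only if the paired arguments being averaged coincide, i.e., if and only if $p_{01}=p_{10}$ and $p_{02}=p_{12}$ simultaneously. Since this fails for $\Ptilde^*$ by assumption, the inequality is strict, yielding $I_{\Ptilde^{\mathrm{sym}}}(X;\Xhat) < I_{\Ptilde^*}(X;\Xhat) \leq R$, so $\Ptilde^{\mathrm{sym}}$ lies strictly in the interior of the feasible region.

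Finally, I derive the contradiction via a small perturbation. The boundary cases are immediate: $R=0$ forces independence, so $\Ptilde_{X\Xhat} = \Pi_X \times Q_{\Xhat}$ is uniform and trivially symmetric, while $R = \log 3$ is attained with objective zero by the identity coupling, again symmetric. For $R \in (0, \log 3)$, at least one of $\overline{p}, \overline{q}$ must be strictly positive, since otherwise $\Ptilde^{\mathrm{sym}}$ would be the identity coupling with $I = \log 3 > R$, contradicting its feasibility. Decreasing whichever parameter is positive by an infinitesimal $\epsilon > 0$ preserves non-negativity of all nine entries of \eqref{genDist} for $\epsilon$ small enough, and by continuity of $I(X;\Xhat)$ together with the strict slack from the previous step, it preserves feasibility as well; but it strictly reduces the objective, contradicting optimality of $\Ptilde^*$.

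The main obstacle is not conceptual but bookkeeping: one must verify non-negativity of every entry of \eqref{genDist} both after symmetrization and after perturbation. This is routine because each entry of $\Ptilde^{\mathrm{sym}}$ is either an entry of $\Ptilde^*$ or an average of two such entries, and the perturbation is a single-coordinate decrement from a point strictly in the interior.
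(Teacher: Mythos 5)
Your proof is correct and takes essentially the same route as the paper's: symmetrize the assumed asymmetric minimizer (same distortion, strictly smaller mutual information via the strict-concavity/Jensen argument behind \eqref{mutInfJensen}), then exploit the resulting slack in the constraint $I_{\Ptilde}(X;\Xhat)\le R$ to decrease a positive parameter and strictly reduce the objective, contradicting optimality. The only cosmetic differences are your explicit identification of the parameter swap as a relabeling of the symbols $0$ and $1$ and the slightly different handling of the boundary cases, which the paper treats by first disposing of the zero-distortion regime $R=\log 3$.
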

\begin{proof}
    This result trivially holds for $R = \log 3$, since the constraint $I_{\Ptilde}(X;\Xhat) \le R$ is then satisfied by all valid distributions of the form \eqref{genDist}. Selecting $\Xhat = X$ (i.e., $p_{01} = p_{02} = p_{10} = p_{12} = 0$) satisfies this with zero distortion.  Conversely, if $D_1^* = 0$ then the choice of distortion function implies that $p_{01} = p_{02} = p_{10} = p_{12} = 0$ (see \eqref{genDist}), which in turn implies $\Xhat = X$ and $R = \log 3$.  Thus, whenever $R < \log 3$, it must be the case that $\Dbar_1^*(Q_X,R) > 0$, and we focus on this scenario in the rest of the analysis.
    
    Fix $R < \log 3$, and assume for contradiction that some distribution $ \Ptilde_a $ is asymmetric (in the sense that $ p_{01, a} \neq p_{10, a} $ and/or $ p_{02, a} \neq p_{12, a} $) but achieves the minimum in \eqref{matchedMin}.  We define the symmetrized version $\Ptilde_s = \frac{\Ptilde_a + \Ptilde_a'}{2}$, where $ \Ptilde_a' $ is the distribution with $ p_{01, a}' = p_{10, a} $, $ p_{10, a}' = p_{01, a} $, $ p_{02, a}' = p_{12, a} $, and $ p_{12, a}' = p_{02, a} $ (i.e., the suitable pairs are interchanged). Observe that $\Ptilde_s$ is symmetric in the sense that $ p_{01, s} = p_{10, s} $ and $ p_{02, s} = p_{12, s} $.
    
    We now have the following:
    \begin{equation}
        I_{\Ptilde_s}(X; \Xhat) < I_{\Ptilde_a}(X; \Xhat) = I_{\Ptilde_a'}(X; \Xhat),
    \end{equation}
    where the inequality follows from \eqref{mutInfJensen} (and the condition for strict inequality stated just after), and the equality follows since the mutual information is invariant to interchanging the pairs as described above (see \eqref{genMutInf}).  Moreover, since the distortion is given by $p_{01} + p_{10} + p_{02}  + p_{12}$ (see \eqref{matchedMin2}), we have $ \EE_{\Ptilde_a}[d_{1}(X;\Xhat)] = \EE_{\Ptilde_s}[d_{1}(X;\Xhat)] $. Thus, the fact that $ I_{\Ptilde_a}(X; \Xhat) \leq R $ implies that $ I_{\Ptilde_s}(X; \Xhat) < R $, which implies that $ \Ptilde_s $ also achieves the minimum in \eqref{matchedMin}.
    
    
    
    If $p_{01,s} > 0$, then from the continuity of mutual information, there exists some $p_{01, s}^{\dagger} < p_{01, s}$ such that the corresponding mutual information still satisfies $I_{\Ptilde_s^{\dagger}}(X; \Xhat) \leq R$ (with $p_{02, s}^{\dagger}  = p_{02,s}$). The corresponding value of $\EE_{\Ptilde_s^{\dagger}}[d_1(X;\Xhat)]$ is then smaller than that of $\Ptilde_a$, thus contradicting the assumption that $\Ptilde_a$ achieves the minimum.
    
    If $p_{01,s} = 0$, then we must have $p_{02,s}$, since we have assumed that we are not in the zero-distortion region.  Then, by the same reasoning as the case where $p_{01,s} > 0$, we find that there exists $p_{02, s}^{\dagger} < p_{02, s}$ such that $I_{\Ptilde^{\dagger}}(X; \Xhat) \leq R$. This leads to a distortion smaller than the minimizing distribution, arriving at a contradiction and completing the proof.    
    \end{proof}

\begin{lem} \label{symCorr}
    In the present ternary source example, for $R \in [0,\log 3]$, any distribution achieving the minimum in \eqref{matchedMin} satisfies $ I_{\Ptilde}(X;\Xhat) = R $.
\end{lem}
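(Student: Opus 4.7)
The plan is a standard ``slackness implies non-minimality'' argument based on continuity, leveraging the symmetric structure established in Lemma~\ref{symmLemma}. Suppose for contradiction that some minimizer $\Ptilde^*$ of \eqref{matchedMin} satisfies $I_{\Ptilde^*}(X;\Xhat) < R$ strictly. By Lemma~\ref{symmLemma}, we may write $\Ptilde^*$ in the form \eqref{genDist} with $p_{01} = p_{10}$ and $p_{02} = p_{12}$, and we want to produce a perturbation $\Ptilde^{\dagger}$ (in the same symmetric form) that strictly reduces the distortion $p_{01}+p_{10}+p_{02}+p_{12}$ while keeping $I_{\Ptilde^{\dagger}}(X;\Xhat) \le R$.

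First I would dispose of the degenerate case. If $p_{01} = p_{02} = 0$ (hence also $p_{10} = p_{12} = 0$ by symmetry), then \eqref{genDist} forces $\Xhat = X$ almost surely, giving distortion zero and $I_{\Ptilde^*}(X;\Xhat) = \log 3$. The strict inequality $I_{\Ptilde^*}(X;\Xhat) < R$ would require $R > \log 3$, contradicting $R \in [0,\log 3]$. Hence we may assume $p_{01} > 0$ or $p_{02} > 0$.

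Next, assume without loss of generality $p_{01} > 0$ (the other case is symmetric). I would define the perturbed distribution $\Ptilde^{\dagger}$ by replacing $p_{01}$ and $p_{10}$ by $p_{01}-\epsilon$ for some small $\epsilon > 0$, while keeping $p_{02}=p_{12}$ unchanged; this maintains the symmetric constraint required to stay in the form considered, and reduces the distortion by $2\epsilon$. The mutual information $I_{\Ptilde}(X;\Xhat)$ is a continuous function of $(p_{01},p_{02},p_{10},p_{12})$ on the region where the entries of \eqref{genDist} remain non-negative, as can be read off from \eqref{genMutInf}. Since $I_{\Ptilde^*}(X;\Xhat) < R$, by continuity there exists $\epsilon > 0$ sufficiently small such that the perturbed entries of \eqref{genDist} are still non-negative and $I_{\Ptilde^{\dagger}}(X;\Xhat) < R$. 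The resulting $\Ptilde^{\dagger}$ satisfies the mutual information constraint but yields strictly smaller distortion, contradicting the assumed minimality of $\Ptilde^*$.

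The only subtlety I anticipate is checking that the perturbation keeps all nine entries of \eqref{genDist} non-negative; this is immediate for small $\epsilon$ whenever the corresponding $p_{01}$ (or $p_{02}$) being decreased is strictly positive, which is precisely the case we reduced to. Combining the two cases yields $I_{\Ptilde^*}(X;\Xhat) = R$ at every minimizer, as claimed.
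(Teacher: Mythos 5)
Your proof is correct and takes essentially the same route as the paper: the paper also assumes a symmetric minimizer (via Lemma~\ref{symmLemma}) with $I_{\Ptilde}(X;\Xhat) < R$ and invokes exactly the continuity-based perturbation from the proof of Lemma~\ref{symmLemma} --- decreasing $p_{01}$ (or, if $p_{01}=0$, then $p_{02}$) slightly to strictly reduce the distortion while keeping $I_{\Ptilde}(X;\Xhat) \le R$ --- to reach the same contradiction. Your explicit disposal of the zero-distortion case (where $I_{\Ptilde}(X;\Xhat)=\log 3$ rules out strict slackness for $R\le\log 3$) corresponds to the paper's handling of $R=\log 3$ and the positive-distortion restriction in Lemma~\ref{symmLemma}.
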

\begin{proof}
    Suppose for contradiction that there exists a distribution $\Ptilde_s$ (assumed symmetric according to Lemma \ref{symmLemma}) achieving the minimum with $ I_{\Ptilde_s}(X;\Xhat) <R $.  Then, from the analysis in the proof of Lemma \ref{symmLemma}, there exists a distribution $\Ptilde_s'$ such that $\EE_{\Ptilde_s'}[d_1(X, \Xhat)] < \EE_{\Ptilde_s}[d_1(X, \Xhat)]$ and $I_{\Ptilde}(X;\Xhat) \le R$. This is a contradiction, since we have assumed that $\Ptilde_s$ achieves the minimum.
\end{proof}

In view of Lemma \ref{symmLemma}, we can reduce the four parameters $(p_{01},p_{10},p_{02},p_{12})$ in \eqref{genDist} to just two parameters  $(p_{01},p_{02})$.  To satisfy the constraint of non-negative matrix entries, we require the following:
\begin{gather}
    0 \le p_{01} + p_{02} \leq \frac{1}{3} \label{supconstr1}\\
    0 \le p_{02} \leq \frac{1}{6}. \label{supconstr2}
\end{gather}
Moreover, the symmetry from Lemma \ref{symmLemma} implies that \eqref{mutInfJensen} holds with equality, and the mutual information becomes
\begin{equation}
    I_{\Ptilde}(X;\Xhat) = \log 3 - \frac{1}{3}\left(2H_3\left(3p_{01}, 3p_{02}\right) + H_3\left(3p_{02}, 3p_{02}\right) \right). \label{eq:mi_simplified}
\end{equation}
Hence, \eqref{matchedMin2} reduces to minimizing $2(p_{01}+p_{02})$ subject to \eqref{supconstr1}--\eqref{supconstr2}, as well as \eqref{eq:mi_simplified} being at most $R$.

From the two-parameter expression \eqref{eq:mi_simplified}, we can readily evaluate the partial derivatives with respect to $p_{01}$ and $p_{02}$ (the details amount to standard calculus and are omitted):
\begin{gather}
    \frac{\partial }{\partial p_{01}} I_{\Ptilde}(X;\Xhat) = 2\log\left(\frac{p_{01}}{\frac{1}{3} - p_{01} - p_{02}}\right), \label{infpartial_p01}\\
    \frac{\partial }{\partial p_{02}} I_{\Ptilde}(X;\Xhat) = 2\log\left(\frac{p_{02}^2}{(\frac{1}{3} - p_{01} - p_{02})(\frac{1}{3} - 2p_{02})}\right). \label{infPartial_p02}
\end{gather}
Our final useful lemma regarding the matched case is as follows.

\begin{lem}
    In the present ternary source example with $R \in [0,\log 3]$, the optimal value of $p_{02}$ for a distribution achieving the matched distortion-rate function, when $d_1$ is used as the distortion metric, is at most $\frac{1}{9}$. \label{p02IneqLemma}
\end{lem}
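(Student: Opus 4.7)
The plan is to combine the two preceding lemmas with a Lagrange-multiplier analysis on the symmetric plane, and then exploit strict monotonicity of the distortion along the resulting stationarity curve.

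First, I would invoke Lemma~\ref{symmLemma} to restrict to symmetric distributions ($p_{10} = p_{01}$ and $p_{12} = p_{02}$) and Lemma~\ref{symCorr} to impose that the mutual-information constraint binds, $I_{\Ptilde}(X;\Xhat) = R$. This reduces the matched rate-distortion minimization to minimizing $D = 2(p_{01} + p_{02})$ over $(p_{01},p_{02})$ in the region $\{p_{01},p_{02}\ge 0,\ p_{01}+p_{02}\le 1/3,\ p_{02}\le 1/6\}$ subject to $I(p_{01},p_{02}) = R$, where $I$ is given by \eqref{eq:mi_simplified}. Since $\nabla D = (2,2)$ is constant, the stationarity condition $\nabla D\parallel \nabla I$ becomes $\partial I/\partial p_{01} = \partial I/\partial p_{02}$, which via \eqref{infpartial_p01}--\eqref{infPartial_p02} simplifies (after exponentiating and cancelling the common factor $1/3 - p_{01} - p_{02}$) to the stationarity curve
$$p_{01}\left(\tfrac{1}{3} - 2 p_{02}\right) = p_{02}^{2}.$$

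Next, I would substitute $p_{01} = p_{02}^{2}/(1/3 - 2 p_{02})$ into the objective, obtaining
$$D(p_{02}) = \frac{2 p_{02}(\tfrac{1}{3} - p_{02})}{\tfrac{1}{3} - 2 p_{02}}, \qquad p_{02} \in [0,\tfrac{1}{6}),$$
and then verify $D'(p_{02}) > 0$ on $[0,1/6)$. A short differentiation shows the numerator of $D'(p_{02})$ is proportional to $2 p_{02}^{2} - \tfrac{2}{3} p_{02} + \tfrac{1}{9}$, whose discriminant equals $\tfrac{4}{9} - \tfrac{8}{9} = -\tfrac{4}{9} < 0$, so this quadratic is everywhere positive and $D$ is strictly increasing along the stationarity curve. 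I would then observe that the point $p_{02} = 1/9$ on the curve corresponds to $p_{01} = 1/9$, i.e., the uniform joint distribution, at which $I = 0$; consequently $I > 0$ at every other point of the curve with $p_{02}\in(0,1/6)\setminus\{1/9\}$.

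The conclusion now follows from the intermediate value theorem applied to the continuous function $I$ restricted to the segment $p_{02}\in[0,1/9]$ of the curve: since $I(0) = \log 3$ and $I(1/9) = 0$, for any $R \in (0, \log 3)$ there exists a stationary point with $p_{02}^{\star}\in(0, 1/9)$ realizing $I = R$, and by strict monotonicity of $D$ any competing stationary point with $p_{02} > 1/9$ at which $I = R$ yields strictly larger distortion and cannot be the minimizer. The endpoints $R = 0$ (uniform, $p_{02} = 1/9$) and $R = \log 3$ ($p_{02} = 0$) also satisfy $p_{02} \le 1/9$. The main obstacle I expect is ruling out minima that lie off the stationarity curve on the boundary of the feasible region: the relation $p_{01}(1/3 - 2p_{02}) = p_{02}^{2}$ forces $p_{01} = 0$ only when $p_{02} = 0$, while any boundary point with $p_{01} + p_{02} = 1/3$ produces $D = 2/3 > 4/9$ and is therefore strictly dominated by the uniform distribution (which is feasible for any $R\ge 0$ since it attains $I = 0$), ruling out such candidates.
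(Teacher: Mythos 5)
Your reduction to the symmetric plane via Lemmas \ref{symmLemma} and \ref{symCorr}, the stationarity relation $p_{01}(\tfrac{1}{3}-2p_{02})=p_{02}^2$ (which is exactly \eqref{rel_p01_p02}), and the monotonicity of $D(p_{02})=\tfrac{2p_{02}(1/3-p_{02})}{1/3-2p_{02}}$ via the discriminant computation are all correct and follow the same route as the paper, which phrases the last step as $p_{01}^*$ being increasing in $p_{02}^*$ along \eqref{rel_p01_p02} together with the zero-rate point $p_{01}^*=p_{02}^*=\tfrac19$. However, there is a genuine gap in how you license the use of the interior stationarity condition $\partial I/\partial p_{01}=\partial I/\partial p_{02}$ in the first place. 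A constrained minimizer of \eqref{matchedMin2} could in principle sit on the boundary of the region \eqref{supconstr1}--\eqref{supconstr2}, i.e.\ at $p_{01}=0$ with $p_{02}>0$, at $p_{02}=0$ with $p_{01}>0$, or at $p_{02}=\tfrac16$; at such a point the KKT conditions carry an extra multiplier for the active sign/box constraint and the point need \emph{not} lie on your stationarity curve. Your observation that the curve relation forces $p_{01}=0$ only when $p_{02}=0$ does not rule these candidates out -- it merely confirms they are off the curve -- and a boundary minimizer at $p_{01}=0$ or $p_{02}=\tfrac16$ with $p_{02}>\tfrac19$ would falsify the lemma. Your dominance argument only disposes of the face $p_{01}+p_{02}=\tfrac13$ (and note that $p_{02}=\tfrac16$ with small $p_{01}$ gives distortion near $\tfrac13<\tfrac49$, so it is not covered by the comparison with the uniform distribution).

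The paper devotes the first half of its proof to exactly this: using \eqref{infPartial_p02} it shows that at $p_{02}=\tfrac16$ the mutual information can be decreased by decreasing $p_{02}$ (the derivative blows up to $+\infty$), so that constraint is never active; and using the fact that $\partial I/\partial p_{0\nu}\to-\infty$ as $p_{0\nu}\to0$ while the other partial derivative stays finite, it shows that in the positive-distortion regime a minimizer cannot have exactly one of $p_{01},p_{02}$ equal to zero (otherwise a distortion-preserving perturbation would give $I<R$, contradicting Lemma \ref{symCorr}). Only after all constraints in \eqref{supconstr1}--\eqref{supconstr2} are shown inactive does the reduction to the unconstrained problem \eqref{eq:matched_reduced} and the relation \eqref{rel_p01_p02} become legitimate. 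You already have the ingredients (the divergent derivatives in \eqref{infpartial_p01}--\eqref{infPartial_p02}), so the fix is straightforward, but as written the exclusion of boundary optima is incomplete and the step from ``minimizer'' to ``point on the stationarity curve'' is unjustified.
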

\begin{proof}
We first note that the constraint $p_{01} + p_{02} \le \frac{1}{3}$ in \eqref{supconstr1} is inactive, because the choice $p_{01}  = p_{02} = 1/9$ gives a smaller value of distortion (namely, $2p_{01} + 2p_{02}$) compared to any pair $(p_{01}, p_{02})$ such that $ p_{01} + p_{02} = 1/3 $, and is guaranteed to be feasible since it gives $I_{\Ptilde}(X, \Xhat) = 0$.

Moreover, the constraint $p_{02} \le \frac{1}{6}$ in \eqref{supconstr2} is inactive, since for any fixed value of $p_{01}$, \eqref{infPartial_p02} implies that $I_{\Ptilde}(X, \Xhat)$ increases to its value at $p_{02} = \frac{1}{6}$ infinitely sharply.  Hence, when $p_{02} = \frac{1}{6}$, we can hold $p_{01}$ constant and slightly decrease $p_{02}$, thereby decreasing both $I_{\Ptilde}(X;\Xhat)$ and the distortion. 


We also claim that in the positive-distortion regime (i.e., $p_{01} + p_{02} > 0$), neither of $p_{01}$ and $p_{02}$ can be $0$. This is seen by fixing $p_{01} + p_{02}$ to a positive value $p_{\rm sum}$ and observing from \eqref{infpartial_p01}--\eqref{infPartial_p02} that
\begin{align}
    \lim_{p_{0\nu}\rightarrow 0}\frac{\partial I_{\Ptilde}(X;\Xhat)}{\partial p_{0\nu}} \rightarrow -\infty, \quad \nu = 1, 2;
\end{align}
whereas the partial derivative at $p_{0\nu} = p_{\rm sum}$ ($\nu = 1,2$) is bounded and finite.
If only one of the two is zero (say $p_{01}$), we see that increasing $p_{01}$ and decreasing $p_{02}$ by the same small quantity leads to a net decrease in $I_{\Ptilde}(X;
\Xhat)$ while keeping the distortion constant, contradicting the fact that $I_{\Ptilde}(X;\Xhat) = R$ (from Lemma \ref{symCorr}).

Combining the preceding paragraphs, we have shown that the solution to the symmetrized version of \eqref{matchedMin2} is unchanged when the constraints in \eqref{supconstr1}--\eqref{supconstr2} are dropped altogether.  Hence, we are left with studying the simplified problem
\begin{equation}
    \min_{\substack{(p_{01},p_{02}) \,:\,  \log 3 - \frac{1}{3}\left(2H_3\left(3p_{01}, 3p_{02}\right) + H_3\left(3p_{02}, 3p_{02}\right) \right) \le R}}  2(p_{01} + p_{02}), \label{eq:matched_reduced}
\end{equation}
where the expression in the constraint comes from \eqref{eq:mi_simplified}.

Applying the KKT optimality conditions to \eqref{eq:matched_reduced}, we find that there exists $\lambda > 0$ such that
\begin{align}
    \nabla I_{\Ptilde^*}(X;\Xhat) & = -\frac{1}{\lambda}\begin{bmatrix}
        2 \\ 2
    \end{bmatrix},
\end{align}
from which it follows that
\begin{gather}
     \left.\frac{\partial}{\partial p_{01}}I_{\Ptilde^*}(X;\Xhat) \right|_{p_{01}^*} = \left.\frac{\partial}{\partial p_{02}}I_{\Ptilde^*}(X;\Xhat) \right|_{p_{02}^*} \label{matchedKKT}\\
    \implies p_{01}^* = \frac{(p_{02}^*)^2}{\frac{1}{3} - 2p_{02}^*}\label{rel_p01_p02},
\end{gather}
where \eqref{rel_p01_p02} follows by equating \eqref{infpartial_p01} with \eqref{infPartial_p02} and simplifying.

A simple analysis of the derivative of \eqref{rel_p01_p02} reveals that $p_{01}^*$ is a strictly increasing function of $p_{02}^*$ in the admissible range $p_{02}^* \in \big[0,\frac{1}{6}\big]$, and therefore, so is the distortion.  On the other hand, we know that $p_{01}^* = p_{02}^* = \frac{1}{9}$ corresponds to a rate of zero. By the non-increasing nature of the distortion-rate function, we conclude that distortion cannot exceed its zero-rate value, and therefore, $p_{02}$ (and in fact, also $p_{01}$) cannot exceed $\frac{1}{9}$.
\end{proof}

\subsection{Analysis of Superposition Coding}
Since the $X$-marginal and $\Xhat$-marginal are still uniform here, we can again consider $\Ptilde_{X\Xhat}$ in the form given in \eqref{genDist}.  Moreover, the following analog of Lemma \ref{symmLemma} holds. 

\begin{lem} \label{symmLemma2}
    Consider the minimization problem with respect to $\Ptilde_{XU\Xhat}$ in \eqref{eq:Ptilde_SC}.  In the present ternary source example, $\Ptilde_{XU\Xhat}$ is uniquely determined by $\Ptilde_{X\Xhat}$, and any distribution achieving the minimum satisfies both $p_{01} = p_{10}$ and $p_{02} = p_{12}$ when $\Ptilde_{X\Xhat}$ is expressed as given in \eqref{genDist}.
\end{lem}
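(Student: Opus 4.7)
The plan is to treat the two claims separately. The first claim, that $\Ptilde_{XU\Xhat}$ is uniquely determined by $\Ptilde_{X\Xhat}$, follows directly from the structure of $Q_{U\Xhat}$ in \eqref{supdist}: its support is exactly $\{(u,\xhat) : u = \openone\{\xhat = 2\}\}$, so under any $\Ptilde$ satisfying the marginal constraint $\Ptilde_{U\Xhat} = Q_{U\Xhat}$ the auxiliary $U$ equals $\openone\{\Xhat = 2\}$ with probability one. Hence the full joint factors as $\Ptilde_{XU\Xhat}(x,u,\xhat) = \Ptilde_{X\Xhat}(x,\xhat)$ on the graph $u = \openone\{\xhat = 2\}$ and vanishes off it, giving the uniqueness.

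For the symmetry claim, I mirror the proof of Lemma \ref{symmLemma}. Let $\sigma$ be the transposition on $\{0,1,2\}$ that swaps $0$ and $1$ and fixes $2$, and for a purported minimizer $\Ptilde_a$, define $\Ptilde_{a'}(x,\xhat) = \Ptilde_a(\sigma(x),\sigma(\xhat))$; this corresponds precisely to the swap $p_{01,a'} = p_{10,a}$, $p_{02,a'} = p_{12,a}$, and so on. Because $\Pi_X$, $Q_{\Xhat}$, and $Q_{U\Xhat}$ are all $\sigma$-invariant (the last one because $\sigma$ fixes $2$, so $U = \openone\{\Xhat=2\}$ is unchanged under the relabeling), $\Ptilde_{a'}$ is feasible, and the encoding distortion $\EE_{\Ptilde_a}[d_0(X,\Xhat)]$ and the mutual informations $I_{\Ptilde_a}(X;U)$, $I_{\Ptilde_a}(X;\Xhat)$ are all invariant under the bijective relabeling, so $\Ptilde_{a'}$ is also a minimizer achieving the same constraint values. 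Form $\Ptilde_s = \tfrac{1}{2}(\Ptilde_a + \Ptilde_{a'})$, which is symmetric in the desired sense, preserves the marginals, and, by linearity, attains the same minimum distortion. By strict concavity of entropy at fixed marginals, $I_{\Ptilde_s}(X;\Xhat) < I_{\Ptilde_a}(X;\Xhat)$ strictly whenever $\Ptilde_a \neq \Ptilde_{a'}$, that is, whenever $\Ptilde_a$ is asymmetric; this gives strict slack in the constraint $I(X;\Xhat) \le R_0 + R_1$.

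To close the argument, I then perturb $\Ptilde_s$ to reduce its distortion while maintaining feasibility, following the end of the proof of Lemma \ref{symmLemma}. If $p_{01,s} > 0$, decreasing $p_{01,s}$ alone strictly reduces $\EE_{\Ptilde}[d_0(X,\Xhat)] = p_{01}+p_{10}+2p_{02}+2p_{12}$, leaves $\Ptilde_{XU}$ (which depends only on $p_{02}$ and $p_{12}$) and hence $I(X;U)$ unchanged, and changes $I(X;\Xhat)$ continuously, so the slack established above accommodates a sufficiently small perturbation. If instead $p_{01,s} = p_{10,s} = 0$ but $p_{02,s} > 0$, then the asymmetry lives entirely in $(p_{02,a}, p_{12,a})$, which also makes $\Ptilde_{XU,a} \neq \Ptilde_{XU,a'}$; strict convexity then yields $I_{\Ptilde_s}(X;U) < I_{\Ptilde_a}(X;U) \le R_0$ as well, and a small decrease of $p_{02,s}$ remains feasible in both constraints. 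Either way, we contradict the minimality of $\Ptilde_a$, forcing it to be symmetric.

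The main anticipated obstacle is precisely this case analysis in the perturbation step: we must align whichever coordinate we shrink against whichever mutual-information constraint currently has slack. The resolution, as sketched above, is that any source of asymmetry in $\Ptilde_a$ automatically creates slack in exactly the constraint affected by the corresponding perturbation, so the argument closes uniformly without needing extra structural lemmas beyond those in the matched case.
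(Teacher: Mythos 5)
Your proof is correct and follows essentially the same route as the paper's: the first claim via the observation that the support of $Q_{U\Xhat}$ forces $U=\openone\{\Xhat=2\}$ deterministically given $\Xhat$, and the second via the same symmetrize-then-perturb argument of Lemma \ref{symmLemma}, using exactly the paper's key facts that $I_{\Ptilde}(X;U,\Xhat)=I_{\Ptilde}(X;\Xhat)$ and that $\Ptilde_{XU}$ (hence $I_{\Ptilde}(X;U)$) depends only on $(p_{02},p_{12})$. Your phrasing of the swap as the relabeling $\sigma$ interchanging $0$ and $1$ is merely a cleaner packaging of the paper's interchange of the pairs $(p_{01},p_{10})$ and $(p_{02},p_{12})$, and your two-case perturbation (perturb $p_{01}$ using slack in $I(X;\Xhat)$ only, versus perturb $p_{02}$ using slack in both constraints when $p_{01,s}=0$) matches the paper's remark about which mutual-information terms must be tracked for each pair.
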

\begin{proof}
    The choice of $Q_{U\Xhat}$ in \eqref{supdist} implies that $U$ is deterministic given $\Xhat$, which means that once $\Ptilde_{X\Xhat}$ is specified, so is $\Ptilde_{XU\Xhat}$ (recalling that we constrain $\Ptilde_{U\Xhat} = Q_{U\Xhat}$).  This proves the first claim.
    
    For the second claim, the proof is similar to that of Lemma \ref{symmLemma}, so we only outline the differences.  In the symmetrization step,  interchanging $p_{01} \leftrightarrow p_{10}$ and/or $p_{02} \leftrightarrow p_{12}$ still does not impact the distortion (this time $d_0$ instead of $d_1$).  Since $U$ is deterministic given $\Xhat$, the mutual information $I(X; U, \Xhat)$ reduces to $I(X;\Xhat)$, so can be handled similarly to the matched case.  Moreover, using the fact that $U = \openone\{ \Xhat = 2 \}$, we find that the marginal $\Ptilde_{UX}$ can be obtained by combining the first two columns of \eqref{genDist} to obtain
    \begin{align}
        \Ptilde_{XU} &= \begin{bmatrix}
            \frac{1}{3} - p_{02} & p_{02}\\
            \frac{1}{3} - p_{12} & p_{12}\\
            p_{02} + p_{12} & \frac{1}{3} - p_{02} - p_{12}
        \end{bmatrix}, \label{genDist_XU}
    \end{align}
    which has no dependence on $(p_{01},p_{10})$.  Then, following similar steps to \eqref{initJensen}--\eqref{mutInfJensen} (see also \eqref{uxInf} below), we obtain that $I_{\Ptilde}(X;U)$ is invariant to interchanging $p_{02} \leftrightarrow p_{12}$, and can always be made smaller by symmetrizing with respect to this interchanging (or it remains the same if such symmetry already held).
    
    With these findings in place, the symmetrization argument follows in the same way as the proof of Lemma \ref{symmLemma}, considering both mutual information terms when handling $(p_{02},p_{12})$, but only considering $I_{\Ptilde}(X;\Xhat)$ when handling $(p_{01},p_{10})$ (since $I_{\Ptilde}(X;U)$ does not depend on these values).
\end{proof}

Using Lemma \ref{symmLemma2} and recalling the choice of $Q_{U\Xhat}$ in \eqref{supdist}, Theorem \ref{thm:supcoding_achievability} simplifies as follows:
\begin{align}
    \Dbar_1^*(Q_{U\Xhat},R_0,R_1) = \max_{(p_{01},p_{02}) \,:\, \Ptilde_{XU\Xhat} \in \Pctilde} 2(p_{01} + 2p_{02}),
\end{align}
where 
\begin{align}
    \Pctilde = \left\{ \argmin_{\substack{(p_{01},p_{02})\,:\, I_{\Ptilde}(X; U) \leq R_0 \\ I_{\Ptilde}(X; U, \Xhat) \leq R}} 2(p_{01} + p_{02}) \right\}, \label{eq:Ptilde_ternary} 
\end{align}
and where the correspondence between $(p_{01},p_{02})$ and $\Ptilde_{XU\Xhat}$ is as follows:
\begin{equation}
     P_U(0) = \frac{2}{3}, P_U(1) = \frac{1}{3},
\end{equation}
\begin{align}
    \Ptilde_{X\Xhat | U = 0} &= \frac{3}{2}\begin{bmatrix}
        \frac{1}{3} - p_{01} - p_{02} & p_{01} & 0\\
        p_{01} & \frac{1}{3} - p_{01} - p_{02} & 0\\
        p_{02} & p_{02} & 0,
    \end{bmatrix}, \quad
    \Ptilde_{X\Xhat|U = 1} = 3\begin{bmatrix}
        0&0&p_{02}\\
        0&0&p_{02}\\
        0&0&\frac{1}{3} - 2p_{02}.
    \end{bmatrix}
\end{align}
Once again, $p_{01}$ and $p_{02}$ are constrained such that the matrix entries are non-negative.

With this symmetrization in place, the marginal distribution $\Ptilde_{XU}$ shown in \eqref{genDist_XU} simplifies to
\begin{align}
    P_{XU} &=\begin{bmatrix}
        \frac{1}{3} - p_{02} & p_{02}\\
        \frac{1}{3} - p_{02} & p_{02}\\
        2p_{02} & \frac{1}{3} - 2p_{02} \label{eq:PXU}
    \end{bmatrix},
\end{align}
which yields the following (with $H_2(\cdot)$ being the binary entropy function):
\begin{align}
    I_{\Ptilde}(X;U) &= H(U)  - H(U|X)\\
    &= \log 3 - \frac{2}{3}\log 2 -\sum_x H(U|X=x)\Pi_X(x)\\
    &= \log 3 - \frac{2}{3}\log 2 - \frac{1}{3}(2H_2(3p_{02}) + H_2(6p_{02})), \label{uxInf}
\end{align}
since each $H(U|X=x)$ is the binary entropy associated with multiplying the relevant row of \eqref{eq:PXU} by 3 (i.e., dividing by $\Pi_X(x) = \frac{1}{3}$).  Moreover, we have already established that $I_{\Ptilde}(X;U, \Xhat) = I_{\Ptilde}(X;\Xhat)$.
Using these results, we are now in a position to prove our main finding.

\begin{proof}[Proof of Lemma \ref{lem:ternarySource}]
    Recall that we simplified the $d_1$-distortion to $2(p_{01} + p_{02})$ using \eqref{genDist} with $p_{01} = p_{10}$ and $p_{02} = p_{12}$.  For the $d_0$-distortion, the first two entries in the final row also contribute, giving a value of $2(p_{01} + 2p_{02})$.  As with Lemma \ref{symmLemma}, since $X = \Xhat$ trivially satisfies all constraints when $R = \log 3$, we focus on the cases where $R < \log 3$, and consequentially, $p_{01} + p_{02} > 0$.  
    
    We claim that the simplified matched problem \eqref{eq:matched_reduced} is uniquely minimized by some values $p^*_{01, m}$ and $p^*_{02, m}$.  To see this, assume by contradiction that there are multiple minimizers.  The average of any two of these will have the same distortion as the minimum, while $I_{\Ptilde}(X;\Xhat) = \log 3 - \frac{1}{3}\left(2H_3\left(3p_{01}, 3p_{02}\right) + H_3\left(3p_{02}, 3p_{02}\right) \right)$ will be strictly less than $R$ (due to strict concavity), contradicting Lemma \ref{symCorr}.
    
    We note from from Lemma \ref{p02IneqLemma} that $p_{02,m}^* \leq \frac{1}{9}$, and we consider the following choice of the parameter $R_0 = R - R_1$ in superposition coding: 
    \begin{equation}
        R_0 = I_{\Ptilde_m^*}(X;U),
    \end{equation}
    where $I_{\Ptilde_m^*}(X;U)$ is the value obtained on substituting $ p_{02, m}^* $ in \eqref{uxInf}.
    For all $(p_{01}, p_{02})$ distinct from $(p_{01, m}^*, p_{02, m}^*)$ such that $I_{\Ptilde}(X; U) \leq R_0$ and $I_{\Ptilde}(X;\Xhat) \le R$, we have 
    \begin{equation}
        p_{01, m}^* + p_{02, m}^* < p_{01} + p_{02}, \label{eq:ineq1}
    \end{equation}
    since the minimizer of $d_1$ for the matched case is unique.  Moreover, we claim that 
    \begin{equation}	
        p_{02, m}^* \leq p_{02}. \label{eq:ineq2}
    \end{equation}
    This is seen by combining the fact that $p_{02,m}^* \leq \frac{1}{9}$ (established above) with the fact that $I_{\Ptilde}(X;U)$ is monotonically decreasing for $p_{02} \le \frac{1}{9}$ (see Figure \ref{fig:inf_ux_vs_p02}; an analytical proof is also straightforward).  Thus, under the constraint  $I_{\Ptilde}(X; U) \leq R_0$, it is not feasible to go below $p_{02,m}^*$.
    
    \begin{figure}[t!] 
        \centering
        \includegraphics[width=0.5\linewidth]{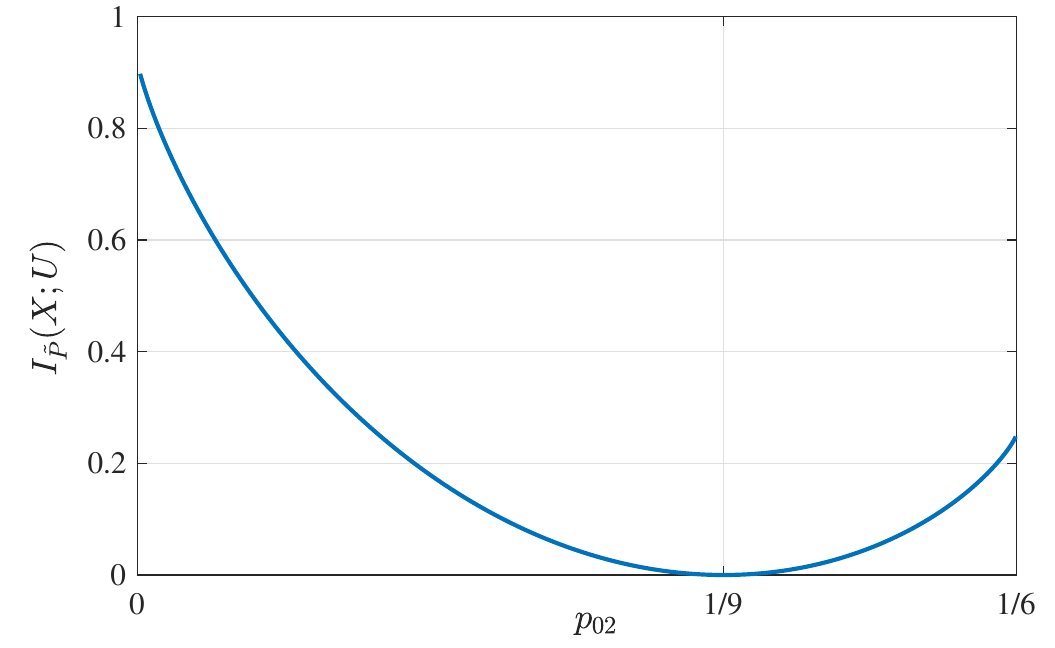}
        \caption{$I_{\Ptilde}(X;U)$ vs.~$p_{02}$ for superposition coding in the ternary source example.}
        \label{fig:inf_ux_vs_p02}
    \end{figure}
    
    Since the $d_0$-distortion equals $2(p_{01} + 2p_{02})$, we conclude from \eqref{eq:ineq1}--\eqref{eq:ineq2} that every feasible pair $(p_{01},p_{02}) \ne (p_{01,m}^*,p_{02,m}^*)$ attains a strictly higher $d_0$-distortion than $(p_{01,m}^*,p_{02,m}^*)$.  This means $(p_{01,m}^*,p_{02,m}^*)$ is also uniquely optimal for the $d_0$-minimization problem in \eqref{eq:Ptilde_ternary}, and thus, the same $d_1$-distortion is attained under mismatched encoding as that of matched encoding.
\end{proof}    

\vspace*{-1ex}
\section{Proofs for the General Alphabet Setting} \label{app:continuous}
\vspace*{-0.5ex}

We will be interested in events that hold asymptotically almost surely with respect to $\Xv$, i.e., if we view $\Xv$ as the first $n$ entries of an infinitely long i.i.d.~sequence $X_1^{\infty} = (X_1,X_2,\dotsc)$, then there exists a subset $\Xtypinf$ of such sequences such that $\PP[ X_1^{\infty} \in \Xtypinf ] = 1$, and such that the desired event holds for all sequences in $\Xtypinf$.  When this is the case, we say that the event holds {\em $\Pi_X^{\infty}$-almost surely} ($\Pi_X^{\infty}$-a.s.), or for {\em $\Pi_X^{\infty}$-almost all $\xv$}.

\subsection{Auxiliary Results}

We first provide some useful auxiliary results from the works \cite{Dem02,Kon06}, which consider ``mismatched codebooks'' in a setting with only a single distortion function $d_0$.  We start with two simple results from \cite{Dem02}.

\begin{lem} \label{lem:asymp_ball}
    {\em (Asymptotic Distortion Ball Probability \cite[Thm.~1]{Dem02})} Consider any i.i.d.~source\footnote{In \cite{Dem02} and \cite{Kon06}, more general sources beyond i.i.d.~are also considered.} $\Xv \sim \Pi_X^n$ such that $\doprod < \infty$, and for some auxiliary distribution $Q_{\Xhat}$, let $\Xvhat \sim Q_{\Xhat}^n$ be independent of $\Xv$.  Then, for any $D \in (\domin,\doprod)$, it holds for $\Pi_X^{\infty}$-almost all $\xv$ that
    \begin{equation}
        -\frac{1}{n} \log \PP\big[ d_0^n(\xv,,\Xvhat) \le nD_0 \,|\,\Xv=\xv \big] \to \Rbar^{\rm iid}(Q_{\Xhat},D_0),
    \end{equation}
    as $n \to \infty$, where $\Rbar^{\rm iid}$ is defined in \eqref{eq:R_primal}.
\end{lem}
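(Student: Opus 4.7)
The plan is to view the quantity of interest as a lower-tail probability for a sum of independent random variables, to which the Gärtner-Ellis large deviation theorem can be applied. Conditional on $\xv$, the quantity $d_0^n(\xv,\Xvhat) = \sum_{i=1}^n d_0(x_i,\Xhat_i)$ is a sum of $n$ independent (but generally non-identically distributed) random variables driven by $\Xvhat \sim Q_{\Xhat}^n$, with normalized cumulant generating function
\[
\Lambda_n(\lambda) \;=\; \frac{1}{n}\sum_{i=1}^{n} \log \EE_{Q}\!\big[e^{\lambda d_0(x_i,\Xhat)}\big].
\]
My first step would be to establish the pointwise limit $\Lambda_n(\lambda) \to \Lambda(\lambda) := \EE_{\Pi}\!\big[\log \EE_{Q}[e^{\lambda d_0(X,\Xhat)}]\big]$ for each fixed $\lambda \le 0$, by applying the strong law of large numbers to the i.i.d.\ random variables $Y_i(\lambda) := \log \EE_{Q}[e^{\lambda d_0(x_i,\Xhat)}]$. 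Finiteness is automatic since $\lambda \le 0$ and $d_0 \ge 0$ give $Y_i(\lambda) \in (-\infty,0]$, and integrability of the limit follows from the hypothesis $\doprod < \infty$ via Jensen's inequality.

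The second step would be to upgrade this per-$\lambda$ convergence to hold for all $\lambda$ on a single event of full $\Pi_X^{\infty}$-probability. I would define $\Xtypinf$ as the intersection of the SLLN events indexed by a countable dense subset of $(-\infty,0]$, and then invoke the standard fact that pointwise convergence of convex functions on a dense set to a (necessarily convex) limit implies pointwise convergence throughout the interior of the effective domain of the limit, together with uniform convergence on compact subsets. Since each $\Lambda_n$ is convex in $\lambda$ (being an average of log moment generating functions), this supplies the precondition needed by Gärtner-Ellis in the regime of interest.

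The third step is to apply Gärtner-Ellis to conclude that for every $\xv$ whose infinite extension lies in $\Xtypinf$ and every $D_0 \in (\domin,\doprod)$,
\[
-\,\tfrac{1}{n}\log \PP\!\Big[\tfrac{1}{n} d_0^n(\xv,\Xvhat) \le D_0 \,\Big|\, \Xv = \xv\Big] \;\longrightarrow\; \sup_{\lambda \le 0}\{\lambda D_0 - \Lambda(\lambda)\},
\]
where the restriction to $\lambda \le 0$ is automatic because $D_0 < \doprod = \Lambda'(0^-)$ forces the optimizer $\lambda^\star$ to lie strictly in $(-\infty,0)$, and the reduction from the lower-tail infimum $\inf_{D\le D_0} I(D)$ to the value $I(D_0)$ at the endpoint uses the fact that the convex rate function $I(\cdot) = \sup_\lambda\{\lambda(\cdot) - \Lambda(\lambda)\}$ is non-increasing on $(-\infty,\doprod]$. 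The right-hand side then coincides with $\Rbar^{\rm iid}(Q_{\Xhat},D_0)$ by the dual expression \eqref{eq:R_dual}.

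The hardest part will be verifying the regularity hypotheses of Gärtner-Ellis at this particular $D_0$: essential smoothness of $\Lambda$ in a neighborhood of $\lambda^\star$, and the requirement that $D_0$ be an exposed point of the Legendre transform so that the upper and lower large deviation bounds match. Both conditions hold precisely because $D_0 \in (\domin,\doprod)$ corresponds to $\lambda^\star$ lying in the open interval $(-\infty,0)$, bounded away from the two degenerate regimes $\lambda^\star \to -\infty$ (distortion approaching $\domin$) and $\lambda^\star \to 0$ (distortion approaching $\doprod$). These are exactly the smoothness properties that underpin the duality in \cite{Dem02}, so once they are invoked the Gärtner-Ellis upper and lower bounds collapse to the common limit $\Rbar^{\rm iid}(Q_{\Xhat},D_0)$.
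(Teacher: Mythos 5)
You should first note that the paper does not contain a proof of this lemma at all: it is imported verbatim from \cite[Thm.~1]{Dem02}, and the appendix only uses it (via Corollary \ref{cor:fixed_rate}) as a black box. So your sketch cannot be compared against an in-paper argument; what it amounts to is a reconstruction of the large-deviations proof underlying \cite{Dem02}, and in spirit it is the right one: condition on $\xv$, treat $d_0^n(\xv,\Xvhat)$ as a sum of independent non-identically distributed terms, use the SLLN on the per-letter cumulant generating functions $\log \EE_Q[e^{\lambda d_0(x_i,\Xhat)}]$ at countably many $\lambda \le 0$ (these are indeed nonpositive, and integrable by Jensen and $\doprod<\infty$), upgrade to locally uniform convergence by convexity, get the upper bound on the probability by a one-sided Chernoff bound over $\lambda\le 0$, and get the matching lower bound by exponential tilting at the $\lambda^*<0$ solving $\Lambda'(\lambda^*)=D_0$, finally identifying $\sup_{\lambda\le 0}\{\lambda D_0-\Lambda(\lambda)\}$ with $\Rbar^{\rm iid}(Q_{\Xhat},D_0)$ via the duality \eqref{eq:R_dual} (Lemma \ref{lem:primal_dual}, itself from \cite{Dem02}).

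The one place where your write-up over-reaches is the verbatim appeal to the G\"artner--Ellis theorem. The textbook statement assumes the limiting scaled log-MGF exists for all $\lambda$ and is finite in a neighborhood of the origin; here only $\doprod<\infty$ is assumed, so $\Lambda(\lambda)$ may be $+\infty$ for every $\lambda>0$ (no positive exponential moment of $d_0$ is guaranteed), and the convergence of $\Lambda_n$ for $\lambda>0$ was never part of your SLLN construction. This does not break the argument, but it means you should not invoke the packaged theorem: the upper bound needs nothing beyond the Chernoff bound with $\lambda\le 0$, and the lower bound should be run as a direct change-of-measure at $\lambda^*$ (or at a slightly more negative tilt), using that $\Lambda$ is finite, differentiable, and the locally uniform limit of $\Lambda_n$ on a full neighborhood of $\lambda^*$ contained in $(-\infty,0)$ -- which is available precisely because $D_0\in(\domin,\doprod)$ forces $\lambda^*$ strictly negative, as you correctly observe in your last paragraph. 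One then only needs a weak law under the tilted product measures (obtainable from convergence of the tilted log-MGFs near zero and differentiability of $\Lambda$ at $\lambda^*$), rather than the exposed-point machinery. With that adjustment, your proposal is a sound stand-alone proof of the cited result; also note that your construction gives a single $\Pi_X^{\infty}$-null set working simultaneously for all $D_0$, which is slightly stronger than what the lemma asserts and is harmless.
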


\begin{lem} \label{lem:primal_dual}
    {\em (Primal-Dual Equivalence \cite[Thm.~2]{Dem02})} For any $\Pi_X$ and $Q_{\Xhat}$ such that $\doprod < \infty$, the expressions \eqref{eq:R_primal} and \eqref{eq:R_dual} are equal for all $D_0 \in (\domin,\doprod)$.
\end{lem}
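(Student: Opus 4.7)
The plan is to establish the primal--dual equivalence via an explicit construction of the optimizer of the primal, namely the exponentially tilted distribution, and then check that its value coincides with both sides. This is the standard Gibbs/Lagrangian duality argument, adapted to the general-alphabet setting used in \cite{Dem02}, and the structure matches the definition of $\Ptilde^*$ already appearing in \eqref{eq:Pstar_IID} for $\lambda=\lambda^*$.

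First I would show that the cumulant-type functional $\Lambda(\lambda) = \EE_{\Pi}\big[\log\EE_{Q}[e^{\lambda d_0(X,\Xhat)}]\big]$ is finite and convex on $(-\infty,0]$ (convexity of the inner log-MGF follows from H\"older's inequality, and finiteness at $\lambda=0$ reduces to $\Lambda(0)=0$; for $\lambda<0$ the integrand is bounded above by $1$ since $d_0\ge 0$). Differentiating under the integral, which is justified by dominated convergence using $|d_0 e^{\lambda d_0}|\le (1/|\lambda|)e$ for $\lambda<0$, gives $\Lambda'(\lambda)=\EE_{\Ptilde_{\lambda}}[d_0(X,\Xhat)]$, where $\Ptilde_{\lambda}$ is the tilted law built from \eqref{eq:Pstar_IID}. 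Monotone convergence then yields the boundary values $\Lambda'(0^-)=\doprod$ and $\lim_{\lambda\to-\infty}\Lambda'(\lambda)=\domin$, so for every $D_0\in(\domin,\doprod)$ the equation $\Lambda'(\lambda)=D_0$ has a unique root $\lambda^*\le 0$.

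Next I would take $\Ptilde^*=\Ptilde_{\lambda^*}$ and verify that it is feasible for the primal in \eqref{eq:R_primal}: its $X$-marginal is $\Pi_X$ by construction (the normalizing factor $\EE_Q[e^{\lambda^* d_0(x,\Xhat)}]$ is exactly what is needed to cancel the $\Xhat$-integral for each $x$), and $\EE_{\Ptilde^*}[d_0]=\Lambda'(\lambda^*)=D_0$. A direct computation of the Radon--Nikodym derivative gives
\begin{equation}
    D(\Ptilde^*\|\Pi_X\times Q_{\Xhat}) \;=\; \lambda^* D_0 - \Lambda(\lambda^*).
\end{equation}
To see that this is the primal minimum, for any competing $\Ptilde$ with $\Ptilde_X=\Pi_X$ and $\EE_{\Ptilde}[d_0]\le D_0$ I would apply the chain-rule identity
\begin{equation}
    D(\Ptilde\|\Pi_X\times Q_{\Xhat}) \;=\; D(\Ptilde\|\Ptilde^*) + \lambda^*\EE_{\Ptilde}[d_0(X,\Xhat)] - \Lambda(\lambda^*),
\end{equation}
which follows from substituting the explicit form \eqref{eq:Pstar_IID} of $\Ptilde^*$. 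Since $D(\Ptilde\|\Ptilde^*)\ge 0$ and $\lambda^*\le 0$ with $\EE_{\Ptilde}[d_0]\le D_0$, the right-hand side is at least $\lambda^* D_0-\Lambda(\lambda^*)$, as needed.

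Finally, for the dual, convexity of $\Lambda$ makes $\lambda\mapsto \lambda D_0-\Lambda(\lambda)$ concave on $(-\infty,0]$; its derivative $D_0-\Lambda'(\lambda)$ vanishes precisely at $\lambda^*$, so $\lambda^*$ attains the supremum in \eqref{eq:R_dual} with value $\lambda^* D_0-\Lambda(\lambda^*)$, matching the primal. I expect the main technical obstacles to be (i) justifying differentiation under the integral and the boundary limits for $\Lambda'$ in the general-alphabet setting where $d_0$ may be unbounded, which may require imposing or verifying finiteness assumptions beyond $\doprod<\infty$ on a neighborhood of $\lambda=0$, and (ii) handling the measurability and Radon--Nikodym manipulations carefully, since $\Ptilde^*$ is defined on a Polish space rather than by mass or density functions; both issues are treated in \cite{Dem02}, and I would cite those arguments directly rather than reproduce them.
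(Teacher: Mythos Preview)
Your outline is the standard Gibbs/exponential-tilting argument and is essentially correct, but the paper does not prove Lemma~\ref{lem:primal_dual} at all: it is listed among the ``auxiliary results from the works \cite{Dem02,Kon06}'' in Appendix~\ref{app:continuous} and is simply cited as \cite[Thm.~2]{Dem02} without proof. So there is nothing to compare against; your proposal is in fact a reconstruction of the proof in \cite{Dem02}, and you already anticipate citing that paper for the technical steps (differentiation under the integral, boundary behavior of $\Lambda'$, Radon--Nikodym manipulations on Polish spaces).
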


Noting that $\Rbar^{\rm iid}(Q_{\Xhat},\cdot)$ and $\Dobar^{\rm iid}(Q_{\Xhat},\cdot)$ (see \eqref{matchedMin0}) are functional inverses of each other, the following result readily follows from Lemma \ref{lem:asymp_ball}; recall that $R_{\max}$ is defined in \eqref{eq:R_max}.

\begin{cor} \label{cor:fixed_rate}
    {\em (Achievable Matched Distortion for a Fixed Rate)} For any source $\Xv \sim \Pi_X^n$ and auxiliary distribution $Q_{\Xhat}$ such that $\doprod < \infty$, under the i.i.d.~ensemble with rate $R \in (0,R_{\max})$, it holds with probability approaching one that the codeword $\Xvhat^*$ with the smallest $d_0$-distortion satisfies $\frac{1}{n}d_0(\Xv,\Xvhat^*) \to \Dobar^{\rm iid}(Q_{\Xhat},R)$ as $n \to \infty$. 
\end{cor}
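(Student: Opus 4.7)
The plan is to derive Corollary \ref{cor:fixed_rate} as a direct consequence of Lemma \ref{lem:asymp_ball}, by exploiting the fact that $\Rbar^{\rm iid}(Q_{\Xhat},\cdot)$ and $\Dobar^{\rm iid}(Q_{\Xhat},\cdot)$ are functional inverses on the relevant range. Fix $\epsilon > 0$ and write $D_0 = \Dobar^{\rm iid}(Q_{\Xhat},R)$. Since $R \in (0,R_{\max})$, the value $D_0$ lies strictly inside $(\domin,\doprod)$, so for all sufficiently small $\epsilon$ the perturbed values $D_0^{\pm} = D_0 \pm \epsilon$ also lie in this interval. Using continuity and strict monotonicity of $\Rbar^{\rm iid}(Q_{\Xhat},\cdot)$ on $(\domin,\doprod)$, I would obtain the strict inequalities $\Rbar^{\rm iid}(Q_{\Xhat},D_0^{+}) < R < \Rbar^{\rm iid}(Q_{\Xhat},D_0^{-})$, which drive the two directions of the argument.

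For the upper bound, I would condition on a realization $\xv$ lying in the probability-one set where Lemma \ref{lem:asymp_ball} applies. Writing $p_n = \PP[d_0^n(\xv,\Xvhat) \le nD_0^{+} \mid \Xv = \xv]$, the lemma gives $p_n = e^{-n\Rbar^{\rm iid}(Q_{\Xhat},D_0^{+}) + o(n)}$. Because the codewords are i.i.d., the probability that none of the $M = \lfloor e^{nR} \rfloor$ draws achieves distortion at most $nD_0^{+}$ is at most $(1-p_n)^M \le e^{-M p_n}$, and $M p_n = e^{n(R - \Rbar^{\rm iid}(Q_{\Xhat},D_0^{+})) + o(n)} \to \infty$. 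Consequently, with conditional probability approaching one at least one codeword satisfies $\tfrac{1}{n}d_0^n(\xv,\Xvhat^{(i)}) \le D_0^{+}$, so the same holds for the minimum-distortion codeword $\Xvhat^*$.

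For the matching lower bound, I would apply a union bound: letting $p_n' = \PP[d_0^n(\xv,\Xvhat) \le nD_0^{-} \mid \Xv = \xv]$, Lemma \ref{lem:asymp_ball} yields $p_n' = e^{-n\Rbar^{\rm iid}(Q_{\Xhat},D_0^{-}) + o(n)}$, and hence the conditional probability that some codeword achieves normalized $d_0$-distortion at most $D_0^{-}$ is bounded above by $M p_n' = e^{n(R - \Rbar^{\rm iid}(Q_{\Xhat},D_0^{-})) + o(n)} \to 0$. Combining the two conditional statements yields $\tfrac{1}{n}d_0^n(\xv,\Xvhat^*) \in (D_0 - \epsilon, D_0 + \epsilon)$ with conditional probability tending to one for every $\xv$ in the probability-one set; a routine dominated convergence argument then removes the conditioning and, since $\epsilon > 0$ was arbitrary, gives convergence in probability to $D_0$.

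The main technical obstacle is the strict monotonicity and continuity of $\Rbar^{\rm iid}(Q_{\Xhat},\cdot)$ on $(\domin,\doprod)$, which is what lets us upgrade $\Dobar^{\rm iid}(Q_{\Xhat},R) = D_0$ to the strict exponential separation $\Rbar^{\rm iid}(Q_{\Xhat},D_0^{+}) < R < \Rbar^{\rm iid}(Q_{\Xhat},D_0^{-})$ that powers both directions. This should follow from the convexity properties of $\Rbar^{\rm iid}$ established in \cite{Dem02} together with the observation that on $(\domin,\doprod)$ the function is neither identically zero nor infinite, so that its inverse $\Dobar^{\rm iid}$ is well-defined and the two curves meet transversally. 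A lesser concern is careful bookkeeping of the $\Pi_X^{\infty}$-a.s.~nature of Lemma \ref{lem:asymp_ball}, but this is a standard conditioning argument.
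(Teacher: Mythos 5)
Your proof is correct and takes essentially the same route the paper intends: the paper dispatches Corollary \ref{cor:fixed_rate} with the remark that it ``readily follows'' from Lemma \ref{lem:asymp_ball} together with the functional-inverse relation between $\Rbar^{\rm iid}(Q_{\Xhat},\cdot)$ and $\Dobar^{\rm iid}(Q_{\Xhat},\cdot)$, and your write-up simply supplies the standard details (strict monotonicity giving $\Rbar^{\rm iid}(Q_{\Xhat},D_0+\epsilon) < R < \Rbar^{\rm iid}(Q_{\Xhat},D_0-\epsilon)$, then $(1-p_n)^M \le e^{-Mp_n}$ for existence at $D_0+\epsilon$ and a union bound for non-existence at $D_0-\epsilon$, followed by dominated convergence to remove the conditioning). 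No gaps.
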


We now state a less elementary result concerning the first codeword in an infinitely long codebook whose normalized distortion is no higher than a pre-specified threshold $D_0$; note that for $D_0 \in (\domin,\doprod)$, an infinitely long codebook will always contain such a codeword. 

\begin{lem} \label{lem:empirical}
    {\em (Empirical Distribution for the First Matching Codeword \cite[Thm.~2]{Kon06})}
    Suppose that $\doprod < \infty$, and fix $D_0 \in (\domin,\doprod)$.  Consider an infinite codebook $\Cc = (\Xvhat_1,\Xvhat_2,\dotsc)$ independently drawn from $Q_{\Xhat}^n$, and let $\Phat_n$ be the empirical distribution of $(\Xv,\Xvhat^*)$ with $\Xvhat^*$ being the first codeword in $\Cc$ satisfying $d_0^n(\Xv,\Xvhat) \le n D_0$.  Then, we have the following:
    \begin{itemize}
        \item[(i)] For any measurable set $E$ (not depending on $n$) and any $\delta > 0$, it holds for $\Pi_X^{\infty}$-almost all $\xv$ that
        \begin{equation}
            \PP\big[ \big| \Phat_n(E) - \Ptilde^*_{X\Xhat}(E) \big| > \delta \,|\, \Xv = \xv \big] \to 0, \label{eq:first_cw}
        \end{equation}
        where the convergence to zero as $n \to \infty$ is exponentially fast in $n$, and $\Ptilde^*_{X\Xhat}$ is defined in \eqref{eq:Pstar_IID} with $\lambda^*$ defined according to $D_0$.
        \item[(ii)] It holds with probability one that $\Phat_n$ converges in distribution to $\Ptilde^*_{X\Xhat}$ as $n \to \infty$.
    \end{itemize}
\end{lem}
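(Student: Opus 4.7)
The plan is as follows. Since the codewords in $\Cc$ are drawn i.i.d.~from $Q_{\Xhat}^n$ and the geometric index $N^* = \min\{k : d_0^n(\Xv, \Xvhat_k) \le nD_0\}$ is determined by independent Bernoulli trials, it follows that conditionally on $\Xv = \xv$, the first matching codeword $\Xvhat^*$ has the law of a single draw $\Xvhat \sim Q_{\Xhat}^n$ conditioned on the event $\mathcal{B}_n(\xv) = \{d_0^n(\xv, \Xvhat) \le nD_0\}$. Under this conditional law the coordinates $\Xhat_1, \dots, \Xhat_n$ are no longer independent, but they are coupled only through the single sum constraint $\frac{1}{n}\sum_i d_0(x_i, \Xhat_i) \le D_0$, which is exactly the setup of the Gibbs conditioning principle, with the tilted distribution determined by the Lagrange multiplier $\lambda^*$ of Lemma \ref{lem:dual_IID}.

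For part (i), fix a measurable set $E$ and $\delta > 0$. A change-of-measure bound with an arbitrary $t > 0$ gives
\begin{align}
    \PP\big[\Phat_n(E) \ge \Ptilde^*(E) + \delta \,\big|\, \Xv = \xv\big]
    \le \frac{\EE_{Q^n}\big[ e^{t \sum_i \openone_E(x_i, \Xhat_i)} \openone_{\mathcal{B}_n(\xv)} \big]}{ e^{nt(\Ptilde^*(E) + \delta)}\, \PP_{Q^n}[\mathcal{B}_n(\xv)] },
\end{align}
where the denominator has exponential rate $\Rbar^{\rm iid}(Q_{\Xhat}, D_0)$ by Lemma \ref{lem:asymp_ball}. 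For the numerator, the pointwise bound $\openone_{\mathcal{B}_n(\xv)} \le e^{\lambda^*(d_0^n(\xv, \Xvhat) - nD_0)}$ (valid everywhere since $\lambda^* \le 0$) factorizes across coordinates to yield
\begin{align}
    \EE_{Q^n}\big[ e^{t \sum_i \openone_E(x_i, \Xhat_i)} \openone_{\mathcal{B}_n(\xv)} \big]
    \le e^{-n\lambda^* D_0}\prod_{i=1}^n \EE_{Q}\big[ e^{t \openone_E(x_i, \Xhat) + \lambda^* d_0(x_i, \Xhat)} \big].
\end{align}
Taking $\frac{1}{n}\log$ and applying the strong law of large numbers to the $\Pi_X$-integrable function $g_t(x) = \log \EE_Q[e^{t \openone_E(x, \Xhat) + \lambda^* d_0(x, \Xhat)}]$ gives, $\Pi_X^\infty$-a.s., an asymptotic exponent which coincides with $\Rbar^{\rm iid}(Q_{\Xhat}, D_0)$ at $t = 0$ (by the dual formula \eqref{eq:R_dual} and the defining condition on $\lambda^*$). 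Differentiating at $t = 0$ and using $\Ptilde^*_{\Xhat|X}(\cdot|x) \propto Q_{\Xhat}(\cdot)e^{\lambda^* d_0(x,\cdot)}$ from \eqref{eq:Pstar_IID} shows that the extra $t$-dependent contribution to the rate of decay equals $t\delta + o(t)$, strictly positive for small $t > 0$. Hence the ratio decays exponentially in $n$. The symmetric lower-tail bound $\Phat_n(E) \le \Ptilde^*(E) - \delta$ follows by applying the same argument to $E^c$, which completes (i).

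For part (ii), I will combine (i) with a standard separability argument for weak convergence on the Polish space $\Xc \times \Xchat$. Weak convergence is characterized by convergence of $\int f\,{\rm d}\Phat_n$ to $\int f\,{\rm d}\Ptilde^*$ for every $f$ in a countable convergence-determining class $\mathcal{F}$ of bounded continuous functions; each such $f$ can be uniformly approximated by finite simple functions whose $\Phat_n$-integrals reduce to finite sums of the form $\sum_j c_j \Phat_n(E_j)$. The exponential decay in (i) is $\Pi_X^\infty$-a.s.~summable in $n$, so the Borel--Cantelli lemma yields $\Phat_n(E_j) \to \Ptilde^*(E_j)$ almost surely for each $E_j$, and hence $\int f\,{\rm d}\Phat_n \to \int f\,{\rm d}\Ptilde^*$ almost surely and simultaneously across all $f \in \mathcal{F}$, giving the claimed a.s.~weak convergence.

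The main obstacle is the general-alphabet regime: in the finite-alphabet case the whole argument collapses to counting joint types and invoking Sanov's theorem, but for Polish alphabets I must separately verify that the cumulant generating functions $g_t(x)$ are finite and $\Pi_X$-integrable (which follows from $\doprod < \infty$, $\lambda^* \le 0$, and boundedness of $\openone_E$); that the large-deviations minimizer $\Ptilde^*$ is unique (which uses the strict convexity analysis of \cite{Dem02} and is what pins down the exponent gap above); and that the simultaneous a.s.~weak-convergence conclusion in (ii) can be bootstrapped from the pointwise-in-$E$ statement in (i) via a countable convergence-determining class, since without countability one cannot take a union over all measurable $E$.
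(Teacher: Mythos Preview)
Your proposal is correct and follows essentially the same approach that the paper describes. The paper does not give its own proof of this lemma; it cites \cite[Thm.~2]{Kon06} directly, notes that the joint-empirical version follows from the same argument used there for the $\Xhat$-marginal, and in Appendix~\ref{app:large_dev} outlines that very argument as a conditional large-deviations analysis of the pair $\big(\frac{1}{n}d_0^n(\xv,\Xvhat),\Phat_n(E)\big)$ via the ratio $\PP[\,\cdot\, \cap \mathcal{B}_n]/\PP[\mathcal{B}_n]$, followed by Borel--Cantelli for part~(ii). Your Chernoff/tilting bound with parameter $\lambda^*$, the SLLN step on $g_t(x)$, and the derivative-at-$t=0$ computation giving slope $-\delta$ are exactly a concrete instantiation of that scheme, and your countable convergence-determining class argument for~(ii) is the standard way to pass from per-$E$ exponential decay to almost-sure weak convergence.
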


The formal statement in \cite{Kon06} actually concerns the empirical distribution of $\Xvhat$ alone, instead of the joint empirical distribution of $(\Xv,\Xvhat)$.  However, the proof therein readily gives the form stated above.  Specifically, the proof of \cite[Thm.~3]{Kon06} (a conditional limit theorem from which \cite[Thm.~2]{Kon06} follows almost immediately) shows that a certain asymptotic distribution on $\Xhat$ must equal $\Ptilde^*_{\Xhat}$ by first showing the stronger result that the asymptotic distribution on $(X,\Xhat)$ must equal $\Ptilde^*_{X\Xhat}$.  We obtain Lemma \ref{lem:empirical} by skipping this step and directly using this stronger result.

In our setting, we are not directly interested in probabilities with respect to $\Phat_n$, but instead, the average of $d_1$ with respect to $\Phat_n$ (i.e., $\frac{1}{n}d_1^n(\xv,\xvhat)$).  The convergence result in the second part of Lemma \ref{lem:empirical} implies an analogous convergence result for $\frac{1}{n}d_1^n(\xv,\xvhat)$ in the case that $d_1$ is bounded and continuous, but these requirements rule out standard distortion functions such as $d_1(x,\xhat) = (x-\xhat)^2$.  Fortunately, Lemma \ref{lem:empirical} admits a natural counterpart for averages instead of probabilities without such requirements, which we state as follows.

\begin{lem} \label{lem:empirical2}
    {\em (Empirical Averages for the First Matching Codeword \cite[Thm.~2]{Kon06})}
    Suppose that $\doprod < \infty$ and $\diprod < \infty$, and fix $D_0 \in (\domin,\doprod)$.  Consider an infinite codebook $\Cc = (\Xvhat_1,\Xvhat_2,\dotsc)$ independently drawn from $Q_{\Xhat}^n$, and let $\Phat_n$ be the empirical distribution of $(\Xv,\Xvhat^*)$ with $\Xvhat^*$ being the first codeword in $\Cc$ satisfying $d_0^n(\Xv,\Xvhat) \le n D_0$.  Then, we have the following:
    \begin{itemize}
        \item[(i)] The empirical $d_1$-distortion satisfies for any $\delta > 0$ and $\Pi_X^{\infty}$-almost all $\xv$ that
        \begin{equation}
            \PP\big[ \big| d_1(\Phat_n) - d_1(\Ptilde^*_{X\Xhat}) \big| > \delta \,|\, \Xv = \xv \big] \to 0, \label{eq:first_cw2}
        \end{equation}
        where we use the shorthand $d_1(P) = \EE_{P}[d_1(X,\Xhat)]$, $\Ptilde^*_{X\Xhat}$ is defined in \eqref{eq:Pstar_IID} with $\lambda^*$ defined according to $D_0$, and the convergence to zero as $n \to \infty$ in \eqref{eq:first_cw2} is exponentially fast in $n$.
        \item[(ii)] It holds with probability one that $d_1(\Phat_n) \to d_1(\Ptilde^*_{X\Xhat})$ as $n \to \infty$.
    \end{itemize}
\end{lem}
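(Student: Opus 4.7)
The plan is to deduce Lemma \ref{lem:empirical2} from Lemma \ref{lem:empirical} via a truncation argument that separates the bounded and tail contributions of $d_1$. For a threshold $M>0$ I would decompose
\begin{equation}
    d_1(x,\xhat) = \big(d_1(x,\xhat)\wedge M\big) + \big(d_1(x,\xhat)-M\big)_{+},
\end{equation}
so that $d_1(\Phat_n)-d_1(\Ptilde^*_{X\Xhat})$ splits into a bounded part and a tail part; the strategy is to control each in probability and then send $M\to\infty$.

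For the bounded piece, my plan is a layer-cake discretization: partition $[0,M]$ into $K$ intervals $I_1,\dotsc,I_K$ with representatives $c_k$, set $E_k=\{(x,\xhat):d_1(x,\xhat)\wedge M\in I_k\}$, and approximate $d_1\wedge M$ by the simple function $\sum_{k=1}^{K}c_k\openone_{E_k}$. This approximation has uniform error at most $M/K$ under any probability measure, so $\big|\EE_{\Phat_n}[d_1\wedge M]-\sum_k c_k\Phat_n(E_k)\big|\le M/K$ and likewise for $\Ptilde^*_{X\Xhat}$. Applying Lemma \ref{lem:empirical}(i) to each of the $K$ (fixed) sets $E_k$ then gives $|\Phat_n(E_k)-\Ptilde^*_{X\Xhat}(E_k)|\to 0$ in probability, exponentially fast in $n$, for $\Pi_X^\infty$-almost all $\xv$. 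Choosing $K$ large and then $n$ large makes the bounded-piece contribution fall below any prescribed $\delta/2$.

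For the tail piece, $\EE_{\Ptilde^*}[(d_1-M)_{+}]\to 0$ as $M\to\infty$ by dominated convergence, provided $\EE_{\Ptilde^*}[d_1]<\infty$; this integrability is a mild consequence of the explicit form \eqref{eq:Pstar_IID} of $\Ptilde^*_{X\Xhat}$ together with the hypotheses $\diprod<\infty$ and $\doprod<\infty$. The harder task is to bound $\EE_{\Phat_n}[(d_1-M)_{+}]=\frac{1}{n}\sum_{i=1}^{n}(d_1(x_i,\xhat_i^*)-M)_{+}$ uniformly in $n$. My plan here is to exploit the fact that, conditional on $\Xv=\xv$, the selected codeword $\Xvhat^*$ is distributed as a draw from $Q_{\Xhat}^n$ conditioned on the event $\{d_0^n(\xv,\Xvhat)\le nD_0\}$, since the codewords are i.i.d.~and so the first-to-match rule does not bias the distribution of $\Xvhat^*$ itself. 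The Sanov/large-deviations machinery underlying Lemma \ref{lem:empirical} and the conditional limit theorems of \cite{Kon06} then delivers exponential concentration of $\Phat_n$ around $\Ptilde^*_{X\Xhat}$, which combined with a secondary layer-cake argument should give $\EE_{\Phat_n}[(d_1-M)_{+}]\le\EE_{\Ptilde^*}[(d_1-M)_{+}]+o(1)$ in probability.

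Part (ii) will then follow from part (i) by a standard Borel--Cantelli argument: since the probability in \eqref{eq:first_cw2} decays exponentially in $n$ for every fixed $\delta>0$ and for $\Pi_X^\infty$-almost all $\xv$, summability along any sequence $\delta_k\downarrow 0$ upgrades convergence in probability to almost-sure convergence. The main obstacle will be making the tail bound above rigorous: the set-wise convergence provided by Lemma \ref{lem:empirical}(i) is not by itself sufficient to pass expectations of unbounded functions, so the proof must extract from the concentration arguments of \cite{Kon06} a quantitative tail estimate on $\EE_{\Phat_n}[d_1\openone\{d_1>M\}]$ that vanishes uniformly in $n$ as $M\to\infty$.
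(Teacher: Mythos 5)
Your truncation-plus-layer-cake reduction handles the bounded piece correctly (Lemma \ref{lem:empirical}(i) applies to each of the finitely many fixed sets $E_k$, and a finite union of exponentially decaying events still decays exponentially), but the argument has a genuine gap exactly where you flag it: the tail term $\EE_{\Phat_n}\big[(d_1-M)_{+}\big]$. Lemma \ref{lem:empirical} gives set-wise convergence with an exponential rate that is \emph{not} uniform over sets, so no ``secondary layer-cake'' over the (unboundedly many, $M$-dependent) level sets of $d_1$ above $M$ can be summed to a uniform-in-$n$ tail bound; and the observation that $\Xvhat^*$ is distributed as $Q_{\Xhat}^n$ conditioned on $\{d_0^n(\xv,\cdot)\le nD_0\}$, while correct, does not by itself control the conditional expectation of an unbounded functional. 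To make that step rigorous you would need a conditional large-deviations bound for the empirical average of $d_1$ itself, i.e.\ control of the joint log-moment generating function of $\big(d_0(\Phat_n),d_1(\Phat_n)\big)$ --- but at that point the truncation is superfluous. This is precisely what the paper does instead: it bypasses Lemma \ref{lem:empirical} and repeats Kontoyiannis's two-dimensional conditional large-deviations analysis directly for the pair $\big(d_0(\Phat_n),d_1(\Phat_n)\big)$, with $\tilde{\Lambda}(\blambda)=\EE_{\Pi}\big[\log\EE_{Q}[e^{\lambda_0 d_0(X,\Xhat)+\lambda_1 d_1(X,\Xhat)}]\big]$ on $\blambda\in(-\infty,0]^2$, where the hypotheses $\doprod<\infty$ and $\diprod<\infty$ enter exactly to guarantee $\lambda_0\doprod+\lambda_1\diprod\le\tilde{\Lambda}(\blambda)\le 0$, so that the event $\{d_1(\Phat_n)\le d_1(\Ptilde^*_{X\Xhat})-\delta\}$ forces a strictly larger exponent in the numerator of the conditional probability than in the denominator.

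Two further points. First, your claim that $\EE_{\Ptilde^*}[d_1]<\infty$ is a ``mild consequence'' of \eqref{eq:Pstar_IID} together with $\diprod<\infty$, $\doprod<\infty$ is not immediate: the Radon--Nikodym derivative is bounded by $1/\EE_{Q}[e^{\lambda^* d_0(x,\Xhat)}]$, and this denominator can be arbitrarily small for atypical $x$ (Jensen only gives $\EE_{Q}[e^{\lambda^* d_0(x,\Xhat)}]\ge e^{\lambda^*\EE_Q[d_0(x,\Xhat)]}$, which blows up in the bound when $\EE_Q[d_0(x,\Xhat)]$ is large), so this integrability also needs the joint-MGF machinery rather than a pointwise bound. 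Second, part (ii) requires the \emph{exponential} rate in part (i) for Borel--Cantelli; your bounded piece inherits it, but the rate for the tail piece is exactly the unproved estimate above, so the exponential-decay claim in (i) is also open under your plan.
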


The proof of this result follows the exact same steps as those of Lemma \ref{lem:empirical} (given in \cite{Kon06}) based on large deviations theory.  Accordingly, we do not repeat the analysis here, but we provide the main ideas and differences in Appendix \ref{app:large_dev}.

\subsection{Proof of Lemma \ref{lem:dual_IID} (Reformulation for the i.i.d.~Ensemble)}

Recall the two equivalent expressions for $\Rbar^{\rm iid}(Q_{\Xhat},D_0)$ in \eqref{eq:R_primal} and \eqref{eq:R_dual}.  As stated following \eqref{eq:R_dual}, it is known from \cite{Dem02} that the supremum in \eqref{eq:R_dual} is uniquely attained by some $\lambda^*$ satisfying $\Lambda'(\Lambda) = D_0$ when $D_0 \in (\domin,\doprod)$, where $\Lambda(\lambda) = \EE_{\Pi}\big[ \log \EE_{Q}[e^{\lambda d_0(X,\Xhat)}] \big]$.  Moreover, it is shown in \cite{Dem02} that the unique corresponding minimizer in \eqref{eq:R_primal} is $\Ptilde^*_{X\Xhat}$, defined in \eqref{eq:Pstar_IID}.  In view of these known results, it is then natural that the formulation in Lemma \ref{lem:iid} reduces to that in Lemma \ref{lem:dual_IID}.  We now proceed by showing this formally.

While we build on \cite{Dem02}, a key issue that we need to address is that their results concern the rate after specifying $D_0$, whereas in \eqref{eq:setF_IID} we have specified $R$ but not $D_0$.  Recall that we are interested in solutions to 
\begin{equation}
    \min_{ \substack{ \Ptilde_{X\Xhat} \,:\, \Ptilde_X = \Pi_X, \\ D(\Ptilde_{X\Xhat} \| \Pi_X \times Q_{\Xhat}) \le R } } \EE_{\Ptilde}[d_0(X,\Xhat)] \label{eq:repeated}
\end{equation}
Notice that this is precisely the definition of $\Dobar^{\rm iid}(Q_{\Xhat},R)$, so the resulting minimum value is trivially $D_0 = \Dobar^{\rm iid}(Q_{\Xhat},R)$.  It remains to show that the {\em minimizer} is $\Ptilde^*_{X\Xhat}$ with $\lambda^*$ chosen according to this value of $D_0$, and that this minimizer is unique.

As outlined above, we know from \cite{Dem02} that $\Ptilde^*_{X\Xhat}$ is the unique minimizer of the problem
\begin{equation}
    \min_{ \substack{ \Ptilde_{X\Xhat} \,:\, \Ptilde_X = \Pi_X, \\ \EE_{\Ptilde}[d_0(X,\Xhat)] \le D_0} } D(\Ptilde_{X\Xhat} \| \Pi_X \times Q_{\Xhat}).  \label{eq:repeated2}
\end{equation}
Moreover, the minimum value is $R$, since the assumption $R \in (0,R_{\max})$ implies that we are in the regime where the inequality constraint in \eqref{eq:repeated} is active (this was also noted in \cite{Dem02}).  To show that $\Ptilde^*_{X\Xhat}$ also minimizes \eqref{eq:repeated}, suppose for contradiction that there were to exist a feasible distribution $\Ptilde'_{X\Xhat}$ in \eqref{eq:repeated} with value $\EE_{\Ptilde'}[d_0(X,\Xhat)] < D_0$.  Then, for $\epsilon$ sufficiently small, it must hold that the distribution $\Ptilde''_{X\Xhat} = (1-\epsilon) \Ptilde'_{X\Xhat} + \epsilon ( \Pi_X \times Q_{\Xhat} )$ satisfies $\EE_{\Ptilde''}[d_0(X,\Xhat)] \le D_0$.  But then, we have
\begin{align}
    D( \Ptilde''_{X\Xhat} \| \Pi_X \times Q_{\Xhat} )
    &= D\Big( (1-\epsilon) \Ptilde'_{X\Xhat} + \epsilon ( \Pi_X \times Q_{\Xhat} ) \big\|  (1-\epsilon) (\Pi_X \times Q_{\Xhat}) + \epsilon (\Pi_X \times Q_{\Xhat})  \Big) \\
    &\le (1-\epsilon) D( \Ptilde'_{X\Xhat} \| \Pi_X \times Q_{\Xhat} ) \le (1-\epsilon)R,
\end{align}
where we applied the convexity of KL divergence and $\epsilon D(\Pi_X \times Q_{\Xhat} \| \Pi_X \times Q_{\Xhat}) = 0$.  This means that $\Ptilde''_{X\Xhat}$ is feasible in \eqref{eq:repeated2} but yields a smaller objective than the optimal value of $R$, which is a contradiction, and establishes the desired claim.

For uniqueness, recall that we are in the regime where the inequality constraint in \eqref{eq:repeated} is active, and note that if there is another optimal distribution $\Ptilde^{\dagger}_{X\Xhat}$ distinct from $\Ptilde^*_{X\Xhat}$, then they must both yield the same values of $D( \Ptilde'_{X\Xhat} \| \Pi_X \times Q_{\Xhat} )$ and $\EE_{\Ptilde}[d_0(X,\Xhat)]$.  However, this would mean that $\Ptilde^{\dagger}_{X\Xhat}$ is also optimal in \eqref{eq:repeated2}, contradicting the uniqueness known from \cite{Dem02}.

Having established that $\Ptilde^*_{X\Xhat}$ is uniquely optimal in \eqref{eq:setF_IID}, the lemma follows.

\subsection{Proof of Theorem \ref{thm:continuous}}

We would like to use Lemma \ref{lem:empirical2} with $D_0 = \Dobar^{\rm iid}(Q_{\Xhat},R)$ to conclude that $\Phat_n$ behaves similarly to $\Ptilde_{X\Xhat}^*$, implying that the $d_1$-distortion incurred is $\EE_{\Ptilde^*}[d_1(X,\Xhat)]$.  The difficulty is that we are in the fixed-rate setting, whereas Lemma \ref{lem:empirical2} holds for the fixed-distortion setting.  Since there is a continuum of possible $D_0$ values, we cannot directly deduce that Lemma \ref{lem:empirical2} holds for the actual $d_0$-distortion value incurred by the source and the chosen codeword.  To circumvent this, we extend the almost-sure event to a countably infinite set, as well as exploiting our tie-breaking strategy that chooses the first index.


We first state the following corollary of Lemma \ref{lem:empirical2}.

\begin{cor} \label{cor:unif_d0}
    {\em (Uniform Guarantee over Rational $D_0$ Values)}
    Fix any $D_0^{\infty} \in (\domin,\doprod)$, and let $\delta > 0$ be sufficiently small such that $[D_0^{\infty} - \delta, D_0^{\infty} + \delta] \subseteq (\domin,\doprod)$.  Consider an infinite codebook $\Cc = (\Xvhat_1,\Xvhat_2,\dotsc)$ independently drawn from $Q_{\Xhat}^n$.  Let $\Ptilde^{(D_0)}_{X\Xhat}$ be as defined in \eqref{eq:Pstar_IID} with an explicit dependence on $D_0$, and let $\Phat_n^{(D_0)}$ be the empirical distribution of $(\Xv,\Xvhat^{*}_{(D_0)})$ with $\Xvhat^{*}_{(D_0)}$ being the first codeword in $\Cc$ satisfying $d_0^n(\Xv,\Xvhat) \le n D_0$.  Then, with probability one, it holds simultaneously for all rational values of $D_0 \in [D_0^{\infty} - \delta, D_0^{\infty} + \delta]$ that $d_1(\Phat_n^{(D_0)}) \to d_1(\Ptilde^{(D_0)}_{X\Xhat})$ as $n \to \infty$.
\end{cor}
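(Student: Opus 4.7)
The plan is the standard countable-intersection trick: apply Lemma \ref{lem:empirical2}(ii) separately for each rational $D_0$ in the interval, and then intersect the resulting probability-one events—one per rational—to obtain a single probability-one event on which the convergence holds simultaneously.

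First, I would fix an arbitrary rational $D_0 \in [D_0^\infty - \delta, D_0^\infty + \delta]$. Because $\delta$ is chosen so that this interval lies inside $(\domin,\doprod)$, and because $\doprod < \infty$ and $\diprod < \infty$ are in force (these are the standing hypotheses of Theorem \ref{thm:continuous}, from which this corollary is invoked), Lemma \ref{lem:empirical2}(ii) applies to this $D_0$ and provides an event $E_{D_0}$, on the common probability space carrying $(X_1^\infty,\Cc)$, with $\PP[E_{D_0}] = 1$ and on which $d_1(\Phat_n^{(D_0)}) \to d_1(\Ptilde^{(D_0)}_{X\Xhat})$ as $n\to\infty$.

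Next, I would set $\mathcal{D} = \mathbb{Q} \cap [D_0^\infty - \delta, D_0^\infty + \delta]$ and define $E = \bigcap_{D_0 \in \mathcal{D}} E_{D_0}$. Since $\mathcal{D}$ is countable, countable subadditivity gives $\PP[E^c] \le \sum_{D_0 \in \mathcal{D}} \PP[E_{D_0}^c] = 0$, so $\PP[E] = 1$. On $E$, the convergence $d_1(\Phat_n^{(D_0)}) \to d_1(\Ptilde^{(D_0)}_{X\Xhat})$ holds simultaneously for every rational $D_0 \in [D_0^\infty - \delta, D_0^\infty + \delta]$, which is exactly what the corollary claims.

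The closest thing to an obstacle is simply the bookkeeping: one must confirm that all the events $E_{D_0}$ live on the same underlying probability space—so that the countable intersection is meaningful—and that the source $\Xv$ and the codebook $\Cc$ are shared across all $D_0$, with only the selection rule defining $\Xvhat^{*}_{(D_0)}$ (the first codeword with $d_0^n(\Xv,\Xvhat) \le nD_0$) varying with $D_0$. Both points are transparent from the construction. Notably, no uniform-in-$D_0$ convergence from Lemma \ref{lem:empirical2} is required; the restriction to rational $D_0$ makes the plain countable-intersection argument sufficient, and density of the rationals will be exploited later (in the proof of Theorem \ref{thm:continuous}) to transfer the guarantee to the actual random $d_0$-distortion value achieved by the chosen codeword.
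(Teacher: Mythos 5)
Your proposal is correct and matches the paper's own proof: both invoke Lemma \ref{lem:empirical2}(ii) for each rational $D_0$ in the interval (which lies in $(\domin,\doprod)$ by the choice of $\delta$) and then take a countable intersection of almost-sure events. The extra bookkeeping you mention (common probability space, shared $\Xv$ and $\Cc$, only the selection rule varying) is implicit in the paper and harmless to state.
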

\begin{proof}
    Observe that we restrict $D_0$ to ensure that $D_0 \in (\domin,\doprod)$.  For any such value of $D_0$, Lemma \ref{lem:empirical2} gives the desired convergence of $d_1(\Phat_n^{(D_0)})$ to $d_1(\Ptilde^{(D_0)}_{X\Xhat})$ almost surely.  We can then take the intersection over countably many values of $D_0$ (since we are restricting to rational numbers), and the corollary follows since a countable intersection of almost-sure events still holds almost surely.
\end{proof}

Let $D^*_{0,n} = \frac{1}{n} d_0^n(\Xv,\Xvhat^*)$ be the $d_0$-distortion incurred at block length $n$.  We know from Corollary \ref{cor:fixed_rate} that with probability $1-o(1)$, it holds that $D^*_{0,n} = D_0^{\infty} \cdot (1+o(1))$, where $D_0^{\infty} = \Dbar_0^{\rm iid}(Q_{\Xhat},R)$.  Hence, for any $\delta > 0$, we can assume that $D^*_{0,n} \in [D_0^{\infty} - \delta, D_0^{\infty} + \delta]$ for sufficiently large $n$.  In addition, we note that this choice of $D_0^{\infty}$ satisfies the condition $D_0^{\infty} \in (\domin,\doprod)$ in Corollary \ref{cor:unif_d0}, due to the fact that $R \in (0,R_{\max})$.  Thus, Corollary \ref{cor:unif_d0} reveals that $d_1(\Phat_n^{(D_0)}) \to d_1(\Ptilde^{(D_0)}_{X\Xhat})$ for all rational values of $D_0 \in [D_0^{\infty} - \delta, D_0^{\infty} + \delta]$, with probability one.

With $D^*_{0,n}$ being the minimum $d_0$-distortion induced by the codebook, we let $D^{**}_{0,n}$ be a rational number arbitrarily close to $D^*_{0,n}$, and we require it to be sufficiently close such that $D^{**}_{0,n}$ is below all the other distortion levels (not equal to $D^*_{0,n}$) induced by the codewords.  Since the rationals are dense, this is always possible.  There may potentially be multiple codewords with distortion $D^*_{0,n}$, but even if this is the case, our assumed tie breaking strategy (namely, choosing the smallest index) implies that $\Xvhat^*$ must be the first codeword having distortion at most $D^{**}_{0,n}$.  By the conclusion of the previous paragraph, it follows that the $d_1$-distortion incurred is $d_1(\Ptilde^{(D_{0,n}^{**})}_{X\Xhat}) \cdot (1+o(1))$.


To deduce the distortion level in \eqref{eq:D_iid_dual} (with $D_0$ replaced by $D_0^{\infty}$), it remains to argue that $d_1(\Ptilde^{(D_{0,n}^{**})}_{X\Xhat}) \cdot (1+o(1))$ reduces to $d_1(\Ptilde_{X\Xhat}^*) \cdot (1+o(1))$, where in the latter expression, $\Ptilde_{X\Xhat}^*$ is defined with respect to $D_0^{\infty}$.  To see this, recall that we already established $D^*_{0,n} = D_0^{\infty} \cdot (1+o(1))$ (with probability approaching one), which in turn implies that the rational rounded version can also be chosen to satisfy $D^{**}_{0,n} = D_0^{\infty} \cdot (1+o(1))$.    It is shown in \cite[Sec.~II-B]{Dem02} that the function $\Lambda(\cdot)$ defining $\lambda^*$ in \eqref{eq:Pstar_IID} satisfies $\Lambda''(\lambda) > 0$ for all $\lambda < 0$.  Thus, with $\lambda^*$ being the solution to $\Lambda'(\lambda) = D_0$ with $D_0 \in (\domin,\doprod)$, we have that $\lambda^*$ is a continuous function of $D_0$.  Thus, with $D^{**}_{0,n} \to D_0^{\infty}$, the corresponding $\lambda^*$ value similarly approaches the value corresponding to $D_0^{\infty}$.  

We now argue that the right-hand side of \eqref{eq:Pstar_IID} is continuous with respect to $\lambda^*$ for $\Pi_X$-almost all $x$ and any $\xhat$.  Observe that this ratio is non-negative and is the reciprocal of $\EE_Q[e^{\lambda^*(d_0(x,\Xhat)-d_0(x,\xhat))}]$, which is a convex function of $\lambda^*$ and is therefore continuous in the region where it is finite.  Thus, since the numerator in  \eqref{eq:Pstar_IID} is a fixed positive constant, it only remains to handle the possibility of dividing by zero or infinity.  However, we claim that the denominator $\EE_{Q}[e^{\lambda^* d_0(x,\Xhat)}]$ is at most one, and is bounded away from zero for $\Pi_X$-almost all $x$.  The former property follows from $\lambda^* \le 0$ and $d_0(\cdot,\cdot) \ge 0$, and the latter follows from $\EE_{\Pi \times Q}[e^{\lambda^* d_0(X,\Xhat)}] \ge e^{\lambda^* \doprod}$ via Jensen's inequality.  The desired continuity with respect to $\lambda^*$ readily follows.

This pointwise convergence for the Radon-Nikodym derivative implies that the distortion $d_1(\Ptilde^{(D_{0,n}^{**})}_{X\Xhat}) \cdot (1+o(1))$ reduces to $d_1(\Ptilde_{X\Xhat}^*) \cdot (1+o(1)) = \EE_{\Ptilde^*}[d_1(X,\Xhat)] \cdot (1+o(1))$, where $\Ptilde_{X\Xhat}^*$ is defined with respect to $D_0^{\infty}$ as stated above. This completes the proof of Theorem \ref{thm:continuous}; we derived the form given in \eqref{eq:D_iid_dual}--\eqref{eq:Pstar_IID}, but by Lemma \ref{lem:dual_IID}, the form \eqref{eq:D1bar_IID}--\eqref{eq:setF_IID} is equivalent.

\subsection{Discussion on Lemma \ref{lem:empirical2} (Empirical Averages for the First Matching Codeword)} \label{app:large_dev}

The proof of Lemma \ref{lem:empirical} in \cite{Kon06} is based on applying a conditional large-deviations analysis to the two-dimensional vector $\big(\frac{1}{n}d_0^n(\xv,\Xvhat), \Phat_n(E)\big)$, where $\Xvhat \sim Q_{\Xhat}^n$ and the conditioning is on the event $d_0^n(\xv,\Xvhat) \le nD_0$ (and $\Xv = \xv$).  The conditioning event $d_0^n(\xv,\Xvhat) \le nD_0$ comes from $\Xvhat$ being defined in Lemma \ref{lem:empirical} to be the first codeword satisfying this property. 

Here we outline how a similar approach can be taken for Lemma \ref{lem:empirical2}, considering the two-dimensional vector $\big(\frac{1}{n}d_0^n(\xv,\Xvhat),\frac{1}{n}d_1^n(\xv,\Xvhat)\big)$, written as $(d_0(\Phat_n),d_1(\Phat_n))$ for short.  According to the conditioning event $d_0(\Phat_n) \le D_0$ and the definition of conditional probability, we are interested in the ratio
\begin{equation}
    \frac{ \PP\big[ d_0(\Phat_n) \le D_0 \text{ and } d_1(\Phat_n) \le d_1(\Ptilde^*_{X\Xhat}) - \delta \big] }{ \PP[ d_0(\Phat_n) \le D_0 ] }, \label{eq:ld_ratio}
\end{equation}
as well as the analog with the second event replaced by $d_1(\Phat_n) \ge d_1(\Ptilde^*_{X\Xhat}) + \delta$.  The two are handled similarly, so we focus only on \eqref{eq:ld_ratio}. The denominator is a standard quantity characterized in \cite{Dem02}, so the main challenge is in upper bounding the numerator.

The relevant log-moment generating function is
\begin{equation}
    \tilde{\Lambda}_n(\blambda) = \log \EE_{Q_{\Xhat}^n}\big[ e^{\lambda_0 d_0(\Phat_n) + \lambda_1 d_1(\Phat_n)} \big],
\end{equation}
where $\blambda = (\lambda_0,\lambda_1) \in (-\infty,0]^2$, and by the ergodic theorem, $\frac{1}{n}\Lambda_n(n \blambda)$ converges almost surely to the following for $\Pi_X^{\infty}$-almost all $\xv$ (analogous to \cite[Eq.~(22)]{Kon06}):
\begin{equation}
    \tilde{\Lambda}(\blambda) = \EE_{\Pi}[ \log \EE_Q[ e^{\lambda_0 d_0(X,\Xhat) + \lambda_1 d_1(X,\Xhat)} ]].
\end{equation}
Since $d_0$ and $d_1$ are non-negative and $\lambda \in (-\infty,0]^2$, we have $\tilde{\Lambda}(\blambda) \le 0$.  On the other hand, by Jensen's inequality, $\tilde{\Lambda}(\blambda) \ge \lambda_0 \doprod + \lambda_1 \diprod$; thus, the assumptions $\doprod < \infty$ and $\diprod < \infty$ are sufficient to ensure finite $\tilde{\Lambda}(\blambda)$.  This is the only assumption made in Lemma \ref{lem:empirical2} that is absent in Lemma \ref{lem:empirical}.

Rather than giving the technical details of the large-deviations analysis, we provide the main intuition.  It is known that it weren't for the event $d_1(\Phat_n) \le d_1(\Ptilde^*_{X\Xhat}) - \delta$ in the numerator of \eqref{eq:ld_ratio}, the limiting distribution of $\Phat_n$ would be $\Ptilde^*_{X\Xhat}$ given in \eqref{eq:Pstar_IID} \cite{Dem02,Kon06}.  The idea is that {\em the event $d_1(\Phat_n) \le d_1(\Ptilde^*_{X\Xhat}) - \delta$ makes this distribution infeasible}, and accordingly, $\Phat_n$ is forced towards a slightly different distribution.  This makes the exponent of the numerator in \eqref{eq:ld_ratio} strictly larger than that of the denominator, giving the desired exponential decay to zero in the first part of Lemma \ref{lem:empirical2}.

By the same argument as that used for Lemma \ref{lem:empirical} (see \cite{Kon06}), the second part of Lemma \ref{lem:empirical2} follows easily from the first: Summing the left-hand side of \eqref{eq:first_cw2} over all $n$ gives a finite value due to the exponential decay, and then the Borel-Cantelli lemma implies that with probability one, the event $\big| d_1(\Phat_n) - d_1(\Ptilde^*_{X\Xhat}) \big| > \delta$ occurs for at most finitely many $n$.  Since $\delta$ is arbitrarily small, it follows that $d_1(\Phat_n) \to d_1(\Ptilde^*_{X\Xhat})$ almost surely, as desired.

\section{Example on the Impact of Tie-Breaking} \label{app:example_tie}
\vspace*{-0.5ex}

Consider the binary symmetric source with $\Pi_X = \big(\frac{1}{2},\frac{1}{2}\big)$, and let $Q_{\Xhat} = \big(\frac{1}{2},\frac{1}{2}\big)$.  Suppose that $d_1$ is the Hamming distortion, but $d_0(x,\xhat) = \bone\{ x = 0 \cap \xhat = 1 \}$, i.e., only $0 \to 1$ flips are penalized.   For this source with minimum $d_1$-distortion encoding (i.e., the matched case), it known that the optimal rate-distortion trade-off is given as follows (measuring information in bits):
\begin{equation}
    R = 1 - H_2(D_1) \label{eq:ex_matched}
\end{equation}
for $D_1 \in \big[0,\frac{1}{2}\big]$, and moreover, the choice $Q_{\Xhat} = \big(\frac{1}{2},\frac{1}{2}\big)$ is optimal \cite[Sec.~10.3.1]{Cov06}.

We investigate the impact of tie-breaking on the constant-composition ensemble (Lemma \ref{lem:existing}) and the i.i.d.~ensemble (Lemma \ref{lem:iid}), but instead of using the equations in the associated lemmas, we find it more convenient to use direct arguments.  We keep some steps slightly informal to convey the key ideas without being overly technical.

We start with the constant-composition ensemble.  Suppose that $\Xv$ is a typical sequence with half 0s and half 1s (the case of $\frac{1}{2} \pm \delta$ is similar).  Then, since each $\Xvhat$ is a constant-composition codeword according to $Q_{\Xhat} = \big(\frac{1}{2},\frac{1}{2}\big)$, we have that the fraction of indices where $(x,\xhat) = (0,1)$ is the same as the number of indices where $(x,\xhat) = (1,0)$.  This further implies that the minimum $d_0$-distortion rule is the same as the minimum $d_1$-distortion rule -- the latter is simply double the former.  Accordingly, the ``mismatched'' encoding is in fact matched encoding, and combining this with the optimality of $Q_{\Xhat}$ mentioned above, we conclude that we attain the matched  rate-distortion trade-off stated in \eqref{eq:ex_matched}.

For the i.i.d.~ensemble, we split the analysis into several cases:
\begin{itemize}
    \item Suppose that $R \in \big(0, \frac{1}{2})$, and consider a typical $\Xv$ sequence with half 0s and half 1s.  For the length-$\frac{n}{2}$ subsequence corresponding to zeros, a standard property of types \cite[Ch.~2]{Csi11} reveals that the probability of having an $\alpha$ fraction of 0s is roughly $2^{-\frac{n}{2}(1 - H_2(\alpha))}$.  Thus, with $2^{nR}$ codewords, the {\em smallest $\alpha$} (corresponding to minimum $d_0$-distortion encoding) yields
    \begin{equation}
        R = \frac{1}{2}(1 - H_2(\alpha)) \iff H_2(\alpha) = 1-2R.
    \end{equation}
    Moreover, we claim that the resulting $d_1$-distortion incurred is
    \begin{equation}
        D_1 = \frac{\alpha}{2} + \frac{1}{4}.
    \end{equation}
    To see this, first note that the $\frac{\alpha}{2}$ term comes directly from $0 \to 1$ flips from $\Xv$ to $\Xvhat$, with division by two since $\alpha$ is defined with respect to the length-$\frac{n}{2}$ subsequence.  Moreover, this $\alpha$-fraction event is a rare event, and with high probability there will only be few codewords (if not just one) meeting this $\alpha$-distortion level.  Accordingly, since the encoder ignores the entries where $\Xv$ equals one, the corresponding $\Xvhat$ entries will simply exhibit ``typical'' behavior, i.e., roughly half 0s and half 1s, contributing roughly $\frac{1}{4}$ to the $d_1$-distortion.
    \item Suppose that $R \in \big (\frac{1}{2},1\big)$, and again consider a typical $\Xv$ sequence with half 0s and half 1s. Since any particular length-$\frac{n}{2}$ subsequence of $\Xvhat$ has probability $2^{-\frac{n}{2}}$, the condition $R > \frac{1}{2}$ implies that with high probability, the selected codeword according to the minimum $d_0$-distortion rule will achieve $d_0^n(\Xv,\Xvhat) = 0$, i.e., all values match in the subsequence where $\Xv$ is zero.  In fact, there will be {\em exponentially many} candidates meeting this condition, and accordingly, the analysis crucially depends on the tie-breaking strategy:
    \begin{itemize}
        \item Under pessimistic tie breaking, the selected $\Xvhat$ will be the one with the {\em highest $d_1$-distortion}, i.e., having the fewest matches in the subsequence where $\Xv$ is one.  Since roughly a fraction $2^{-\frac{n}{2}}$ of the codewords achieve zero $d_0$-distortion, there are effectively $2^{n(R - \frac{1}{2})}$ sub-codewords to choose from.  By a standard property of types, the probability of one such sub-codeword having a $\beta \in \big[0,\frac{1}{2}\big]$ fraction of matches is roughly $2^{-\frac{n}{2}( 1 - H_2(\beta) )}$, so the one with the fewest matches will yield
        \begin{equation}
            R - \frac{1}{2} = \frac{1}{2}( 1 - H_2(\beta) ) \iff H_2(\beta) = 2 - 2R.
        \end{equation}
        Then, the resulting $d_1$-distortion incurred is
        \begin{equation}
            D_1 = \frac{1-\beta}{2},
        \end{equation}
        where the division by two corresponds to considering the length-$\frac{n}{2}$ subsequence where $\Xv$ equals one.
        \item Under uniformly random tie breaking, with high probability, the selected codeword will be one with typical behavior in the entries where $\Xv$ is one, so the distortion will be $\frac{1}{4}$ (i.e., roughly half flips in the relevant half of the codeword).
    \end{itemize}
\end{itemize}
The resulting rate-distortion curves are shown in Figure \ref{fig:rd_example3}.  Note that the sharp transition at $R = \frac{1}{2}$ is due to a sudden change in what dictates the $d_1$-distortion: For $R$ slightly smaller, the behavior is dictated by the fraction of matches where $\Xv$ is zero (with nearly all matches), whereas for $R$ slightly larger, it is dictated by the fraction of matches where $\Xv$ is one (with roughly half matches).

This example highlights that the pessimistic choice of tie-breaking strategy can in fact have a significant impact in certain scenarios.  We note that the change in behavior at $R = \frac{1}{2}$ corresponds exactly to the fact that $R_{\max}$ given in \eqref{eq:R_max} equals $\frac{1}{2}$ in this example.  For $R \in (0,R_{\max})$ the set $\Pctilde^{\rm iid}$ in Lemma \ref{lem:iid} is a singleton, in concordance with the uniqueness property associated with $\Ptilde^*_{X\Xhat}$ stated following Lemma \ref{lem:dual_IID}.  For $R > R_{\max}$ this is no longer the case, and the tie-breaking strategy plays a significant role.


\begin{figure}[!t]
    \centering
    \includegraphics[width=0.5\textwidth]{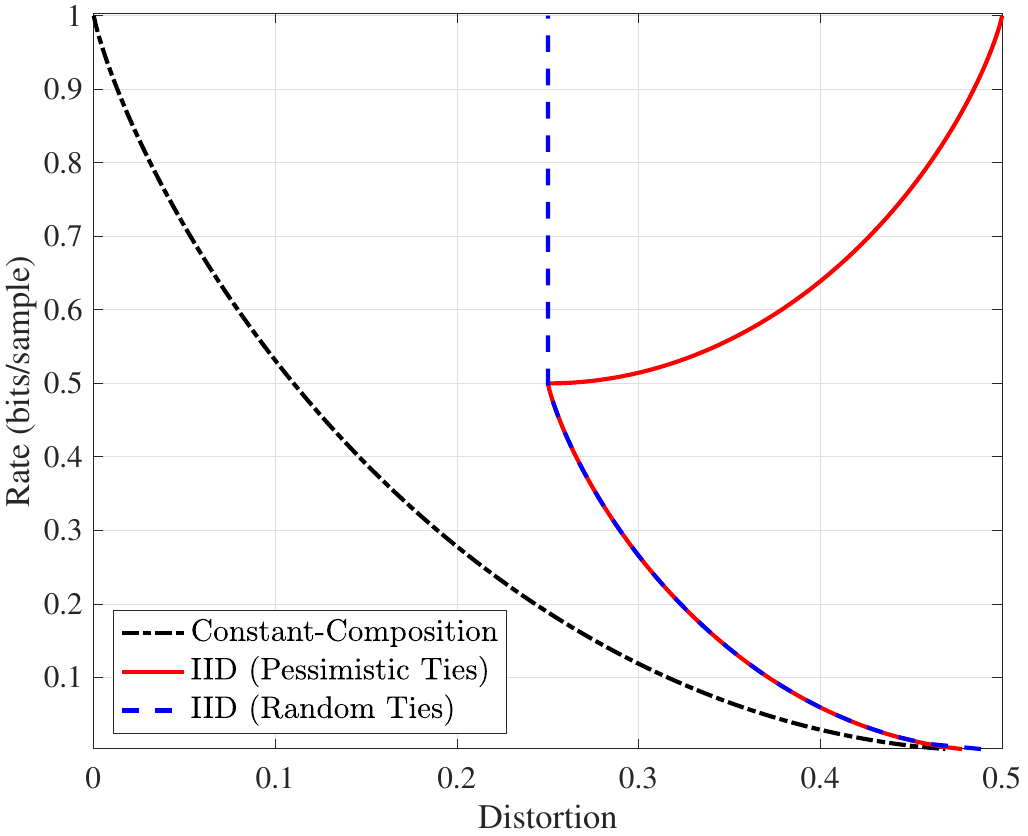}
    \par
    \caption{Mismatched rate-distortion curves for the binary symmetric source. \label{fig:rd_example3}}
    \vspace*{-3ex}
\end{figure}

\vspace*{-1ex}
\section{Continuity Argument in Achievability Analysis} \label{app:technical}
\vspace*{-0.5ex}

In this appendix, we describe the technical details of how Lemma \ref{lem:existing} follows from Lemma \ref{lem:occurence_of_types}, and discuss the analogous steps for the other ensembles that we consider.  We closely follow the steps of Lapidoth \cite{Lap97} with some minor modifications.  We assume that $R > 0$, since for $R = 0$ there is only one codeword and the analysis is trivial.

Recall the sets $\Sc_{n,\delta}^{\supseteq}$ and $\Sc_{n,\delta}^{\subseteq}$ defined in \eqref{eq:S_super}--\eqref{eq:S_sub} in Lemma \ref{lem:occurence_of_types}, where $P_{X,n}$ represents the type of $\Xv$.  By the law of large numbers, we can assume that $P_{X,n}$ converges to $\Pi_X$ as $n \to \infty$.  We proceed by implicitly conditioning on this being true, and on the high-probability events in Lemma \ref{lem:occurence_of_types} holding true.  We proceed in several steps.

{\bf Step 1.} Define
\begin{equation}
    \Sc_{\delta}^{\subseteq} = \left\{\Ptilde_{X\Xhat} \in \Pc(\Xc \times \hat{\Xc}) \,:\, \Ptilde_X = \Pi_X, \Ptilde_{\Xhat} = Q_{\Xhat}, I_{\Ptilde}(X, \Xhat) \leq R - \delta \right\}, \label{eq:S_sub2}
\end{equation}
which is analogous to $\Sc_{n,\delta}^{\subseteq}$ but with general distributions instead of types.  Moreover, let
\begin{equation}
    D^{\subseteq}_{0,\delta} = \min_{\Ptilde_{X\Xhat} \in \Sc_{\delta}^{\subseteq}} \EE_{\Ptilde}[d_0(X,\Xhat)]. \label{eq:D_sub}
\end{equation}
Defining $\Ptilde^*_{X\Xhat}$ to be the joint type induced by $\Xv$ and the selected codeword $\Xvhat$, we claim that 
\begin{equation}
    \EE_{\Ptilde^*}[d_0(X,\Xhat)] \le D^{\subseteq}_{0,\delta} + \delta,
\end{equation}
where here and subsequently, all statements are taken to mean for $n$ sufficiently large.  To see this, note from Lemma \ref{lem:occurence_of_types} that $\EE_{\Ptilde^*}[d_0(X,\Xhat)]$ is least as small as the smallest $d_0$-distortion incurred within $\Sc_{n,\delta}^{\subseteq}$, which in turn can approximate any joint distribution in $\Sc_{\delta}^{\subseteq}$ arbitrarily closely when $n$ is large enough (note also that $\EE_{\Ptilde}[d_0(X,\Xhat)]$ is continuous with respect to $\Ptilde_{X\Xhat}$ in the finite-alphabet setting, as is $I_{\Ptilde}(X;\Xhat)$).

{\bf Step 2.} 
Define
\begin{equation}
    \Sc^{\supseteq}_{\delta} = \Big\{ \Ptilde_{X\Xhat} \in \Pc(\Xc \times \Xchat) \,:\, \Ptilde_X \in B_{\delta}(\Pi_X), \Ptilde_{\Xhat} \in B_{\delta}(Q_{\Xhat}), I_{\Ptilde}(X;\Xhat) \le R + \delta \Big\}, \label{eq:S_super2}
\end{equation} 
where $B_{\delta}(P_Z)$ denotes the set of all distributions with the same support as $P_Z$ and probabilities that differ from $P_Z$ at most $\delta$ entry-wise.  Moreover, let
\begin{equation}
    D^{\supseteq}_{0,\delta} = \min_{\Ptilde_{X\Xhat} \in \Sc^{\supseteq}_{\delta}} \EE_{\Ptilde}[d_0(X,\Xhat)].
\end{equation}
The goal of this step is to show that both $D^{\subseteq}_{0,\delta}$ and $D^{\supseteq}_{0,\delta}$ approach the following value as $\delta \to 0$:
\begin{equation}
    D_0^* = \min_{ \substack{ \Ptilde_{X\Xhat} \,:\, \Ptilde_X = \Pi_X, \Ptilde_{\Xhat} = Q_{\Xhat} \\ I_{\Ptilde}(X;\Xhat) \le R  }} \EE_{\Ptilde}[d_0(X,\Xhat)]. \label{eq:D0_star}
\end{equation}
For $D^{\subseteq}_{0,\delta}$, this follows from the fact that $D_{0,\delta}^{\subseteq}$ is continuous in $\delta \in [0,R]$, which follows from  \eqref{eq:S_sub2}--\eqref{eq:D_sub} and the continuity of mutual information and $\EE_{\Ptilde}[d_0(X,\Xhat)]$.\footnote{In general, more care may be needed if $\Sc_{\delta}^{\subseteq}$ were to be empty for $\delta > 0$ but non-empty for $\delta = 0$.  However, we are focusing on the case that $R > 0$, so this does not occur (a small neighborhood of $\Pi_X \times Q_{\Xhat}$ is always included). \label{foot:caveat}}

For $D^{\supseteq}_{0,\delta}$, we form upper and lower bounds that asymptotically match.  For the upper bound, since $\cup_{\delta > 0} D^{\supseteq}_{0,\delta}$ precisely equals the constraint set in \eqref{eq:D0_star}, it immediately follows that $\limsup_{\delta \to 0} D^{\supseteq}_{0,\delta} \le D_0^*$.  For the other direction, we know that for any sequence $\{\delta_i\}_{i \ge 1}$ converging to zero, if $\Ptilde_{X\Xhat}^{(i)}$ is the corresponding sequence of minimizers achieving $\{D^{\supseteq}_{0,\delta_i}\}_{i \ge 1}$, then by the compactness of the probability simplex there must exist a convergent subsequence.  The convergence must be to a distribution satisfying the constraints in \eqref{eq:D0_star}, and it follows that $\liminf_{\delta \to 0} D^{\supseteq}_{0,\delta} \ge D_0^*$.  Combining the two limits gives $\lim_{\delta \to 0} D^{\supseteq}_{0,\delta} = D_0^*$.

{\bf Step 3.} From Step 1, Lemma \ref{lem:occurence_of_types}, and the fact that $P_{X,n} \to \Pi_X$ and $Q_{\Xhat,n} \to Q_{\Xhat}$, we know that the true induced joint type $\Ptilde^*_{X\Xhat}$ lies in the set
\begin{equation}
    \Pctilde^{\supseteq}_{\delta} = \Big\{ \Ptilde_{X\Xhat} \in \Pc(\Xc \times \Xchat) \,:\, \Ptilde_X \in B_{\delta}(\Pi_X), \Ptilde_{\Xhat} \in B_{\delta}(Q_{\Xhat}), I_{\Ptilde}(X;\Xhat) \le R + \delta, \EE_{\Ptilde}[d_0(X,\Xhat)] \le D^{\subseteq}_{0,\delta} + \delta \Big\}.
\end{equation}
Hence, the $d_1$-distortion incurred is at most $\max_{\Ptilde_{X\Xhat} \in \Pctilde^{\supseteq}_{\delta}} \EE_{\Ptilde}[ d_1(X,\Xhat) ]$.
Moreover, from Step 2, we know that $\cap_{\delta > 0} \Pctilde^{\supseteq}_{\delta} = \Pctilde$, where $\Pctilde$ is defined in \eqref{eq:setF}.  Accordingly, using compactness and continuity in the same way as Step 2, we obtain
\begin{equation}
    \lim_{\delta \to 0} \max_{\Ptilde_{X\Xhat} \in \Pctilde^{\supseteq}_{\delta}} \EE_{\Ptilde}[ d_1(X,\Xhat) ] =  \max_{\Ptilde_{X\Xhat} \in \Pctilde} \EE_{\Ptilde}[ d_1(X,\Xhat) ],
\end{equation}
which equals the desired value stated in Lemma \ref{lem:existing}.

{\bf Other ensembles.} For the i.i.d.~ensemble (Lemma \ref{lem:iid}), the constraints on $\Ptilde_{\Xhat}$ are absent, but there are otherwise no significant differences in the analogous steps to those above.  For the multi-user ensembles (Section \ref{sec:multi_user}), the only significant difference is that there are multiple mutual constraints instead of just one.  This does not significantly affect the analysis, except possibly in the argument directly after \eqref{eq:D0_star}.  In accordance with Footnote \ref{foot:caveat}, in principle we may need to be careful with $\Sc_{\delta}^{\subseteq}$ being empty whenever $\delta > 0$.  However, this will not be the case when all the rates defining the ensemble are positive, since a suitable neighborhood of the product distribution\footnote{The product distribution is $\Pi_X \times Q_{\Xhat}$ under independent codewords, $\Pi_X \times Q_{U\Xhat}$ for superposition coding, and $\Pi_X \times Q_{\Xhat_1} \times Q_{\Xhat_2}$ for expurgated parallel coding.  Under these choices, all of the relevant mutual information terms become zero.} (intersected with the marginal constraints) will always be included.  On the other hand, if any of the rates are zero (i.e., $R_0$ or $R_1$ in superposition coding, $R_1$ or $R_2$ for expurgated parallel coding), then the ensemble reduces to that of independent codewords anyway, and the distortion achieved reduces to that of Lemma \ref{lem:existing}.

For expurgated parallel coding, we also have the constraint $\Ptilde_{\Xhat_1\Xhat_2} \in B_{\delta}(Q_{\Xhat_1} \times Q_{\Xhat_2})$, which is slightly different from the equality constraints in the other ensembles.  This is easily handled in the same way to how we used $B_{\delta}(\cdot)$ in \eqref{eq:S_super2}, and in fact, Lapidoth's original analysis \cite{Lap97} was also based on an ensemble that leads to similar constraints (instead of equality constraints).

\vspace*{-1ex}
\section{Achievability Proofs for Multi-User Ensembles} \label{app:ach_proofs}
\vspace*{-0.5ex}

\subsection{Proof of Theorem \ref{thm:supcoding_achievability} (Superposition Coding)} \label{sec:PROOF_SC}

The main step of the analysis is to establish a counterpart to Lemma \ref{lem:occurence_of_types}, but now concerning the joint types $\Ptilde_{XU\Xhat}$ induced by triplets $(\xv,\Uv^{(i)},\Xvhat^{(i,j)})$, with $i=1,\dotsc,M_0$ and $j=1,\dotsc,M_1$.  Throughout the bulk of the analysis, we condition on a fixed $\Xv = \xv$, whose type we denote by $P_{X,n}$.

The relevant marginal constraints follow immediately by construction: Any joint type $\Ptilde_{XU\Xhat}$ that occurs must satisfy $\Ptilde_X = P_{X,n}$ and $\Ptilde_{U\Xhat} = Q_{U\Xhat,n}$ (see \eqref{eq:supcodingXhat}).  The non-trivial part is to establish that all such joint types with $I_{\Ptilde}(X;U) \leq R_0 - \delta$ and $I_{\Ptilde}(X; U, \Xhat) \leq R_0 + R_1 - \delta$ occur, and all such joint types with $I_{\Ptilde}(X;U) \geq R_0 +\delta$ or $I_{\Ptilde}(X; U, \Xhat) \geq R_0 + R_1 + \delta$ do not, where $\delta > 0$ is arbitrarily small.  

To do so, we fix a joint type $\Ptilde_{XU\Xhat}$ satisfying the above marginal constraints, and consider the probability
\begin{equation}
    P_{\rm existence} = \PP\bigg[ \bigcup_{i,j} \Big\{ (\xv,\Uv^{(i)},\Xvhat^{(i,j)}) \in \Tc^n(\Ptilde_{XU\Xhat}) \Big\} \bigg].
\end{equation}
To lighten notation, we define the following events:
\begin{gather}
    \Ec_i = \bigcup_{j}\big\{ (\xv,\Uv^{(i)},\Xvhat^{(i,j)})  \in \Tc^n(\Ptilde_{XU\Xhat}) \big\}, \\
    \Ec_{ij} = \{ (\xv,\Uv^{(i)},\Xvhat^{(i,j)})  \in \Tc^n(\Ptilde_{XU\Xhat}) \}.
\end{gather}
By separating the unions over $i$ and $j$, we obtain
\begin{equation}
    P_{\rm existence} = \PP\bigg[ \bigcup_i \Ec_i \bigg] = 1-(1-\PP[\Ec_1])^{M_0}, \label{eq:p_exist_SC}
\end{equation}
where we used the fact that the events $\Ec_1,\dotsc,\Ec_{M_0}$ are independent under the superposition codebook distribution  Moreover, we can characterize $\PP[\Ec_1]$ by writing
\begin{equation}
    \PP[\Ec_1] = \PP\big[ (\xv,\Uv) \in \Tc^n(\Ptilde_{XU}) \big] \PP\bigg[ \bigcup_{j} \Big\{  (\xv,\uv,\Xvhat^{(j)}) \in \Tc^n(\Ptilde_{XU\Xhat}) \Big\} \bigg],
\end{equation}
where in the first probability, $\Uv$ is a shorthand for $\Uv^{(i)}$, and in the second probability we implicitly condition on an arbitrary fixed realization $\Uv = \uv$, and write $\Xv^{(j)}$ as a shorthand for $\Xvhat^{(i,j)}$.  Due to the conditioning on $(\xv,\uv)$, the union over $j$ is now a union of independent events, meaning that the truncated union bound is tight to within a factor of $\frac{1}{2}$ \cite{Shu03}, i.e., 
\begin{equation}
    \PP[\Ec_1] = \PP\big[ (\xv,\Uv) \in \Tc^n(\Ptilde_{XU}) \big] \times \alpha \min\{1, M_1 \PP[ (\xv,\uv,\Xvhat \in \Tc^n(\Ptilde_{XU\Xhat}) ]\} \label{eq:Pr_E1}
\end{equation}
for some $\alpha \in \big[\frac{1}{2},1\big]$, where $\Xvhat = \Xvhat^{(1)}$.  By standard properties of types \cite[Ch.~2]{Csi11}, the probabilities in this expression behave as
\begin{gather}
    \PP\big[ (\xv,\Uv) \in \Tc^n(\Ptilde_{XU}) \big] = e^{-nI(U;X) + o(n)}, \label{eq:types1} \\
    \PP[ (\xv,\uv,\Xvhat \in \Tc^n(\Ptilde_{XU\Xhat}) ] = e^{-nI(X;\Xhat|U)  + o(n)}.  \label{eq:types2} 
\end{gather}

To prove the non-existence claim, we first apply the union bound to \eqref{eq:p_exist_SC} to obtain $P_{\rm existence} \le M_0 \PP[\Ec_1]$.  If $I_{\Ptilde}(X;U) \ge R_0 + \delta$, then by upper bounding the minimum in \eqref{eq:Pr_E1} by one and using \eqref{eq:types1}, we get that $P_{\rm existence} \to 0$ exponentially fast.  Similarly, if $I_{\Ptilde}(X; U, \Xhat) \geq R_0 + R_1 + \delta$, then we upper bound the minimum in \eqref{eq:Pr_E1} by the second term, and apply \eqref{eq:types1}--\eqref{eq:types2} and the chain rule for mutual information to get $\PP[\Ec_1] \le M_1 e^{-n I(X;U,\Xhat)  + o(n)}$, which again yields $P_{\rm existence} \to 0$ exponentially fast.  Due to the exponentially fast decay, we can safely take a union bound over all (polynomially many) joint types to establish the desired non-existence result. 

To prove the existence claim, we consider two cases depending on whether the minimum in \eqref{eq:Pr_E1} is achieved by the first or second term.  If the minimum equals one, then we combine \eqref{eq:types1} with \eqref{eq:p_exist_SC} to obtain
\begin{equation}
    P_{\rm existence} \ge 1 - (1 - \alpha e^{-nI(U;X)  + o(n)})^{M_0}.
\end{equation}
Since $M_0 = e^{nR_0}$ with $R_0 \ge I_{\Ptilde}(X;U) - \delta$, it follows that $P_{\rm existence} \to 1$ faster than exponentially.\footnote{Given $b > a > 0$, note that $(1-e^{-an})^{e^{bn}} = \big( (1-e^{-an})^{e^{an}} \big)^{e^{(b-a)n}}$, and then use the fact that $(1-e^{-an})^{e^{an}}$ approaches $\frac{1}{e} \in (0,1)$.}  By a similar argument, when the minimum in \eqref{eq:Pr_E1} equals the second term, we use $R_0 + R_1 \ge I_{\Ptilde}(X;U,\Xhat) - 2\delta$, and we combine \eqref{eq:types1}--\eqref{eq:types2} with \eqref{eq:p_exist_SC} to obtain $P_{\rm existence} \to 1$ faster than exponentially.  By a union bound over all of the relevant joint types, we deduce that they all must occur in the codebook with high probability, as desired.

Having established the existence and non-existence of types within ``inner'' and ``outer'' sets that nearly match (in analogy with Lemma \ref{lem:occurence_of_types}), Theorem \ref{thm:supcoding_achievability} now follows from a similar continuity argument to the case of independent codewords; see Appendix \ref{app:technical} for details.

\subsection{Proof of Theorem \ref{thm:parcoding_achievability} (Expurgated Parallel Coding)} \label{sec:PROOF_EX}

We make use of the following lower bound on the probability of a union.

\begin{lem}[de Caen's bound \cite{Dec97}] \label{lem:decaen}
    For any finite sequence of events $\Ac_1, \dots, \Ac_N$ on a probability space, we have
    \begin{align}
        \mathbb{P}\left[\bigcup_{l=1}^N \Ac_l\right] \geq \sum_{l=1}^{N}\frac{\mathbb{P}[\Ac_l]^2}{\sum_{l'=1}^N \mathbb{P}[\Ac_l \cap \Ac_{l'}]}.
    \end{align}
\end{lem}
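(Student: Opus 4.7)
The plan is to prove this classical inequality (due to de Caen, 1997) via a Cauchy--Schwarz argument applied to each event separately, followed by summation. Let $\Ac = \bigcup_{l=1}^N \Ac_l$ and define the random variable $K(\omega) = \sum_{l=1}^N \openone_{\Ac_l}(\omega)$, which counts how many of the events contain $\omega$. The key observation is that $K \ge 1$ on $\Ac$ and $K = 0$ off $\Ac$, and that on $\Ac$ we have the identity $\sum_{l=1}^N \openone_{\Ac_l}/K = 1$, so that $\sum_{l=1}^N \EE[\openone_{\Ac_l}/K] = \PP[\Ac]$ (with the convention $0/0 = 0$).

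Next, for each fixed $l$, I would write $\PP[\Ac_l] = \EE[\openone_{\Ac_l}]$ and note that on the event $\Ac_l$ we may harmlessly insert $\sqrt{K}/\sqrt{K}$, giving
\begin{align}
    \PP[\Ac_l] = \EE\!\left[ \frac{\openone_{\Ac_l}}{\sqrt{K}} \cdot \sqrt{K}\, \openone_{\Ac_l} \right].
\end{align}
Applying the Cauchy--Schwarz inequality then yields
\begin{align}
    \PP[\Ac_l]^2 \le \EE\!\left[\frac{\openone_{\Ac_l}}{K}\right] \cdot \EE\!\left[K \openone_{\Ac_l}\right].
\end{align}
The second factor is straightforward to identify in closed form: since $K \openone_{\Ac_l} = \sum_{l'=1}^N \openone_{\Ac_l \cap \Ac_{l'}}$, linearity of expectation gives $\EE[K \openone_{\Ac_l}] = \sum_{l'=1}^N \PP[\Ac_l \cap \Ac_{l'}]$, which is exactly the denominator in the statement of the lemma.

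Finally, I would rearrange the Cauchy--Schwarz bound as
\begin{align}
    \frac{\PP[\Ac_l]^2}{\sum_{l'=1}^N \PP[\Ac_l \cap \Ac_{l'}]} \le \EE\!\left[\frac{\openone_{\Ac_l}}{K}\right],
\end{align}
and sum over $l = 1, \dots, N$. By the identity observed at the outset, the right-hand side sums to $\EE[\openone_{\Ac}] = \PP[\Ac]$, completing the proof. The only subtlety is handling the indeterminate form when some $\Ac_l$ has $\PP[\Ac_l] = 0$ (in which case the corresponding summand is zero and can be dropped) or when $K = 0$ (which occurs only outside $\Ac$, where all relevant indicators vanish); these are routine and pose no genuine obstacle, so there is no difficult step.
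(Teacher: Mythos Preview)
Your proof is correct and is in fact the standard argument for de Caen's inequality. The paper does not supply its own proof of this lemma; it merely cites \cite{Dec97} and uses the bound as a tool in the expurgated parallel coding analysis, so there is nothing to compare against.
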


We condition on an arbitrary fixed realization of $\xv$, and consider joint types $\Ptilde_{X\Xhat_1 \Xhat_2}$ whose $X$-marginal matches the type of $\xv$, and whose other marginals coincide with $Q_{\Xhat_1}$ and  $Q_{\Xhat_2}$.
Given such a joint type $ \Ptilde_{X\Xhat_1, \Xhat_2} $, we are interested in the following existence probability: 
\begin{gather}
    P_{\text{existence}} = \PP\bigg[ \bigcup_{i=1}^{M_1}\bigcup_{j=1}^{M_2}\Ec_{ij}(\Ptilde_{X\Xhat_1, \Xhat_2}) \bigg], \\
    \Ec_{ij}(\Ptilde_{X\Xhat_1, \Xhat_2}) = \{(\xv, \Xvhat_1^{(i)}, \Xvhat_2^{(j)}) \in \Tc^n(\Ptilde_{X\Xhat_1\Xhat_2})\},
\end{gather}
where we again implicitly condition on $\Xv = \xv$.  The initial steps follow those of the analogous ensemble for channel coding \cite{Sca16a}. Considering the probability of the intersection $\mathbb{P}[\Ec_{ij}(\Ptilde_{X\Xhat_1\Xhat_2}) \cap \Ec_{i'j'}(\Ptilde_{X\Xhat_1\Xhat_2})]$, we have the following four cases:
\begin{enumerate}
    \item ($ i=i', j=j' $) In this case, the intersection is just the event itself, and we have
    \begin{align}
        \Psi_{00} \triangleq \mathbb{P}[(\xv, \Xvhat_1, \Xvhat_2) \in \Tc^n(\Ptilde_{X\Xhat_1\Xhat_2z}) ], \label{eq:psi00}
    \end{align}
    where $(\Xvhat_1, \Xvhat_2)$ denotes an arbitrary fixed $(\Xvhat_1^{(i)}, \Xvhat_2^{(j)})$ pair. 
    \item ($ i\neq i', j=j' $) In this case, the probability of the intersection is given by
    \begin{align}
        \Psi_{01} \triangleq \mathbb{P}[(\xv, \Xvhat_2) \in \Tc^n(\Ptilde_{X\Xhat_2})]\cdot\mathbb{P}[(\xv, \Xvhat_1, \xvhat_2) \in \Tc^n(\Ptilde_{X\Xhat_1\Xhat_2}) ]^2, \label{eq:psi01} 
    \end{align}
    where $\xvhat_2$ is an arbitrary fixed sequence such that $(\xv, \xvhat_2) \in \Tc^n(\Ptilde_{X\Xhat_2})$.
    \item ($ i = i', j \neq j' $) In this case, the probability of the intersection is given by
    \begin{align}
        \Psi_{10} \triangleq \mathbb{P}[(\xv, \Xvhat_1) \in \Tc^n(\Ptilde_{X\Xhat_1})]\cdot\mathbb{P}[(\xv, \xvhat_1, \Xvhat_2) \in \Tc^n(\Ptilde_{X\Xhat_1\Xhat_2}) ]^2, \label{eq:psi10}
    \end{align}
    where $\xvhat_1$ is an arbitrary fixed sequence such that $(\xv, \xvhat_1) \in \Tc^n(\Ptilde_{X\Xhat_1})$.
    \item ($ i \neq i', j \neq j' $) In this case, the probability of the intersection is given by $ \Psi_{11} = \Psi_{00}^2 $, due to independence.
\end{enumerate}
From Lemma \ref{lem:decaen}, identifying $l$ with $(i, j)$, $N$ with $M_1M_2$, and $\Ec_{ij}(\Ptilde_{X\Xhat_1\Xhat_2})$ with $\Ac_l$, we have
\begin{align} 
    P_{\text{existence}} &\geq \sum_{i=1}^{M_1}\sum_{j=1}^{M_2} \frac{\Psi_{00}^2}{\Psi_{00} + \sum_{i=1}^{M_1}\Psi_{10} + \sum_{j=1}^{M_2}\Psi_{01} + \sum_{i=1}^{M_1}\sum_{j=1}^{M_2}\Psi_{11}}\\
    &= \frac{M_1M_2\Psi_{00}^2}{\Psi_{00} + M_1\Psi_{10} + M_2\Psi_{01} + M_1M_2\Psi_{00}^2}\\
    &= \frac{1}{\frac{1}{M_1M_2\Psi_{00}} + \frac{\Psi_{10}}{M_2\Psi_{00}^2} + \frac{\Psi_{01}}{M_1\Psi_{00}^2} + 1} \label{eq:refUnionBound_1}
\end{align}
We now observe from \eqref{eq:psi00}--\eqref{eq:psi10} that
\begin{align}
    \Psi_{00}^2 &= \mathbb{P}[(\xv, \Xvhat_2) \in \Tc^n(\Ptilde_{X\Xhat_2})]\cdot \Psi_{01}\\
    &= \mathbb{P}[(\xv, \Xvhat_1) \in \Tc^n(\Ptilde_{X\Xhat_1})]\cdot\Psi_{10},
\end{align}
which implies that \eqref{eq:refUnionBound_1} can be rewritten as
\begin{align}
    P_{\text{existence}} &\geq \frac{1}{\frac{1}{M_1M_2\mathbb{P}[(\xv, \Xvhat_1, \Xvhat_2) \in \Tc^n(\Ptilde_{X\Xhat_1\Xhat_2})]} + \frac{1}{M_2\mathbb{P}[(\xv, \Xvhat_2) \in \Tc^n(\Ptilde_{X\Xhat_2})]} + \frac{1}{M_1\mathbb{P}[(\xv, \Xvhat_1) \in \Tc^n(\Ptilde_{X\Xhat_1})]} + 1}. \label{eq:exist_lb}
\end{align}

Fix an arbitrarily small constant $\delta > 0$, and suppose that $ I_{\Ptilde}(X;X_1) \leq R_1-\delta$, $ I_{\Ptilde}(X;X_2) \leq R_2-\delta$ and $ I_{\Ptilde}(X;X_1, X_2) \leq R_1 + R_2 -\delta$.  For independently chosen $\Xvhat_1$ and $\Xvhat_2$, the probability of inducing the joint type $\Ptilde_{X\Xhat_{\nu}}$ behaves as $e^{-nI_{\Ptilde}(X; X_{\nu})  + o(n)}$ and similarly, the probability of inducing $\Ptilde_{X\Xhat_1\Xhat_2}$ behaves as $e^{-nI_{\Ptilde}(X; \Xhat_1, \Xhat_2)  + o(n)}$ \cite{Sca16a}.  As a result, we deduce that $P_{\text{existence}}$ tends to one exponentially fast as $n$ grows large.  By a union bound over all of the relevant joint types, we deduce from \eqref{eq:exist_lb} that they all must occur in the codebook (before expurgation) with high probability, as desired.

For the non-existence part, the argument is simpler, only requiring standard upper union bounds.  Specifically, we can apply a union bound over all $(i,j)$ to get an upper bound of $M_1M_2\mathbb{P}[(\xv, \Xvhat_1, \Xvhat_2) \in \Tc^n(\Ptilde_{X\Xhat_1\Xhat_2})]$.  We can also upper bound the existence probability for triplets $(\xv, \Xvhat_1, \mywidehat{\Xv}_2)$ by the existence probability for pairs, i.e., $(\xv, \mywidehat{\Xv}_1)$ or $(\xv, \mywidehat{\Xv}_2)$.  Thus, $M_1\mathbb{P}[(\xv, \mywidehat{\Xv}_1) \in \Tc^n(\Ptilde_{X\Xhat_1})]$ and $M_2\mathbb{P}[(\xv, \mywidehat{\Xv}_2) \in \Tc^n(\Ptilde_{X\Xhat_2})]$ are also upper bounds on $P_{\rm existence}$, and the desired non-existence claim readily follows similarly to the superposition coding ensemble.

We also need to consider the effect of the expurgation step in the ensemble.  Trivially, after expurgation, for any joint type having an $(\Xhat_1,\Xhat_2)$ marginal whose $\ell_{\infty}$-norm to $Q_{\Xhat_1} \times Q_{\Xhat_2}$ exceeds $\delta$, the joint type will not appear after expurgation.  In contrast, if this $\ell_{\infty}$-norm is $\delta$ or smaller and the joint type occurred before expurgation, then it will also occur after expurgation.  Hence, expurgation simply imposes the additional constraint $|\Ptilde_{\Xhat_1\Xhat_2}(\xhat_1,\xhat_2) -  Q_{\Xhat_1}(\xhat_1)Q_{\Xhat_2}(\xhat_2)| \le \delta$ for all $(\xhat_1,\xhat_2)$.  

Finally, we note that with probability approaching one, expurgation does not impact the asymptotic rate.  This is because by the law of large numbers, the probability of any specific $(\Xvhat_1^{(i)},\Xvhat_2^{(j)})$ being expurgated tends to zero.  Hence, by Markov's inequality, the probability of more than half the pairs being expurgated approaches zero.

Having established the existence and non-existence of types with near-matching ``inner'' and ``outer'' sets (in analogy with Lemma \ref{lem:occurence_of_types}), Theorem \ref{thm:supcoding_achievability} now follows from a similar continuity argument to the case of independent codewords; see Appendix \ref{app:technical} for details.

\section*{Acknowledgment}

We are very grateful to Ioannis Kontoyiannis for bringing our attention to the paper \cite{Kon06} and providing us with helpful comments.  This work was supported by an NUS Early Career Research Award.

\bibliographystyle{IEEEtran}
\bibliography{refsMerged}

\end{document}